\documentclass[sigconf]{acmart}

\pdfoutput=1
\usepackage{booktabs} 
\usepackage{mathrsfs}
\usepackage{mathtools}
\usepackage{array}
\usepackage{amsmath}
\usepackage{amssymb}
\usepackage{color}
\usepackage{graphicx}
\usepackage{refcount}
\usepackage[linesnumbered,ruled,vlined]{algorithm2e}
\makeatletter
\@ifundefined{showcaptionsetup}{}{%
	\PassOptionsToPackage{caption=false}{subfig}}
\usepackage{subfig}
\makeatother


\newcommand{\ie}{\emph{i.e.}}
\newcommand{\eg}{\emph{e.g.}}

\DeclareMathOperator*{\argmin}{arg\,min}
\DeclareMathOperator*{\argmax}{arg\,max}

\newtheorem{assumption}{assumption}
\newtheorem{claim}{Claim}[section]
\newtheorem{property}{Property}

%
%
%
%
%

\copyrightyear{2017}
\acmYear{2017}
\setcopyright{rightsretained}
\acmConference{SIGMETRICS '17}{June 5--9, 2017}{Urbana-Champaign, IL, USA}
\acmPrice{}
\acmDOI{http://dx.doi.org/10.1145/3078505.3078529}
\acmISBN{978-1-4503-5032-7/17/06}

\begin{document}
\title{Optimal Posted Prices for Online Cloud Resource Allocation}


\author{Zijun Zhang}
\affiliation{
	\institution{Dept. of Computer Science\\ University of Calgary}
}
\email{zijun.zhang@ucalgary.ca}

\author{Zongpeng Li}
\affiliation{
	\institution{Dept. of Computer Science\\ University of Calgary}
}
\email{zongpeng@ucalgary.ca}

\author{Chuan Wu}
\affiliation{
	\institution{Dept. of Computer Science\\ The University of Hongkong}
}
\email{cwu@cs.hku.hk}

\begin{abstract}
We study online resource allocation in a cloud computing platform, through a {\em posted pricing} mechanism: The cloud provider publishes a unit price for each resource type, which may vary over time; upon arrival at the cloud system, a cloud user either takes the current prices, renting resources to execute its job, or refuses the prices without running its job there. We design pricing functions based on the current resource utilization ratios, in a wide array of demand-supply relationships and resource occupation durations, and prove worst-case competitive ratios of the pricing functions in terms of social welfare. In the basic case of a single-type, non-recycled resource ({\em i.e.}, allocated resources are not later released for reuse), we prove that our pricing function design is {\em optimal}, in that any other pricing function can only lead to a worse competitive ratio. Insights obtained from the basic cases are then used to generalize the pricing functions to more realistic cloud systems with multiple types of resources, where a job occupies allocated resources for a number of time slots till completion, upon which time the resources are returned back to the cloud resource pool.
\end{abstract}

%
%
%


\keywords{Cloud Computing; Posted Pricing; Resource Allocation; Online Algorithms; Competitive Analysis}

\maketitle

\section{Introduction}
Over the past decade, cloud computing has proliferated as the new computing paradigm that provides flexible, on-demand computing services in a pay-as-you-go fashion. Various applications and systems today are built upon cloud computing models, including  
 big data analytics, cloud radio access networks (C-RAN), network function virtualization (NFV), to name a few. Despite the common illusion that a cloud consists of an unlimited `sea' of resources, real-world clouds are constrained by finite system capacity bounds \cite{manvi2014resource,toosi2014interconnected} ({\em e.g.}, physical capacity of a cloud data center), which may become tight in periods of peak demands \cite{buyya2009cloud,garg2013framework}. A fundamental problem in cloud computing is cloud resource allocation, {\em i.e.}, to determine which user demands to satisfy at each time point. A common goal is to maximize the social welfare of the cloud eco-system, which represents the aggregated `happiness' of the cloud provider and the cloud users \cite{an2010automated}. 

Cloud resource allocation in practice exhibits a nature of {\em online decision making}: cloud users with job requests arrive at the cloud system at arbitrary time points, and the cloud provider decides resource allocation upon each job request. 
 A natural, {\em de facto} standard of cloud resource allocation, is through a {\em posted pricing mechanism}: the cloud provider publishes resource prices; cloud users act as price takers who will decide to utilize the resources if the prices are acceptable ({\em i.e.}, its valuation of the job exceeds the cost of resource renting), and will otherwise give up the cloud service.

\textcolor{black}{Major cloud providers today, such as Amazon Web Services, Microsoft Azure, and Google Cloud, typically adopt fixed prices, \ie, resource usage is charged at fixed unit prices posted on their websites. 
However, a dynamic pricing strategy based on realtime demand-supply is more efficient in many scenarios \cite{al2013cloud}, to fully exploit the resource capacity of a cloud system, and to better satisfy user demands. For practical cloud computing systems that employ dynamic pricing strategies, \eg, Amazon EC2 Spot Instances \cite{amazon_ec2si_pricing}, the short-term prices may not be driven by realtime demand-supply \cite{agmon2013deconstructing}; however, the price differences across different service regions and over different time periods are still relevant to demand and supply. Inspired by the Spot Instances model, various dynamic pricing strategies have been proposed in recent literature, including auction mechanisms 
\cite{lin2010dynamic,wang2013revenue,
zaman2013combinatorial,shi2014online, 
zhang2015online,zhou2016efficient,gu2016efficient}, 
and other dynamic pricing strategies for revenue maximization and efficient cloud resource utilization \cite{xu2013dynamic,li2011pricing,mihailescu2010dynamic}.} 

This work studies effective pricing functions for a cloud provider to employ, for computing unit resource prices at each time point. The computed prices are posted as `take it or leave it' prices for cloud users to decide whether to rent the cloud resources (user values not revealed to the cloud provider). Such prices can also serve in a posted-price auction mechanism for cloud job admission and charging. With meticulously designed online prices, our goal is to maximize the social welfare of the cloud, which equals the overall valuation of executed user jobs, minus a possible operational cost, over the entire system span.

\textcolor{black}{While maximizing social welfare does not lead directly to maximizing provider revenue (a natural goal for a cloud provider to pursue), the former is also a very meaningful goal \cite{ma2010resource,nejad2015truthful}. Social welfare represents the aggregate gain of the cloud provider and cloud users, indicating overall system efficiency. Compared to maximizing provider revenue, maximizing social welfare ensures good user experience, which is critical for sustainability of the system in the long run: long-term competitiveness in the market relies on customer happiness, which is instrumental to long-term revenue sustainability of the provider \cite{zhang2013dynamic}. In addition, for public clouds operated by nonprofit organizations, and private clouds for serving internal jobs, maximizing social welfare is more relevant than maximizing revenue \cite{menache2011socially}. In these cases, the pricing schemes studied in this paper can be used as mechanisms for allocating cloud resources to users based on their urgency and priorities. Furthermore, in the auction design literature, there exist techniques that can relate social welfare maximizing mechanisms with revenue maximizing mechanisms \cite{cai2013reducing}.}


Our study of the pricing functions has been partly inspired by dual price design in competitive online algorithms based on the classic primal-dual framework \cite{buchbinder2005online,buchbinder2009design}. In primal-dual online algorithm design, a key idea is to update dual prices using exponential functions for making primal resource allocation decisions, leading to provable competitive ratios. Nonetheless, no explicit justifications were provided in the literature on the choice of using exponential dual price functions.

\textcolor{black}{In this work, we borrow the exponential form of the price function from the literature on primal-dual online algorithms, and propose the optimal form of the exponential pricing functions for a fundamental cloud resource allocation problem. We then provide an intuitive explanation of the optimality of the exponential pricing function. In addition, for the first time in the literature, we generalize the pricing function to scenarios with bounded total demand, where the optimal form is no longer necessarily an exponential function. Interestingly, this result also contributes to the literature on knapsack problems, in that our problem is closely related to a variant of the online knapsack problem \cite{chakrabarty2008online}, where the total weight of items is upper bounded.}

We start by investigating the basic case of a single type of cloud resource without resource recycling, 
and design resource pricing functions based on the current resource utilization levels that capture realtime demand-supply of cloud resources. 
 We prove the optimality of our pricing function design. We then investigate the cases of multiple resource types, and limited resource occupation durations. Our detailed contributions are summarized below.


First, we justify the use of exponential pricing functions in the literature of both cloud computing \cite{shi2014rsmoa,zhang2015online,zhou2016efficient,gu2016efficient,shi2016online} and online algorithms \cite{buchbinder2005online,buchbinder2009design}, both from a theoretical point of view and with intuitive interpretation. We prove the optimality of the pricing function under mild system assumptions that are standard in recent literature. 

Second, we derive the optimal pricing functions for more realistic cloud resource allocation scenarios, where the potential total demand for resources is bounded. 

Third, we extend the pricing functions to take into account multiple resource types. We propose a joint pricing and scheduling strategy when the cloud system runs over multiple time slots. 
We prove tight competitive ratios for these scenarios, which were not properly proven in previous literature. We make no assumptions on the arrival process and the distribution of user valuations.

In addition, we further verify effectiveness of our price design in realistic cloud computing scenarios using simulation studies, relaxing assumptions made in the theoretical analysis. 
We show that the parameters involved in our pricing functions can be practically optimized in different scenarios, to achieve consistently good performance ratios, as compared to the offline optimal social welfare. 
 
Finally, we note that our pricing models and algorithms are generally applicable to posted pricing mechanism design in other online resource allocation systems, which share similar characteristics as a cloud computing system.

In the rest of the paper, we review related literature in Sec.~\ref{sec:related}. The basic and general models of cloud resource pricing are studied in Sec.~\ref{sec:pricing-basic} and Sec.~\ref{sec:pricing-general}, respectively.  Sec.~\ref{sec:sim} presents simulation studies, and Sec.~\ref{sec:conclusion} concludes the paper. 

\section{Related Work}
\label{sec:related}
Recently, auction mechanisms have been extensively studied for online cloud resource allocation and pricing. \citeauthor{zhang2015online} \cite{zhang2015online} design an online auction mechanism for IaaS clouds, aiming to maximize both social welfare and provider profit. \citeauthor{zhou2016efficient} \cite{zhou2016efficient} extend the auction mechanism to deal with computing jobs with soft deadlines. \citeauthor{shi2016online} \cite{shi2016online} propose an online mechanism for virtual cluster allocation and pricing. 
 These studies exploit the primal-dual framework for online mechanism design, and use exponential pricing functions to compute dual prices, which decide resource allocation and user payments. Competitive ratios of the online mechanisms are proven, but the rational of adopting exponential pricing functions is lacking, and the optimality of such exponential functions are not studied. Indeed, a wide spectrum of increasing functions are conceivable for cloud resource pricing. Our pricing functions are applicable to both posted pricing mechanisms and online auctions. The analysis of optimality of our pricing functions is independent from the primal-dual framework.  

\textcolor{black}{Apart from auction mechanisms, a wide range of resource pricing schemes have been studied in the literature. While static pricing schemes are prevalent in today's cloud computing market, dynamic pricing schemes based on realtime demand-supply are shown to be more efficient in many scenarios \cite{al2013cloud}. \citeauthor{li2011pricing} \cite{li2011pricing} design a pricing algorithm for cloud resources, which analyses the historical utilization ratio of the resource, and updates current prices accordingly. Their experiment demonstrates the advantage of the pricing algorithm in terms of cost reduction and efficient resource allocation. \citeauthor{mihailescu2010dynamic} \cite{mihailescu2010dynamic} propose a dynamic pricing scheme for federated clouds, where different cloud providers share and trade resources for enhanced scalability and reliability. They show that user welfare and the percentage of successful requests are increased by dynamic pricing, as compared to fixed pricing. The pricing schemes developed in this work are both dynamic and usage-based, \ie, the unit price of cloud resource is driven by demand-supply dynamics, and the total price is proportional to the amount and service time of requested resources.}

The online social welfare maximization problem studied in this work related to a variant of the online knapsack problem \cite{chakrabarty2008online}. Two assumptions are made in this literature: the weight of each item is much smaller than the capacity of the knapsack, and the density (value to weight ratio) of every item falls in a known range $\left[L,U\right]$. Under these assumptions, \citeauthor{buchbinder2005online} \cite{buchbinder2005online,buchbinder2006improved} design an algorithm achieving a competitive ratio of $O\left(\log\left(U/L\right)\right)$, as well as an $\Omega\left(\log\left(U/L\right)\right)$ lower bound on the competitive ratio of any algorithm. In the context of advertising auctions, \citeauthor{zhou2008budget} \cite{zhou2008budget} design a $\left(\log\left(U/L\right)+1\right)$-competitive algorithm for an online knapsack problem under the above assumptions. Interestingly, their algorithm is equivalent to our proposed pricing strategy for the most basic case, as will be discussed in Sec.~\ref{sec:arbitDemand}. Nevertheless, our proof of optimality is different from that given by \citeauthor{zhou2008budget} \cite{zhou2008budget}, and leads to an intuitive interpretation on the choice of exponential pricing functions. \textcolor{black}{More importantly, the total weight of items is assumed to be unbounded in the previous work, which is hardly the case for any real-world applications. In this work, we develop a more general pricing strategy that achieves better competitive ratios for bounded total weight, and we prove the optimality of the proposed strategy.}

\section{Pricing for Cloud Resource Allocation: the Basic Case}
\label{sec:pricing-basic}

\begin{table}
	\caption{Notation and definition}
	\begin{tabular}{|c|m{180pt}|}
		\hline
		$\mathcal{U}$ & set of users\\
		\hline 
		$\mathcal{R}$ & set of resource types\\
		\hline 
		$\mathcal{T}$ & set of all time slots\\
		\hline 
		$\mathcal{T}_{i}$ & set of time slots required by user $i$\\
		\hline
		$d_{i,r}$ & amount of resource $r$ demanded by user $i$\\
		\hline
		$d_{i}$ & total amount of resource demanded by user $i$\\
		\hline
		$v_{i}$ & value of successfully finishing user $i$'s job\\
		\hline
		$p$ & unit resource price at the time of user arrival\\
		\hline
		$\underline{p}/\overline{p}$ & lower/upper bound of $v_{i}/d_{i}$\\
		\hline
		$\gamma$ & ratio between $\overline{p}$ and $\underline{p}$\\
		\hline
		$\rho$ & resource utilization level\\
		\hline
		$\rho_r\mbox{*}$ & final resource utilization level as defined by Definition \ref{def:finRho}\\
		\hline
		$\beta$ & scarcity level as defined by Definition \ref{def:limitedBeta}\\
		\hline
		$V_{ol}\left(\rho\mbox{*}\right)$ & total value obtained by an online solution, given a final utilization level $\rho\mbox{*}$\\
		\hline
		$V_{opt}\left(\rho\mbox{*}\right)$ & total value obtained by an optimal offline solution, given a final utilization level $\rho\mbox{*}$\\
		\hline 
	\end{tabular}
\end{table}

In this section, we start by designing pricing functions for a basic, yet fundamental version of the online resource allocation problem, following the posted pricing framework as described in Algorithm \ref{alg:postedPrice}. 

\begin{algorithm}
	\KwIn{$d_{i},v_{i},\forall i\in\mathcal{U}$}
	\KwOut{$x_{i},\forall i\in\mathcal{U}$}
	$\rho=0$ \tcp*{Initialize the resource utilization}
	\For{$i\in \mathcal{U}$}{
		\tcc{Upon the arrival of each user $i$}
		\uIf{$v_{i}\geq d_{i}P\left(\rho\right)$ \textbf{and} $\rho+d_{i}\leq 1$}{
			\tcc{User $i$ accepts the posted price}
			$x_{i}=1$\;
			$\rho=\rho+d_{i}$ \tcp*{Allocate resource to user $i$}
		}\Else{
			\tcc{User $i$ rejects the posted price}
			$x_{i}=0$\;
		}
	}
	\caption{Online pricing and resource allocation \label{alg:postedPrice}}
\end{algorithm}

\subsection{The Basic Resource Allocation Problem}
Consider a cloud provider whose data center is for now assumed to provision a single type of resource. The resource is to be allocated to a large number of cloud users. The users in a set $\mathcal{U}$ come in an arbitrary sequence. Upon arrival, a user decides immediately whether to rent some of the cloud resources, by comparing the valuation of its job with the overall price of required resources for executing the job. Let $d_{i}$ denote the amount of resource demanded by a user $i\in\mathcal{U}$, and $v_{i}$ be the value of successfully finishing $i$'s job. \textcolor{black}{A user may decide $v_{i}$ according to different factors, such as the purpose and priority of the job, and how the user can gain from the job completion}. Without loss of generality, we normalize user resource demands, assuming the total amount of resource in the cloud is $1$, so that $d_{i}$ can be considered as the proportion of the entire resource pool demanded by user $i$. Let $p$ be the unit price of the resource posted by the cloud provider, which may vary over time. A user $i$ accepts the price and rents resource at quantity $d_{i}$, if and only if $v_{i}\geq d_{i}p$, where $p$ is the current unit resource price at the time of user arrival. \textcolor{black}{To put it another way, $v_{i}$ can be simply seen as a threshold for whether a price is acceptable to user $i$.} 
 In this section, we assume that each unit of the resource, once allocated, will not be returned to the resource pool. 
 
The utility of the cloud provider is the total payment received. The utility of a served user is the valuation of its job minus its payment. The utility of an unserved user is zero. Since payments cancel themselves in the summation, the social welfare of the entire cloud system, including utilities of both the cloud provider and the cloud users, is equivalent to the total valuation of jobs that are served, assuming no operational cost of the cloud. 

Let $x_{i}$ indicate whether user $i$ rents resource (at quantity $d_i$) or not upon its arrival. The social welfare maximization problem can be formulated as an integer linear program (ILP):

	\begin{equation}\label{eq:prob1}
		\textrm{maximize}\quad \sum_{i\in\mathcal{U}}v_{i}x_{i}
	\end{equation}
	\quad\text{s.t.:}
	\begin{align*}
		\sum_{i\in\mathcal{U}}d_{i}x_{i}\leq 1 & \quad\quad\quad\quad\quad\quad\quad (\ref{eq:prob1}a) \\
		x_{i}\in\left\{0,1\right\},&\forall i\in\mathcal{U} \quad\quad\quad\quad (\ref{eq:prob1}b)
	\end{align*}

\noindent This is a 0-1 knapsack problem, and can be solved exactly using dynamic programming in the offline setting. However, for the online problem we are investigating, the columns of the coefficient matrix of constraint (\ref{eq:prob1}a), corresponding to different online-arriving users, are revealed one-by-one, while the value of $x_{i}$ is to be determined immediately when a user comes to the cloud. We apply an online resource allocation algorithm, as shown in Algorithm \ref{alg:postedPrice}, 
to decide resource allocation given resource prices.

The performance of the posted pricing mechanisms in the online resource allocation algorithm clearly depends on the pricing function. Practically, we do not assume that users reveal their job valuations to the cloud provider. 
Consequently, the pricing strategy depends only on the demand-supply relationship of cloud resources. We will use the standard notion of {\em competitive ratio} to evaluate the quality of our online resource allocation solution, which is defined as the ratio between the optimal objective value of the offline problem \eqref{eq:prob1} and that of the online solution. The smaller (closer to 1) the competitive ratio is, the better the online resource allocation solution. More specifically, we will focus on the {\em worst-case} competitive ratio (as opposed to average-case competitive ratio). 
We first make the following two mild assumptions:

\begin{assumption}\label{asmp:varCstr}
	The variability of users' valuations is constrained, \ie, $\underline{p}\leq v_{i}/d_{i}\leq\overline{p},\forall i\in\mathcal{U}$, where $\underline{p}$ and $\overline{p}$ are lower bound and upper bound of the per-unit-resource job valuation of all users, respectively.
\end{assumption}
\begin{assumption}\label{asmp:smallDemand}
	The resource demand of each user is much smaller than the total resource capacity, \ie, $d_{i}\ll 1,\forall i\in\mathcal{U}$. 
\end{assumption}

\textcolor{black}{Assumption~\ref{asmp:smallDemand} is reasonable when considering large-scale data centers, where the total resource capacity refers to that of the entire data center. We make this assumption mainly to facilitate our theoretical analysis, 
 such that techniques from calculus (differentiation) can be used, and very extreme cases can be eliminated that are rare in practice. For example, if a high-valued bid demanding almost all the resource from a cloud provider is rejected, because a small fraction of the resource is occupied by other users, then the worst-case competitive ratio can be infinitely large. In addition, such an assumption is standard in the literature of online resource allocation \cite{zhang2015online, zhou2016efficient} and online knapsack problems \cite{buchbinder2005online,buchbinder2006improved,chakrabarty2008online,zhou2008budget}. }

\textcolor{black}{Nonetheless, it is also possible to relax Assumption~\ref{asmp:smallDemand} to specifying an upper bound on $d_{i}$ instead, without significantly affecting our theoretical result. Specifically, we can use difference equation and summation, instead of differential equation and integration, to derive similar results. In addition, we will relax this assumption completely in our empirical studies.}

\subsection{Pricing Function Design}
\textcolor{black}{We design pricing functions that adjust resource prices based on realtime demand-supply. To this end, it is helpful to have some prior knowledge about the total resource demand. In practice, unlimited total resource demand is rare; an estimated upper bound on the overall resource demand can often be obtained. This is reflected through the following definition.}
\begin{definition}\label{def:limitedBeta}
	Suppose the total resource demand of all users is upper bounded by $1+\beta$ times the total resource supply, {\em i.e.}, $\sum_{i\in\mathcal{U}}d_{i}\le 1+\beta$, with $\beta>-1$. We refer to $\beta$ as the {\em scarcity level} of the resource.
\end{definition}

\textcolor{black}{It is possible to have a known lower bound on the overall resource demand as well, but our algorithm design and analysis do not rely on such a lower bound.}

\textcolor{black}{We next present the optimal pricing function for $\beta\rightarrow \infty$, and then derive the optimal pricing functions for finite $\beta$, based on the insight we gain from the analysis of the first case. We then further show that the results can be extended to the case that linear operational costs of cloud resources are considered in Sec.~\ref{sec:linOpCost}.}

\subsubsection{Pricing Function for Large Total Demand}\label{sec:arbitDemand}
We begin with the case that the total demand for resource is much larger than the capacity of the cloud resource pool. We propose an optimal pricing function for the case that $\beta\rightarrow \infty$, and then show the same pricing function is in fact optimal as long as $\beta\ge 1$ ({\em i.e.}, the upper bound on the overall resource demand is at least twice of the resource capacity). 
\textcolor{black}{
\begin{definition}\label{def:optFcn}
In Algorithm \ref{alg:postedPrice}, oblivious of true valuations of users, a pricing function is {\em optimal} if it achieves the smallest possible worst-case competitive ratio in social welfare under Assumptions~\ref{asmp:varCstr} and \ref{asmp:smallDemand}.
\end{definition}
}

Let $\rho$ be the resource utilization level, {\em i.e.}, the amount of the resource already allocated. \textcolor{black}{Note that $\rho$ is a function of time, but this dependency is omitted for notational simplicity.} The unit price of the resource at the respective resource utilization level is denoted by 
 $P_{1}\left(\rho\right)$, designed as follows: 
\begin{equation}
	P_{1}\left(\rho\right)=\begin{cases}
		\underline{p}, & \rho\in\left[0,1/\left(\log\gamma+1\right)\right]\\
		\underline{p}e^{\left(\log\gamma+1\right)\rho-1}, & \rho\in\left(1/\left(\log\gamma+1\right),1\right)\\
		+\infty, & \rho=1
	\end{cases},\label{eq:priFcn1}
\end{equation}

\noindent where $\gamma=\overline{p}/\underline{p}$. An illustration of the pricing function for $\underline{p}=1$, $\overline{p}=10$ is given in Fig.~\ref{fig:priFcn1} (blue lines in both subfigures). Intuitively, when $\rho$ is quite small, it is desirable to keep the price at the lowest level ($\underline{p}$), to allow all potential users to rent the resource. 
 As $\rho$ increases, the amount of satisfied demand increases, as well as the obtained social welfare, and hence it is reasonable to raise the price to filter out users with low valuations. When $\rho=1$, the resource is exhausted, so we use an infinitely high price to reject all subsequent users.
Note that even if we need the lower bound and upper bound of the per-unit-resource valuation in (\ref{eq:priFcn1}), when applying this pricing function in online resource allocation, we can use estimates of the bounds, which can be further calibrated over time when more users have arrived and more user price taking decisions are learned.

\begin{figure}[!t]
	\begin{centering}
		\subfloat[The pricing function and competitive ratio.]{
			\includegraphics[width=0.45\textwidth]{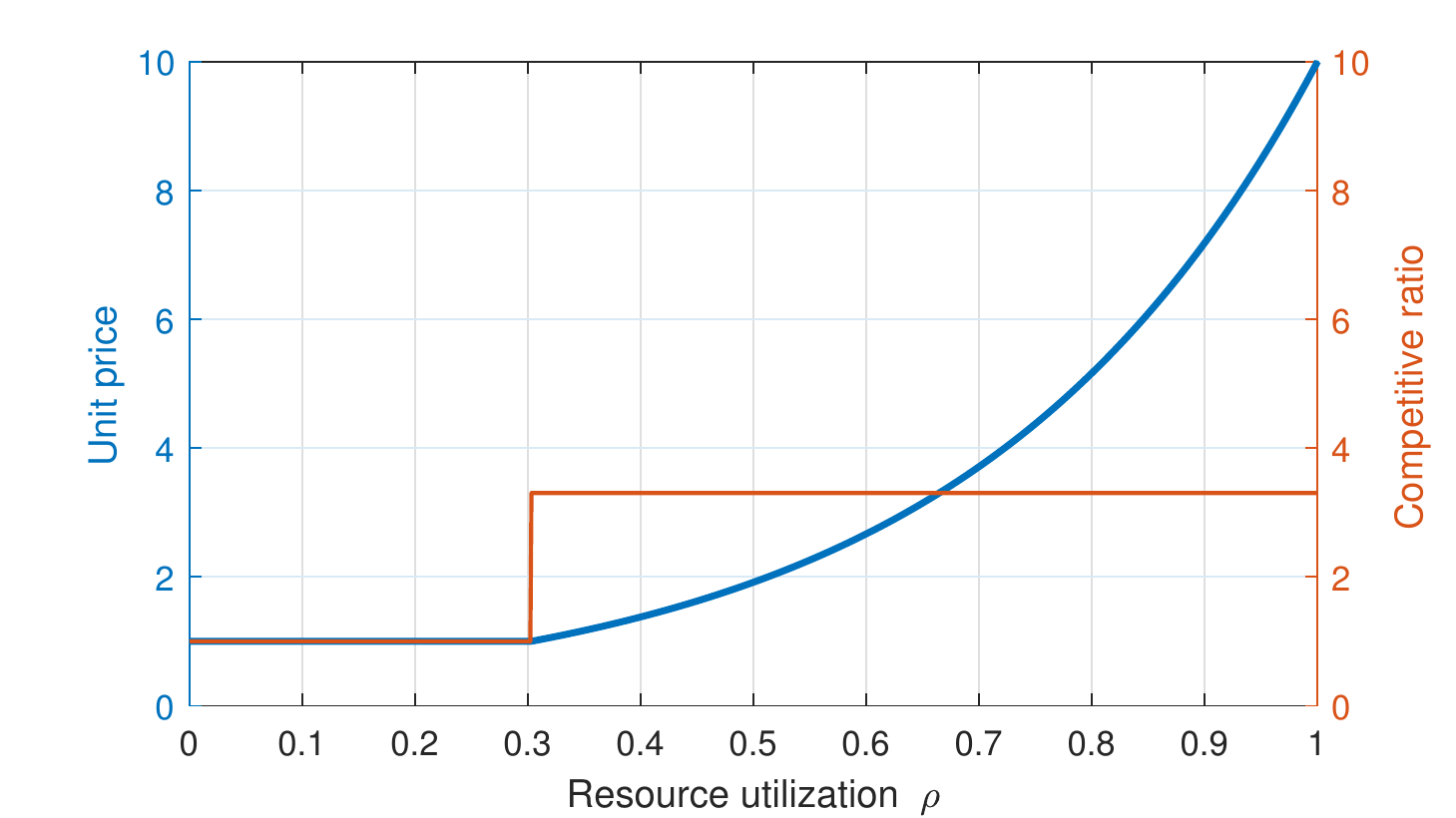}
			\label{fig:priFcn1_curv}}
		\par\end{centering}
	\vspace*{-10pt}
	\begin{centering}
		\subfloat[The worst-case $V_{ol}(\rho\mbox{*})$ and $V_{opt}(\rho\mbox{*})$
		for $\rho\mbox{*}=0.7$ visualized by AUCs. Each AUC indicates the total value obtained by an online or offline solution. ]{
			\includegraphics[width=0.45\textwidth]{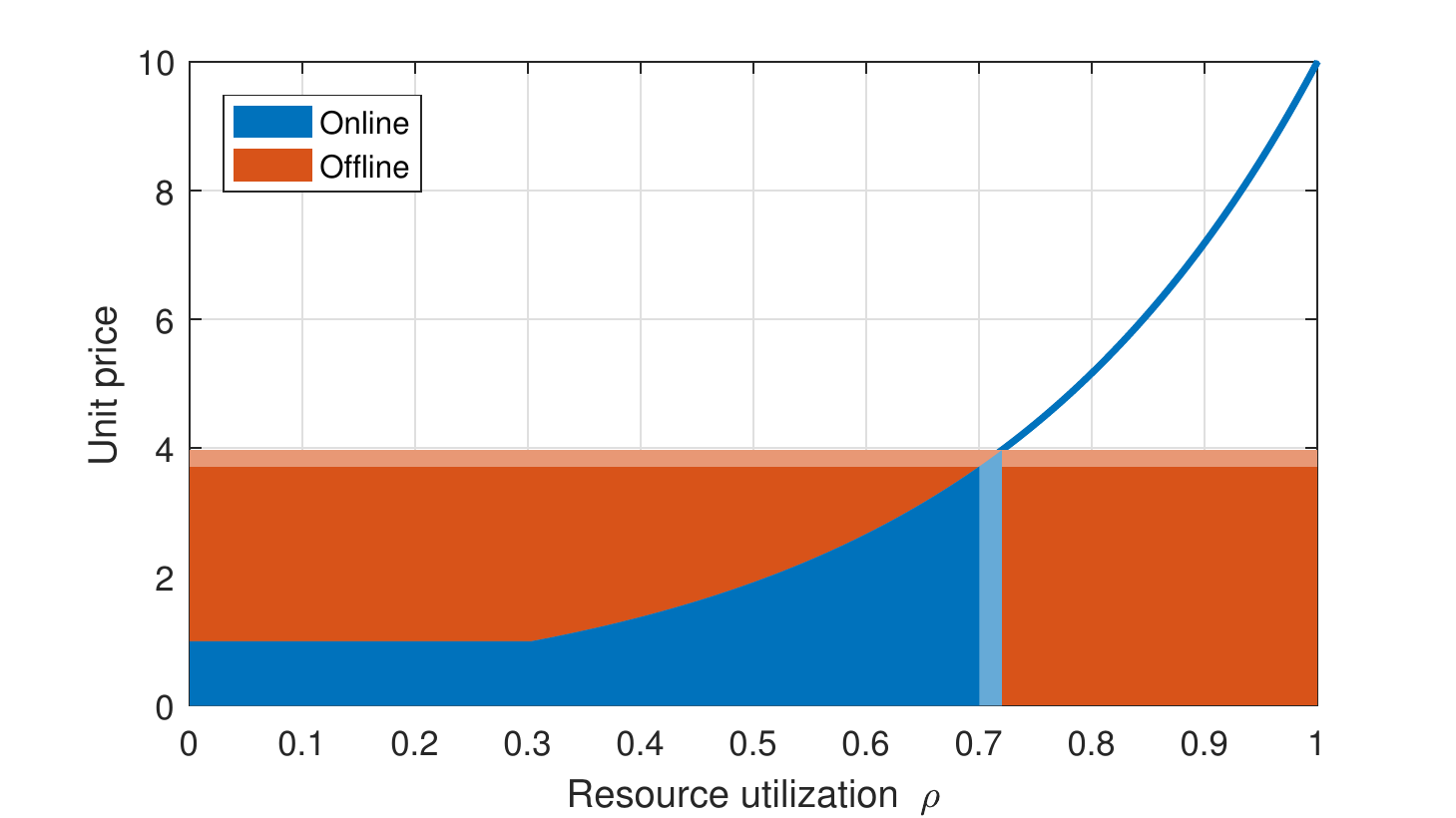}
			\label{fig:priFcn1_auc}}
		\par\end{centering}
	\vspace*{0pt}
	\caption{An illustration of pricing function (\ref{eq:priFcn1}) for $\underline{p}=1$, $\overline{p}=10$.}
	\label{fig:priFcn1}
\end{figure}

We next prove the worst-case competitive ratio of Algorithm \ref{alg:postedPrice} achieved when using the pricing function in \eqref{eq:priFcn1}, as well as the optimality of the pricing function when $\beta\rightarrow \infty$ (this default condition omitted in all lemmas, claim and theorems before Theorem \ref{thm:betaGeq_1}), and then generalize the conclusion to the case $\beta\ge 1$ in Theorem \ref{thm:betaGeq_1}. 

\begin{definition}\label{def:finRho}
	$\rho\mbox{*}\in\left[0,1\right]$ denotes the {\em final} utilization level of the resource 
	after all users have decided whether to rent the cloud resource to execute their jobs. 
\end{definition}

The following lemma implies when the final resource utilization level is low, the total demand of potential users also tends to be low, thus it is possible to satisfy all user demand online. 

\begin{lemma}\label{lem:priFcn1_const_CR}
	If $\rho\mbox{*}\in\left[0,1/\left(\log\gamma+1\right)\right]$, the worst-case competitive ratio achieved by Algorithm \ref{alg:postedPrice} using the pricing function in \eqref{eq:priFcn1} is $\alpha_{1,1}=1$.
\end{lemma}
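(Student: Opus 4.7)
The plan is to argue that when $\rho\mbox{*}\le 1/(\log\gamma+1)$, Algorithm \ref{alg:postedPrice} is forced by the pricing function \eqref{eq:priFcn1} to accept every arriving user, and since every user can also be accommodated in the offline optimum, the two objectives coincide. The competitive ratio is then exactly $1$.

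First, I would observe that $\rho$ is monotonically non-decreasing during the execution of Algorithm \ref{alg:postedPrice}, and its final value equals $\rho\mbox{*}$. Consequently, under the hypothesis $\rho\mbox{*}\le 1/(\log\gamma+1)$, the utilization $\rho$ stays in $[0,1/(\log\gamma+1)]$ for the entire run. By the first branch of \eqref{eq:priFcn1}, this means the posted unit price equals $\underline{p}$ at every user arrival.

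Next I would verify that both acceptance conditions in Algorithm \ref{alg:postedPrice} hold for every user $i\in\mathcal{U}$. Assumption \ref{asmp:varCstr} gives $v_{i}/d_{i}\ge\underline{p}$, i.e.\ $v_{i}\ge d_{i}P_{1}(\rho)$, so the price-acceptance test succeeds. For the capacity test, using $\rho\le\rho\mbox{*}\le 1/(\log\gamma+1)<1$ together with Assumption \ref{asmp:smallDemand} ($d_{i}\ll 1$), we get $\rho+d_{i}\le 1$. Thus every user is admitted, and the algorithm's total value satisfies $V_{ol}(\rho\mbox{*})=\sum_{i\in\mathcal{U}}v_{i}$.

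Finally I would note that the offline optimum of \eqref{eq:prob1} is trivially upper bounded by $\sum_{i\in\mathcal{U}}v_{i}$, so $V_{opt}(\rho\mbox{*})\le V_{ol}(\rho\mbox{*})$; combined with the inequality $V_{opt}(\rho\mbox{*})\ge V_{ol}(\rho\mbox{*})$ that always holds by optimality of the offline solution, we conclude $V_{opt}(\rho\mbox{*})=V_{ol}(\rho\mbox{*})$, yielding competitive ratio $\alpha_{1,1}=1$ as claimed. The only mildly delicate point is justifying the capacity step using Assumption \ref{asmp:smallDemand}; everything else is essentially bookkeeping directly from the definition of $P_{1}$ and the bound on $\rho\mbox{*}$.
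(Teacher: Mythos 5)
Your proof is correct and follows essentially the same route as the paper: since the final utilization lies in the constant-price region, the posted price was $\underline{p}$ throughout, so by Assumption~\ref{asmp:varCstr} every user accepts, the online solution serves all demand, and it therefore matches the offline optimum. Your explicit verification of the capacity test via Assumption~\ref{asmp:smallDemand} is a minor addition the paper leaves implicit, but it does not change the argument.
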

\begin{proof}
	According to the pricing function in \eqref{eq:priFcn1}, for $\rho\in[0,\allowbreak1/\left(\log\gamma+1\right)]$, the unit price is a constant, $\underline{p}$, which by Assumption~\ref{asmp:varCstr} is acceptable to any potential user, thus $\rho\mbox{*}\in\left[0,1/\left(\log\gamma+1\right)\right]$ implies that the total demand of all users is exactly $\rho\mbox{*}$. The social welfare achieved by the pricing function in \eqref{eq:priFcn1} is the total value of all users, which is also the maximum possible social welfare achieved by solving the offline problem \eqref{eq:prob1}. Therefore, the worst-case competitive ratio is $1$.
\end{proof}
 
For a final utilization level $\rho\mbox{*}$, we let $V_{ol}(\rho\mbox{*})$ be the total value obtained by an online solution, and $V_{opt}(\rho\mbox{*})$ be that obtained by an optimal offline solution. Thus, in any worst case, the ratio $V_{opt}(\rho\mbox{*})/V_{ol}(\rho\mbox{*})$ is maximized.
 
\begin{lemma}\label{lem:priFcn1_nconst_CR}
	If $\rho\mbox{*}\in\left(1/\left(\log\gamma+1\right),1\right]$, the worst-case competitive ratio achieved by Algorithm \ref{alg:postedPrice} using the pricing function in \eqref{eq:priFcn1} is $\alpha_{1,2}=\log\gamma+1$. 
\end{lemma}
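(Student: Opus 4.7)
The plan is to bound $V_{ol}(\rho\mbox{*})$ from below and $V_{opt}(\rho\mbox{*})$ from above under a worst-case adversary, show their ratio is at most $\log\gamma+1$ uniformly over $\rho\mbox{*}\in(1/(\log\gamma+1),1]$, and then exhibit a matching adversary. The geometric picture guiding the analysis is the one suggested by Fig.~\ref{fig:priFcn1_auc}: the worst-case $V_{ol}$ is the area under $P_{1}(\rho)$ on $[0,\rho\mbox{*}]$, while the worst-case $V_{opt}$ is essentially a rectangle of width $1$ and height $P_{1}(\rho\mbox{*})$, and their ratio is exactly $\log\gamma+1$.

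The first step is to reduce to a canonical worst-case instance. I argue via perturbation that lowering the value of an accepted user $i$ all the way down to the acceptance threshold $d_{i}P_{1}(\rho_{i})$ only weakly increases the ratio: shrinking $v_{i}$ by $\delta$ (while keeping $v_{i}\geq d_{i}P_{1}(\rho_{i})$ so that the online algorithm still accepts $i$ and all downstream utilizations are unchanged) reduces $V_{ol}$ by exactly $\delta$ and $V_{opt}$ by at most $\delta$; since $V_{opt}\geq V_{ol}$, the elementary inequality $(V_{opt}-\Delta)/(V_{ol}-\delta)\geq V_{opt}/V_{ol}$ whenever $\Delta\leq\delta$ then guarantees a weakly larger ratio. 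A symmetric argument pushes each rejected user's per-unit value toward $P_{1}(\rho_{i})^-$, and the adversary can delay their arrivals until after the online utilization has already reached $\rho\mbox{*}$, so that their per-unit value approaches $P_{1}(\rho\mbox{*})^-$. In this canonical instance, every user (accepted or rejected) has per-unit value at most $P_{1}(\rho\mbox{*})$.

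The second step is a direct calculation on the canonical instance. Using Assumption~\ref{asmp:smallDemand} to replace the Riemann sum $\sum_{i\in S_{ol}}d_{i}P_{1}(\rho_{i})$ by the integral $\int_{0}^{\rho\mbox{*}}P_{1}(\rho)d\rho$, splitting at the breakpoint $1/(\log\gamma+1)$, and integrating the exponential branch via $\int e^{(\log\gamma+1)\rho-1}d\rho=e^{(\log\gamma+1)\rho-1}/(\log\gamma+1)$, the two pieces telescope to the clean closed form $V_{ol}(\rho\mbox{*})=P_{1}(\rho\mbox{*})/(\log\gamma+1)$. For $V_{opt}(\rho\mbox{*})$, since total offline demand is at most $1$ and every user's per-unit value is at most $P_{1}(\rho\mbox{*})$, we immediately get $V_{opt}(\rho\mbox{*})\leq P_{1}(\rho\mbox{*})$. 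Dividing yields a ratio at most $\log\gamma+1$ for every $\rho\mbox{*}\in(1/(\log\gamma+1),1)$; the boundary $\rho\mbox{*}=1$ is handled identically by using the left-limit $\overline{p}$ in place of $P_{1}(\rho\mbox{*})$ (rejected users may now fail because of capacity rather than price, but their per-unit value is still bounded by $\overline{p}$ by Assumption~\ref{asmp:varCstr}).

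Finally, the bound is shown to be tight by the matching construction: send infinitesimally small threshold-valued users to drive the online utilization to $\rho\mbox{*}$, then a batch of users with per-unit value $P_{1}(\rho\mbox{*})-\epsilon$ and aggregate demand at least $1$ which online rejects but offline uses to fill its full capacity, giving ratio $\to\log\gamma+1$ as $\epsilon\to 0$. The main obstacle I anticipate is the canonical-reduction step, and specifically the subtlety that perturbing a single $v_{i}$ can change which subset the offline optimum picks; this is precisely why the argument must be framed in terms of the two-sided inequality $0\leq\Delta\leq\delta$ rather than direct equality, and why it is important that the perturbation leaves the online trajectory untouched. Once the canonical form is justified, the remaining bounds and integral calculations are routine.
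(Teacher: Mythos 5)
Your proposal is correct and takes essentially the same route as the paper: the same worst-case picture (online value equal to the area $\int_{0}^{\rho\mbox{*}}P_{1}(\rho)\,d\rho=P_{1}(\rho\mbox{*})/(\log\gamma+1)$, offline value bounded by a width-$1$ block of users valued just below $P_{1}(\rho\mbox{*})$), the same closed-form ratio, and the same matching adversary. The only difference is one of rigor rather than approach: your perturbation/canonical-instance step (via $\delta V_{opt}\geq \Delta V_{ol}$) and your explicit handling of $\rho\mbox{*}=1$ formalize what the paper asserts informally when it declares the worst case and dismisses the alternative where the offline optimum also uses the accepted users.
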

\begin{proof}
	For any $\rho\mbox{*}\in\left(1/\left(\log\gamma+1\right),1\right]$, the worst case of the online solution is that the valuations of satisfied users are the same as the prices they accept. By Assumption~\ref{asmp:smallDemand}, the minimum total value of an online solution is
	\begin{equation}\label{eq:V_ol1}
		V_{ol}\left(\rho\mbox{*}\right)=\int_{0}^{\rho\mbox{*}}P_{1}\left(\rho\right)d\rho=\frac{\underline{p}}{\log\gamma+1}e^{\left(\log\gamma+1\right)\rho\mbox{*}-1},
	\end{equation}
	as shown by the blue area under the curve (AUC) in Fig.~\ref{fig:priFcn1_auc}. At the same time, any unsatisfied user has a unit value smaller than $P_{1}\left(\rho\mbox{*}\right)$, because otherwise $\rho\mbox{*}$ cannot be the final resource utilization. Hence in the worst case, there can be a set of unsatisfied users with a total demand of $1$ (\ie, $\sum_{i\in\mathcal{U}_{opt}}d_{i}=1,\forall r\in\mathcal{R}$, where $\mathcal{U}_{opt}$ is the set of user chosen by the optimal offline solution), and each with a unit value of $P_{1}\left(\rho\mbox{*}\right)-\epsilon_{i}$, where $\epsilon_{i}$ is an arbitrarily small positive number, such that the optimal offline solution is to satisfy their demands with all available resource. This yields the maximum optimal offline total value given Eq.~\eqref{eq:V_ol1}:
	\begin{equation}\label{eq:V_opt1}
		\begin{split}
		V_{opt}\left(\rho\mbox{*}\right)= & \sum_{i\in\mathcal{U}_{opt}}d_{i}\left(p\left(\rho\mbox{*}\right)-\epsilon_{i}\right)=p\left(\rho\mbox{*}\right)-\epsilon\\
		= & ~\underline{p}e^{\left(\log\gamma+1\right)\rho\mbox{*}-1}-\epsilon,
		\end{split}
	\end{equation}
	as shown by the red AUC (partially covered by the blue one) in Fig.~\ref{fig:priFcn1_auc}. Here, $\epsilon=\sum_{i\in\mathcal{U}_{opt}}\epsilon_{i}$, and hence can also be arbitrarily small. Note that, there can be a case which leads to a larger optimal offline total value, by increasing the online value corresponding to $\rho\in\left[0,\rho\mbox{*}\right]$ (\ie, the blue AUC in Fig.~\ref{fig:priFcn1_auc}) until it is large enough and becomes part of the optimal offline value. However, the online value will increase more than the optimal offline value does in this case, making it impossible to be a worst case. Therefore, the worst-case competitive ratio $\ensuremath{\alpha_{1,2}=\sup_{\epsilon>0}\frac{V_{opt}\left(\rho\mbox{*}\right)}{V_{ol}\left(\rho\mbox{*}\right)}=\log\gamma+1},\forall\rho\mbox{*}\in\left(1/\left(\log\gamma+1\right),1\right]$.
\end{proof}

An illustration of the worst-case competitive ratio at different final resource utilization levels is shown in Fig.~\ref{fig:priFcn1_curv} (red line).

\begin{theorem}\label{thm:priFcn1_cr}
	The worst-case competitive ratio of Algorithm \ref{alg:postedPrice} using the pricing function in \eqref{eq:priFcn1} is
	\begin{equation}
	\alpha_{1}=\log\gamma+1.\label{eq:alpha_1}
	\end{equation}
\end{theorem}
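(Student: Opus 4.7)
The plan is to combine Lemma~\ref{lem:priFcn1_const_CR} and Lemma~\ref{lem:priFcn1_nconst_CR} by a case split on the final resource utilization level $\rho\mbox{*}$. Since any trajectory of Algorithm~\ref{alg:postedPrice} terminates with some $\rho\mbox{*}\in[0,1]$, and the interval $[0,1]$ is partitioned exactly by the two lemmas into $[0,1/(\log\gamma+1)]$ and $(1/(\log\gamma+1),1]$, every online execution falls under one of the two cases. The worst-case competitive ratio over all possible instances is therefore the supremum of the worst-case ratios attained on the two subintervals.

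First I would invoke Lemma~\ref{lem:priFcn1_const_CR} to handle the low-utilization regime, which gives ratio $\alpha_{1,1}=1$, corresponding to the situation in which the low flat price $\underline{p}$ lets the online algorithm accept every arriving user, so it trivially matches the offline optimum. Next I would invoke Lemma~\ref{lem:priFcn1_nconst_CR} to handle the high-utilization regime, which gives ratio $\alpha_{1,2}=\log\gamma+1$. Since $\gamma=\overline{p}/\underline{p}\ge 1$ by Assumption~\ref{asmp:varCstr}, we always have $\log\gamma+1\ge 1=\alpha_{1,1}$, so
\begin{equation*}
\alpha_{1}=\max\{\alpha_{1,1},\alpha_{1,2}\}=\log\gamma+1.
\end{equation*}

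There is essentially no obstacle beyond the case analysis itself, since the two lemmas already carry the substantive work. The only subtlety to flag is that the reported ratio $\alpha_{1,2}$ in Lemma~\ref{lem:priFcn1_nconst_CR} is a supremum approached as the slack $\epsilon$ tends to $0$, not a maximum attained by a single instance; I would state explicitly that the worst-case competitive ratio of Algorithm~\ref{alg:postedPrice} is therefore the tight bound $\log\gamma+1$, realized in the limit. One should also remark that Fig.~\ref{fig:priFcn1_curv} visually confirms the monotone envelope of the worst-case ratio over $\rho\mbox{*}$, with the maximum value $\log\gamma+1$ attained throughout the second subinterval, so no other regime can produce a larger ratio.
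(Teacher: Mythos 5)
Your proposal matches the paper's own argument: the paper likewise observes that the worst-case ratio is the maximum over all $\rho\mbox{*}\in[0,1]$ and combines Lemma~\ref{lem:priFcn1_const_CR} and Lemma~\ref{lem:priFcn1_nconst_CR} to get $\alpha_{1}=\max\{\alpha_{1,1},\alpha_{1,2}\}=\log\gamma+1$. Your added remarks (that $\gamma\geq 1$ makes the second case dominate, and that the bound is a supremum attained in the limit as $\epsilon\to 0$) are correct refinements of the same approach.
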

\begin{proof}
	The worst-case competitive ratio of the pricing function in \eqref{eq:priFcn1} is the maximum possible competitive ratio for all $\rho\mbox{*}\in\left[0,1\right]$. Hence following Lemma~\ref{lem:priFcn1_const_CR} and \ref{lem:priFcn1_nconst_CR}, $\alpha_{1}=\max\left\{\alpha_{1,1},\alpha_{1,2}\right\}$ $=\log\gamma+1$. \end{proof}

We next show the optimality of the pricing function based on the observation that, to achieve a finite worst-case competitive ratio, any pricing function should contain a constant ($\underline{p}$) part at the beginning of the function. 

\begin{claim}\label{clm:constFcn1}
	If a pricing function $P\left(\rho\right)$ achieves a finite worst-case competitive ratio of $\alpha$, then $P\left(\rho\right)=\underline{p}, \forall\rho\in\left[0,1/\alpha\right]$.
\end{claim}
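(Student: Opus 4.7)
My plan is to prove the claim by contradiction, using a single, particularly simple adversarial arrival sequence. Suppose toward contradiction that $P(\rho_0) \neq \underline{p}$ at some $\rho_0 \in [0, 1/\alpha]$. By Assumption~\ref{asmp:varCstr} every user has per-unit valuation at least $\underline{p}$, so any posted price strictly below $\underline{p}$ is equivalent to $\underline{p}$ from the algorithm's point of view (both are accepted by every arriving user); hence without loss of generality I may take $P(\rho_0) > \underline{p}$.

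The adversary I construct presents a long sequence of identical small users, each with tiny demand $d_i \ll 1$ (permitted by Assumption~\ref{asmp:smallDemand}) and per-unit valuation exactly $\underline{p}$. By the acceptance rule of Algorithm~\ref{alg:postedPrice}, such a user is taken whenever the current unit price satisfies $P(\rho) \leq \underline{p}$. Starting from $\rho = 0$, the online utilization grows (essentially continuously, since $d_i \ll 1$) up to the first level $\rho_1 := \inf\{\rho : P(\rho) > \underline{p}\}$ at which the price strictly exceeds $\underline{p}$; after that point no further user is admitted, regardless of whether $P$ later returns below $\underline{p}$, because the algorithm only queries the current price at the current $\rho$. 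Thus $V_{ol} \leq \underline{p}\,\rho_1$. The adversary can keep supplying users so that the offline optimum fills the cloud to capacity, yielding $V_{opt} = \underline{p}$. Since by hypothesis $\rho_1 \leq \rho_0 \leq 1/\alpha$, the induced ratio satisfies $V_{opt}/V_{ol} \geq 1/\rho_1 \geq \alpha$, which collides with the hypothesis that the worst-case ratio of $P$ is $\alpha$ whenever the inequality is strict.

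The main obstacle I anticipate is squeezing the last inequality to be \emph{strict}, since the claim insists on $P \equiv \underline{p}$ on the closed interval $[0, 1/\alpha]$, yet the boundary case $\rho_1 = 1/\alpha$ only yields ratio exactly $\alpha$. I would handle this by first establishing $P \equiv \underline{p}$ on the open interval $[0, 1/\alpha)$ from the argument above (there $\rho_1 < 1/\alpha$ is strict and the contradiction is clean), and then extending to the endpoint by a limiting or right-continuity argument on $P$ — alternatively, by perturbing the adversarial valuation to $\underline{p}(1+\epsilon)$ to slightly enlarge the acceptance region so the strict inequality is forced, and letting $\epsilon \to 0^+$. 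Apart from this boundary subtlety, the argument is a direct combination of Algorithm~\ref{alg:postedPrice} with Assumptions~\ref{asmp:varCstr} and \ref{asmp:smallDemand}, and should fit in a short paragraph.
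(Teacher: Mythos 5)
Your interior construction works (after one small repair noted below), but the endpoint $\rho_0=1/\alpha$ is a genuine gap, and neither of your proposed patches closes it. With your adversary every user is valued at $\underline{p}$ per unit, so both $V_{ol}$ and $V_{opt}$ are multiples of $\underline{p}$ and the largest ratio you can generate is $1/\rho_f$; when the only violation is $P(1/\alpha)>\underline{p}$ with $P\equiv\underline{p}$ on $[0,1/\alpha)$, this yields exactly $\alpha$, which does not contradict a worst-case ratio of $\alpha$. Perturbing the valuations to $\underline{p}(1+\epsilon)$ does not force strictness: the online still fills to $1/\alpha$ at price $\underline{p}$ and the offline still serves total demand $1$, so \emph{both} $V_{ol}$ and $V_{opt}$ are scaled by the same factor $(1+\epsilon)$ and the ratio remains $\alpha$. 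A right-continuity argument is also unavailable, since the claim must hold for arbitrary pricing functions (continuity is assumed in the paper only later, to solve the differential equations, and explicitly not in the optimality arguments). The missing idea --- and the route the paper takes --- is to decouple the two groups of users: the users the online algorithm accepts are valued at the posted prices (here $\underline{p}$), while the \emph{rejected} users are given per-unit value just below $P(\rho_0)$, which is strictly above $\underline{p}$ at the violating point. Then $V_{opt}=P(\rho_0)-\epsilon$ while $V_{ol}=\int_{0}^{\rho_0}P(\rho)d\rho<\rho_0 P(\rho_0)$, so the ratio strictly exceeds $1/\rho_0\geq\alpha$, giving a contradiction even at $\rho_0=1/\alpha$ and without any continuity assumption.

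A secondary, fixable imprecision: since $P$ is not assumed monotone or continuous, it is not automatic that ``no further user is admitted'' once the utilization reaches $\rho_1=\inf\{\rho:P(\rho)>\underline{p}\}$. The set where $P>\underline{p}$ need not contain an interval just to the right of its infimum (it could even be a single point), so a generic stream of small users may step over it and keep being accepted, making $V_{ol}$ large. You need the adversary to choose the demands so that some partial sum lands exactly on a point where $P$ exceeds $\underline{p}$ (e.g., on $\rho_0$ itself); once the current utilization sits at such a point, all subsequent low-valued users are indeed rejected and your bound $V_{ol}\leq\underline{p}\,\rho_0$ follows. Your dismissal of the case $P(\rho_0)<\underline{p}$ as behaviorally equivalent is reasonable and mirrors the paper's implicit treatment, so I would not count that against you.
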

\begin{proof}
	If the claim does not hold and $P\left(0\right)>\underline{p}$, there can be a case where $\rho\mbox{*}=0$, such that the online total value $	V'_{ol}\left(\rho\mbox{*}\right)=0$, while the optimal offline total value $V'_{opt}\left(\rho\mbox{*}\right)=P\left(0\right)-\epsilon>0$, where $\epsilon$ is an arbitrarily small positive number. Thus the worst-case competitive ratio $\alpha\geq\sup_{\epsilon>0}\frac{V'_{opt}\left(\rho\mbox{*}\right)}{V'_{ol}\left(\rho\mbox{*}\right)}=+\infty$, which contradicts the assumption that $\alpha$ is finite.
	
	If the claim does not hold and $P\left(0\right)=\underline{p}$, there must be a $\rho_{0}\in\left(0,1/\alpha\right]$ such that $P\left(\rho_{0}\right)>P\left(\rho\right), \forall\rho\in\left[0,\rho_{0}\right)$. There can be a case where $\rho\mbox{*}=\rho_{0}$, such that the online total value
	\[
	V'_{ol}\left(\rho\mbox{*}\right)=\int_{0}^{\rho_{0}}P\left(\rho\right)d\rho<\rho_{0}P\left(\rho_{0}\right),
	\]
	while the optimal offline total value $V'_{opt}\left(\rho\mbox{*}\right)=P\left(\rho_{0}\right)-\epsilon$. Thus the worst-case competitive ratio $\alpha\geq\sup_{\epsilon>0}\frac{V'_{opt}\left(\rho\mbox{*}\right)}{V'_{ol}\left(\rho\mbox{*}\right)}>1/\rho_{0}$, which contradicts $\rho_{0}\leq1/\alpha$.
\end{proof}

\textcolor{black}{
\begin{theorem}\label{thm:priFcn1_opt}
	the pricing function in \eqref{eq:priFcn1} is optimal according to Definition~\ref{def:optFcn}, {\em i.e.}, using it Algorithm \ref{alg:postedPrice} achieves the smallest worst-case competitive ratio.
\end{theorem}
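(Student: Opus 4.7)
My plan is a proof by contradiction: assume some pricing function $P$ achieves a worst-case competitive ratio $\alpha < \log\gamma + 1$ in Algorithm~\ref{alg:postedPrice}, and derive a chain of inequalities on $P$ that forces $\alpha \geq \log\gamma + 1$. Throughout, write $F(\rho) := \int_0^\rho P(x)\,dx$.

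The starting point is Claim~\ref{clm:constFcn1}, which immediately forces $P \equiv \underline{p}$ on $[0, 1/\alpha]$, so that $F(1/\alpha) = \underline{p}/\alpha$. Next, I would extend the adversarial instance used in the proof of Lemma~\ref{lem:priFcn1_nconst_CR} to a \emph{general} $P$. For any $\rho^* \in (1/\alpha, 1)$ with $P(\rho^*) \leq \overline{p}$, the adversary routes a first wave whose arrivals match the posted price infinitesimally as utilization climbs from $0$ to $\rho^*$ (so $V_{ol}(\rho^*) \to F(\rho^*)$), followed by a second wave of total demand $1$ and common unit value $P(\rho^*) - \epsilon$, all rejected online but accepted by the offline optimum (so $V_{opt}(\rho^*) \to P(\rho^*)$). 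Requiring $V_{opt}/V_{ol} \leq \alpha$ yields the pointwise inequality $F'(\rho) = P(\rho) \leq \alpha F(\rho)$ throughout the growing region. Setting $\rho_{\max} := \sup\{\rho \in [0,1] : P(\rho) \leq \overline{p}\}$, a Gr\"{o}nwall-style integration from $1/\alpha$ with boundary value $F(1/\alpha) = \underline{p}/\alpha$ gives $F(\rho) \leq (\underline{p}/\alpha)\,e^{\alpha\rho - 1}$ on $[1/\alpha, \rho_{\max}]$.

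The final step is to bring $\overline{p}$ into the picture via a boundary scenario at $\rho_{\max}$. Regardless of whether $\rho_{\max} = 1$ (the second wave is rejected because capacity is exhausted) or $\rho_{\max} < 1$ (the second wave is rejected because the price has just crossed above $\overline{p}$), the adversary can set the second wave's unit value to $\overline{p}$ with total demand $1$, making $V_{opt} = \overline{p}$ while $V_{ol} = F(\rho_{\max})$. Competitiveness then demands $\overline{p} \leq \alpha F(\rho_{\max}) \leq \underline{p}\,e^{\alpha\rho_{\max} - 1} \leq \underline{p}\,e^{\alpha - 1}$, whence $\gamma \leq e^{\alpha - 1}$ and therefore $\alpha \geq \log\gamma + 1$, contradicting the hypothesis.

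The main obstacle I foresee is establishing the differential inequality for a fully general $P$ that need not be continuous or monotonic: one must carefully verify that, using demands $d_i \ll 1$ (Assumption~\ref{asmp:smallDemand}), the adversary can realize any prescribed online trajectory and that the continuous estimate $V_{ol} \to F(\rho^*)$ is tight. Unifying the two sub-cases at $\rho_{\max}$ and taking the clean limit $\epsilon \downarrow 0$ in both waves is the remaining bookkeeping.
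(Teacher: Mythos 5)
Your proposal is correct in substance, but it proves the theorem by a genuinely different route than the paper. The paper argues by direct pointwise comparison against the specific function $P_{1}$ in \eqref{eq:priFcn1}: assuming a better ratio $\alpha'_{1}<\alpha_{1}$, it invokes Claim~\ref{clm:constFcn1} to force a longer flat segment at $\underline{p}$, then shows that if the competitor ever rises to meet $P_{1}$ on $(1/\alpha'_{1},1)$ the worst case at the first such crossing already violates $\alpha'_{1}<\alpha_{1}$, and otherwise (competitor strictly below $P_{1}$) the scenario $\rho\mbox{*}=1$ with offline value $\overline{p}$ yields the contradiction. You instead prove a universal lower bound on any pricing function: from Claim~\ref{clm:constFcn1} plus the adversarial construction of Lemma~\ref{lem:priFcn1_nconst_CR} you extract the pointwise inequality $P(\rho)\le\alpha F(\rho)$ on the reachable region, integrate it Gr\"onwall-style from $F(1/\alpha)=\underline{p}/\alpha$, and close with the $\overline{p}$-valued boundary wave to get $\gamma\le e^{\alpha-1}$, i.e., $\alpha\ge\log\gamma+1$. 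This buys a cleaner, self-contained lower bound that never references $P_{1}$ (and makes transparent why the exponential is extremal), whereas the paper's crossing argument is more elementary and avoids the differential inequality machinery; both share the same two adversarial ingredients (price-matching first wave, near-price or $\overline{p}$-valued second wave) and the same level of informality about irregular $P$. Two small points to tidy up: your $\rho_{\max}:=\sup\{\rho:P(\rho)\le\overline{p}\}$ should really be the supremum of \emph{reachable} utilizations (the first stall point where the posted price exceeds $\overline{p}$, or $1$), since for non-monotone $P$ the set $\{P\le\overline{p}\}$ may extend beyond where the online trajectory can go and the inequality $P\le\alpha F$ is only established on the reachable prefix --- the final chain still works with the stall point since it is at most $1$; and if $P$ dips below $\underline{p}$ after $1/\alpha$, the online value exceeds $\int P$, so one should run the argument with $\int\max\{P(x),\underline{p}\}dx$ in place of $F$, which only strengthens the bound.
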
}
\begin{proof}
	We prove this theorem by way of contradiction. Assume that there exists a pricing function, $P'_{1}\left(\rho\right)$, which achieves a worst-case competitive ratio $\alpha'_{1}<\alpha_{1}$. According to Claim~\ref{clm:constFcn1} and Theorem~\ref{thm:priFcn1_cr}, we have $P'_{1}\left(\rho\right)=\underline{p}, \forall\rho\in\left[0,1/\alpha'_{1}\right]$, and hence
	\[
	\int_{0}^{1/\alpha'_{1}}P'_{1}\left(\rho\right)d\rho<\int_{0}^{1/\alpha'_{1}}P_{1}\left(\rho\right)d\rho,
	\]
	where $P_{1}\left(\rho\right)$ is the pricing function in \eqref{eq:priFcn1}.
	
	If there exists some $\rho\in\left(1/\alpha'_{1},1\right)$ such that $P'_{1}\left(\rho\right)\geq P_{1}\left(\rho\right)$ we find the smallest one, and denote it by $\rho_{1}$. Then there can be a case where $\rho\mbox{*}=\rho_{1}$, such that the online total value
	\[
	V'_{ol}\left(\rho\mbox{*}\right)=\int_{0}^{\rho_{1}}P'_{1}\left(\rho\right)d\rho<\int_{0}^{\rho_{1}}P_{1}\left(\rho\right)d\rho=V_{ol}\left(\rho\mbox{*}\right),
	\]
	while the optimal offline total value $V'_{opt}\left(\rho\mbox{*}\right)=P'_{1}\left(\rho_{1}\right)-\epsilon\geq P_{1}\left(\rho_{1}\right)-\epsilon=V_{opt}\left(\rho\mbox{*}\right)$, where $\epsilon$ is an arbitrarily small positive number. Thus the worst-case competitive ratio $\alpha'_{1}\geq\sup_{\epsilon>0}\frac{V'_{opt}\left(\rho\mbox{*}\right)}{V'_{ol}\left(\rho\mbox{*}\right)}>\sup_{\epsilon>0}\frac{V_{opt}\left(\rho\mbox{*}\right)}{V_{ol}\left(\rho\mbox{*}\right)}=\alpha_{1}$, contradicting the assumption $\alpha'_{1}<\alpha_{1}$. Therefore, $P'_{1}\left(\rho\right)<P_{1}\left(\rho\right),\forall\rho\in\left(1/\alpha'_{1},1\right)$.
	
	For $\rho\mbox{*}=1$, since $P'_{1}\left(1\right)\leq\overline{p}$ (a unit price higher than $\overline{p}$ will have all potential users rejected) is finite, we now have
	\[
	V'_{ol}\left(\rho\mbox{*}\right)=\int_{0}^{1}P'_{1}\left(\rho\right)d\rho<\int_{0}^{1}P_{1}\left(\rho\right)d\rho=V_{ol}\left(\rho\mbox{*}\right).
	\]
	However, as the resource is exhausted, subsequent users will not be served, regardless of their valuations. There can be a case where the optimal offline total value $V'_{opt}\left(\rho\mbox{*}\right)=\overline{p}=V_{opt}\left(\rho\mbox{*}\right)$. Thus the worst-case competitive ratio $\alpha'_{1}\geq\frac{V'_{opt}\left(\rho\mbox{*}\right)}{V'_{ol}\left(\rho\mbox{*}\right)}>\frac{V_{opt}\left(\rho\mbox{*}\right)}{V_{ol}\left(\rho\mbox{*}\right)}=\alpha_{1}$, contradicting the assumption that $\alpha'_{1}<\alpha_{1}$.
\end{proof}

We next generalize the optimality result for all $\beta\geq 1$.
\textcolor{black}{
\begin{theorem}\label{thm:betaGeq_1}
	For $\beta\geq 1$, the pricing function in \eqref{eq:priFcn1} is optimal according to Definition~\ref{def:optFcn}, and the corresponding worst-case competitive ratio is $\alpha_{1}$.
\end{theorem}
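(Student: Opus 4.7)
The plan is to reuse the proofs of Theorem~\ref{thm:priFcn1_cr} and Theorem~\ref{thm:priFcn1_opt} almost verbatim, observing that imposing the scarcity constraint $\sum_{i\in\mathcal{U}}d_{i}\le 1+\beta$ merely restricts the adversary's instance space, and then checking that every adversarial instance invoked in those earlier proofs remains admissible once $\beta\ge 1$.

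First, I would establish that Algorithm~\ref{alg:postedPrice} equipped with the pricing function in \eqref{eq:priFcn1} still achieves worst-case competitive ratio exactly $\alpha_{1}$ under Definition~\ref{def:limitedBeta} with $\beta\ge 1$. The upper bound $\alpha_{1}$ is automatic, since restricting the instance space can only decrease the worst-case ratio inherited from Theorem~\ref{thm:priFcn1_cr}. For tightness, I would revisit the construction in Lemma~\ref{lem:priFcn1_nconst_CR}: the first batch of arrivals drives online utilization up to $\rho\mbox{*}$ with total demand $\rho\mbox{*}$, while the rejected second batch has total demand exactly $1$, so the combined total demand is $\rho\mbox{*}+1\le 2\le 1+\beta$ whenever $\beta\ge 1$. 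This instance therefore respects the scarcity constraint and still yields the ratio $\alpha_{1,2}=\log\gamma+1$, matching the upper bound.

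Second, for the optimality claim, I would assume for contradiction a competing pricing function $P'_{1}$ with worst-case ratio $\alpha'_{1}<\alpha_{1}$ under scarcity $\beta\ge 1$, and mirror the three worst-case instances used in the proof of Theorem~\ref{thm:priFcn1_opt}: (i) the construction from Claim~\ref{clm:constFcn1} forcing $P'_{1}\left(\rho\right)=\underline{p}$ on $\left[0,1/\alpha'_{1}\right]$, which involves total demand at most $1$; (ii) the instance at an intermediate $\rho\mbox{*}=\rho_{1}\in\left(1/\alpha'_{1},1\right)$ inherited from Lemma~\ref{lem:priFcn1_nconst_CR}, which carries total demand $\rho_{1}+1\le 2$; and (iii) the boundary instance at $\rho\mbox{*}=1$, in which an initial sequence of price-matching users fills the cloud (total demand $1$) and a subsequent batch of per-unit-value-$\overline{p}$ users is also issued (total demand $1$), giving combined demand at most $2$. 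Each instance satisfies $\sum_{i\in\mathcal{U}}d_{i}\le 1+\beta$ as soon as $\beta\ge 1$, so the contradictions drawn in Theorem~\ref{thm:priFcn1_opt} go through unchanged, forcing $\alpha'_{1}\ge\alpha_{1}$.

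The main obstacle is the boundary case $\rho\mbox{*}=1$, because it is precisely there that the full budget $2=1+\beta$ is consumed when $\beta=1$; this is what pins $\beta\ge 1$ down as the tight threshold for the reduction, and it also foreshadows why the case $\beta<1$ genuinely requires a different pricing function (treated separately later in the paper). Every other step is a routine admissibility check on top of the earlier results, so once the $\rho\mbox{*}=1$ instance is verified to fit within the scarcity bound, the theorem follows by direct appeal to Theorems~\ref{thm:priFcn1_cr} and~\ref{thm:priFcn1_opt}.
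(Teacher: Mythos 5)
Your proposal is correct and takes essentially the same route as the paper: the paper's proof simply observes that pruning users served by neither the online nor the offline solution leaves any instance with total demand at most $2$, hence admissible whenever $\beta\ge 1$, which is precisely the admissibility check you carry out instance-by-instance for the constructions in Lemma~\ref{lem:priFcn1_nconst_CR}, Claim~\ref{clm:constFcn1} and Theorem~\ref{thm:priFcn1_opt}, combined with the trivial monotonicity of the worst-case ratio under restriction of the instance space. (One small slip: the second instance in Claim~\ref{clm:constFcn1} has total demand $\rho_{0}+1\le 1+1/\alpha'_{1}\le 2$ rather than at most $1$, but this does not affect your conclusion.)
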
}
\begin{proof}
	For any possible input set of users, we can prune the users that can neither be satisfied by the online solution, nor by the optimal offline solution, without affecting the online or offline social welfare, given a certain pricing function. Clearly, the resulting set of users has a total demand no greater than $2$, which can also happen given any $\beta\geq 1$. Consequently, all the discussions above 
	 can be generalized to $\beta\geq 1$.
\end{proof}

The following property (which holds for all $\beta\geq 1$) is useful for guiding the design of pricing functions in more realistic cloud computing scenarios.

\begin{property}\label{prop:stableCR}
	For the pricing function in \eqref{eq:priFcn1}, and any $\rho\mbox{*}\in\left(1/\alpha_{1},1\right]$, \ie, the monotonically increasing part of $P_{1}\left(\rho\right)$, we have
	\begin{equation}
	\sup_{\epsilon>0}V_{opt}\left(\rho\mbox{*}\right)=\alpha_{1}V_{ol}\left(\rho\mbox{*}\right),\label{eq:stableCR}
	\end{equation}
	and hence
	\begin{equation}
	\frac{d\sup_{\epsilon>0}V_{opt}\left(\rho\mbox{*}\right)}{d\rho\mbox{*}}=\alpha_{1}\frac{dV_{ol}\left(\rho\mbox{*}\right)}{d\rho\mbox{*}},\label{eq:stableCR_diff}
	\end{equation}
	and a constant (w.r.t. $\rho\mbox{*}$) worst-case competitive ratio, $\alpha_{1}$.
\end{property}
\begin{proof}
	The proof is a corollary that follows from Eq.~\eqref{eq:V_ol1}, \eqref{eq:V_opt1}, Lemma \ref{lem:priFcn1_nconst_CR} and Theorem~\ref{lem:priFcn1_nconst_CR}.
\end{proof}

Property~\ref{prop:stableCR} is illustrated in Fig.~\ref{fig:priFcn1_auc}, where the light red area 
corresponds to $\Big.\frac{dV_{opt}\left(\rho\mbox{*}\right)}{d\rho\mbox{*}}\Big|_{\rho\mbox{*}=0.7}$, and the light blue area corresponds to $\Big.\frac{dV_{ol}\left(\rho\mbox{*}\right)}{d\rho\mbox{*}}\Big|_{\rho\mbox{*}=0.7}$. Intuitively, this property implies the best trade-off between the worst-case competitive ratios corresponding to different $\rho\mbox{*}$ values. That is, any changes to the pricing function in \eqref{eq:priFcn1} that may decrease the competitive ratio for some $\rho\mbox{*}$, will unavoidably increase the competitive ratio for some other $\rho\mbox{*}$, and thus can only lead to a worse competitive ratio over all possible values of $\rho\mbox{*}$. 

\subsubsection{Pricing Function for Small Total Demand}
In the case that $\beta\in\left(-1,0\right]$, the total resource demand is no larger than the total resource supply. The optimal strategy is simply serving all user demands by setting a unit resource price below the smallest per-unit-resource valuation of cloud users. 
\textcolor{black}{
\begin{theorem}\label{thm:betaLeq_1}
For $\beta\in\left(-1,0\right]$, pricing function 
\begin{equation}\label{eq:priFcn4}
P_{4}\left(\rho\right)=\underline{p}
\end{equation}
is optimal according to Definition~\ref{def:optFcn}, and the corresponding worst-case competitive ratio achieved by Algorithm \ref{alg:postedPrice} is $1$.
\end{theorem}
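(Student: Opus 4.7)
The plan is to dispose of this theorem with two short observations: (i) under $P_4(\rho)=\underline{p}$, Algorithm~\ref{alg:postedPrice} admits every arriving user, so it matches the offline optimum on every instance with $\beta \in (-1,0]$; and (ii) a competitive ratio of $1$ is a trivial lower bound for any online algorithm, so $P_4$ is optimal in the sense of Definition~\ref{def:optFcn}.

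For step (i), I would check the two acceptance conditions inside the \textbf{if} branch of Algorithm~\ref{alg:postedPrice}. The price condition $v_i \geq d_i P_4(\rho) = d_i \underline{p}$ follows directly from Assumption~\ref{asmp:varCstr}, which guarantees $v_i/d_i \geq \underline{p}$ for every user. The capacity condition $\rho + d_i \leq 1$ follows from Definition~\ref{def:limitedBeta} together with $\beta \leq 0$: since $\rho$ at any moment equals the sum of $d_j$ over already-admitted users, we have
\begin{equation*}
\rho + d_i \;\leq\; \sum_{j \in \mathcal{U}} d_j \;\leq\; 1 + \beta \;\leq\; 1 .
\end{equation*}
Hence $x_i = 1$ for every $i \in \mathcal{U}$, giving $V_{ol} = \sum_{i \in \mathcal{U}} v_i$. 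Since an offline solution can likewise admit every user (total demand fits in the capacity), we also have $V_{opt} = \sum_{i \in \mathcal{U}} v_i$, and the ratio $V_{opt}/V_{ol} = 1$ on every instance. Taking a worst case over instances still yields $1$.

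For step (ii), I would note that for any pricing function $P$ and any instance, $V_{opt} \geq V_{ol}(P)$ by definition of the offline optimum, so the worst-case competitive ratio of any pricing function is at least $1$. Because $P_4$ attains this lower bound, no pricing function can achieve a smaller worst-case competitive ratio, establishing optimality per Definition~\ref{def:optFcn}.

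There is essentially no obstacle here: the only thing to be careful about is that $\beta$ lies in $(-1,0]$ is exactly what makes the capacity constraint redundant, whereas Assumption~\ref{asmp:varCstr} is what makes the constant price $\underline{p}$ universally acceptable. Once these two are lined up, the result is immediate and does not require the delicate worst-case construction used in Lemma~\ref{lem:priFcn1_nconst_CR} and Theorem~\ref{thm:priFcn1_opt}.
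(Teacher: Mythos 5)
Your proof is correct, and since the paper omits this proof as ``straightforward,'' your argument is exactly the intended one: $\beta\le 0$ makes the capacity check $\rho+d_i\le 1$ vacuous, Assumption~\ref{asmp:varCstr} makes the constant price $\underline{p}$ acceptable to every user, so the online solution serves everyone and equals the offline optimum, while the trivial lower bound $V_{opt}\ge V_{ol}$ (the online allocation being feasible for problem~\eqref{eq:prob1}) gives optimality of $P_4$ under Definition~\ref{def:optFcn}. No gaps; nothing further is needed.
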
}

The proof is straightforward and hence omitted.

\subsubsection{Pricing Function for Total Demand Up to Twice of Supply}
In the case that $\beta\in\left(0,1\right)$, we first derive pricing functions that have Property~\ref{prop:stableCR}, and then prove the optimality of the functions. In the following derivation of the pricing functions, we assume that all pricing functions are continuous and non-decreasing, for the solution existence of our differential equations. However, the assumptions are not required by the proof of optimality. The following claim will be useful for the derivation.
\begin{claim}\label{clm:constFcn2}
	For any $\beta>-1$, if a pricing function $P\left(\rho\right)$ leads to a finite worst-case competitive ratio of $\alpha$, then $P\left(\rho\right)\allowbreak=\underline{p}, \forall\rho\in\left[0,1/\alpha\right]$.
\end{claim}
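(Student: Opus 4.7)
The plan is to mirror the two-case contradiction argument from the proof of Claim~\ref{clm:constFcn1}, trimming each adversarial instance so that its aggregate demand respects $\sum_{i\in\mathcal{U}} d_i\le 1+\beta$. In Case~1, $P(0)>\underline{p}$, the original argument is unchanged: a single user with demand $d\ll 1$ and value density picked in $(\underline{p},P(0))$ is rejected online (price $P(0)$ strictly exceeds density) but accepted offline for positive value, giving $V'_{opt}/V'_{ol}=+\infty$ and contradicting finiteness of $\alpha$. Because $d$ may be taken arbitrarily small, the instance is valid for every $\beta>-1$.

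For Case~2, $P(0)=\underline{p}$ but some $\rho_0\in(0,1/\alpha]$ satisfies $P(\rho_0)>P(\rho)$ for all $\rho\in[0,\rho_0)$, I would construct a two-phase adversary. Phase~1 sends a stream of infinitesimal users whose densities track the posted price $P(\rho)$ as the online utilization sweeps from $0$ up to $\rho_0$, so each is just barely accepted and $V'_{ol}(\rho_0)=\int_0^{\rho_0} P(\rho)\,d\rho$. Phase~2 then sends users of density $P(\rho_0)-\epsilon$ with aggregate demand $D_2=\min\{1,\,1+\beta-\rho_0\}$, all of which the online algorithm rejects because its posted price has already reached $P(\rho_0)>P(\rho_0)-\epsilon$. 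The offline optimum greedily fills its unit of capacity by taking all of Phase~2 and topping up with the highest-density Phase~1 users, yielding
\[
\frac{V'_{opt}(\rho_0)}{V'_{ol}(\rho_0)}\;\ge\;\frac{D_2\bigl(P(\rho_0)-\epsilon\bigr)+\int_{\rho_0-(1-D_2)}^{\rho_0} P(\rho)\,d\rho}{\int_0^{\rho_0} P(\rho)\,d\rho}.
\]
The strict inequality $\int_0^{\rho_0} P(\rho)\,d\rho<\rho_0\,P(\rho_0)$ (which follows from $P(\rho_0)>P(\rho)$ on the positive-measure set $[0,\rho_0)$) then lets me push the ratio strictly above $\alpha$ by letting $\epsilon\downarrow 0$ and invoking $\rho_0\le 1/\alpha$, exactly as in Claim~\ref{clm:constFcn1}. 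For $\beta\ge 1$ the choice $D_2=1$ recovers the $V'_{opt}=P(\rho_0)-\epsilon$ specialization used there verbatim.

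The main obstacle will be the small-$\beta$ regime, in which $D_2=1+\beta-\rho_0$ is close to zero and Phase~2 on its own is too weak to beat $\alpha$. I would resolve this by relying on the leftover-capacity term $\int_{\rho_0-(1-D_2)}^{\rho_0} P(\rho)\,d\rho$ in the estimate above, whose combination with the strict gap at $\rho_0$ is still enough to force a strict contradiction; the calculation parallels the $D_2=1$ case but keeps both contributions to $V'_{opt}$. In the extreme $\rho_0>1+\beta$ the online algorithm can never reach $\rho_0$, so the value $P(\rho_0)$ is never queried and the claim reduces to asserting $P(\rho)=\underline{p}$ on the reachable portion $[0,\min\{1/\alpha,1+\beta\}]$, which the two-phase construction already covers.
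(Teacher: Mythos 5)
For $\beta>0$ your construction is sound, and it supplies exactly the detail the paper omits when it says the proof is ``similar to that of Claim~\ref{clm:constFcn1}.'' The $P(0)>\underline{p}$ case carries over verbatim; in the other case, when $\rho_{0}>\beta$ your capped Phase~2 has $D_{2}=1+\beta-\rho_{0}$, the top-up window is exactly $\left[\beta,\rho_{0}\right]$, and after multiplying by $\rho_{0}$ the inequality you need is
\begin{equation*}
\rho_{0}\left(1+\beta-\rho_{0}\right)P\left(\rho_{0}\right)>\int_{0}^{\beta}P\left(\rho\right)d\rho+\left(1-\rho_{0}\right)\int_{\beta}^{\rho_{0}}P\left(\rho\right)d\rho ,
\end{equation*}
which indeed holds: since $P\left(\rho\right)<P\left(\rho_{0}\right)$ on $\left[0,\rho_{0}\right)$ and $\rho_{0}\leq1/\alpha\leq1$, the right-hand side is strictly below $\bigl[\beta+\left(1-\rho_{0}\right)\left(\rho_{0}-\beta\right)\bigr]P\left(\rho_{0}\right)=\rho_{0}\left(1+\beta-\rho_{0}\right)P\left(\rho_{0}\right)$, so the adversary's ratio exceeds $1/\rho_{0}\geq\alpha$, the desired contradiction; for $\rho_{0}\leq\beta$ (in particular $\beta\geq1$) you are back to Claim~\ref{clm:constFcn1}. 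So on the range where the claim is actually invoked in Sec.~3.2.3 (Cases 1 and 2, $\beta>0$), your proof is correct and more explicit than the paper's.

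The weak spot is the $\beta\in\left(-1,0\right]$ branch, and it is a genuine one. There $1+\beta\leq1$, the offline solution already absorbs every Phase-1 user, and your estimate only gives $\sup_{\epsilon>0}V'_{opt}/V'_{ol}>\left(1+\beta\right)/\rho_{0}$, which forces $P=\underline{p}$ only on $\left[0,\left(1+\beta\right)/\alpha\right]$ --- strictly short of the reachable portion $\left[0,\min\left\{1/\alpha,1+\beta\right\}\right]$ that your last sentence asserts is ``already covered.'' No patch can close this gap, because the claim as stated is false for negative $\beta$: take $\beta=-0.9$, $\gamma=5$, and $P$ equal to $\underline{p}$ on $\left[0,0.05\right)$ and to any value above $\overline{p}$ afterwards. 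At most $0.05$ units of demand can ever be rejected (at density at most $\overline{p}$), and whenever something is rejected the online value is at least $0.05\,\underline{p}$, so the worst-case ratio is $\alpha=1+\gamma=6$; yet $P\neq\underline{p}$ on $\left[0.05,1/6\right]\subseteq\left[0,1/\alpha\right]$, part of which lies inside the reachable region $\left[0,0.1\right]$. To be fair, the paper's own disposal of this branch (``follows immediately from Theorem~\ref{thm:betaLeq_1}'') does not establish it either, and the claim is only ever applied for $\beta>0$, where your argument (and the paper's omitted one) stands; the honest fix is to restrict the claim to $\beta>0$, or to weaken its conclusion for $\beta\leq0$ to $P\left(\rho\right)=\underline{p}$ on $\left[0,\left(1+\beta\right)/\alpha\right]$.
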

\begin{proof}
	For $\beta>0$, the proof is similar to that of Claim~\ref{clm:constFcn1} and is omitted. For $\beta\in\left(-1,0\right]$, the claim follows immediately from Theorem~\ref{thm:betaLeq_1}.
\end{proof}

Our derivation of the pricing function is further divided into two cases.

\vspace{2mm}
\noindent {\bf Case 1:} $\beta\in\left(\beta_{0},1\right)$ where $\beta_{0}\in\left(0,1\right)$, 
 such that $\beta>1/\alpha_{2}$
 and $\alpha_{2}$ is the worst-case competitive ratio achieved using the optimal pricing function for $\beta\in\left(\beta_{0},1\right)$. 
 According to Claim~\ref{clm:constFcn2}, the pricing function $P_{2}\left(\rho\right)=\underline{p},\forall\rho\in\left[0,1/\alpha_{2}\right]$. When $\rho\mbox{*}\in\left(1/\alpha_{2},1\right)$, as discussed for Eq.~\eqref{eq:V_ol1}, the minimum total value of an online solution is
\begin{equation}
V_{ol}\left(\rho\mbox{*}\right)=\int_{0}^{\rho\mbox{*}}P_{2}\left(\rho\right)d\rho,\label{eq:V_ol2}
\end{equation}
and hence
\begin{equation}
\frac{dV_{ol}\left(\rho\mbox{*}\right)}{d\rho\mbox{*}}=\frac{d\left(\int_{0}^{\rho\mbox{*}}P_{2}\left(\rho\right)d\rho\right)}{d\rho\mbox{*}}=P_{2}\left(\rho\mbox{*}\right),\label{eq:dVol/dRho}
\end{equation}
which is illustrated by the light blue area in Fig.~\ref{fig:priFcn2_auc}.
Since $P_{2}\left(\rho\right)$ is non-decreasing, when $\rho\mbox{*}\in\left(1/\alpha_{2},\beta\right]$, we still have $	V_{opt}\left(\rho\mbox{*}\right)=P_{2}\left(\rho\mbox{*}\right)-\epsilon$ as discussed for Eq.~\eqref{eq:V_opt1}, where $\epsilon$ is an arbitrarily small positive value. Thus
\begin{equation}
\frac{d\sup_{\epsilon>0}V_{opt}\left(\rho\mbox{*}\right)}{d\rho\mbox{*}}=\frac{dP_{2}\left(\rho\mbox{*}\right)}{d\rho\mbox{*}}.\label{eq:dVopt/dRho-rhoLeqBeta}
\end{equation}
It follows from Eq. \eqref{eq:stableCR_diff}, \eqref{eq:dVol/dRho} and \eqref{eq:dVopt/dRho-rhoLeqBeta} that
\begin{equation}
\frac{dP_{2}\left(\rho\right)}{d\rho}-\alpha_{2}P_{2}\left(\rho\right)=0.\label{eq:diffEq-exp}
\end{equation}
Solving the differential equation above gives $P_{2}\left(\rho\right)=Ce^{\alpha_{2}\rho}$, where $C$ is a constant to be determined. Since we assumed the continuity of $P_{2}\left(\rho\right)$, we let $\lim_{\rho\rightarrow1/\alpha_{2}+}P_{2}\left(\rho\right)=P_{2}\left(1/\alpha_{2}\right)=\underline{p}$, and then we obtain $C=\underline{p}/e$, and $P_{2}\left(\rho\right)=\underline{p}e^{\alpha_{2}\rho-1},\forall\rho\in\left(1/\alpha_{2},\beta\right]$.

When $\rho\mbox{*}\in\left(\beta,1\right)$, having a set of users with a unit value of $P_{2}\left(\rho\mbox{*}\right)-\epsilon$ to consume 
 all resource is no longer possible in the worst case. Instead, there can be a set of unsatisfied users with a total demand of $1+\beta-\rho\mbox{*}$, and with a unit value of $P_{2}\left(\rho\mbox{*}\right)-\epsilon$, such that the optimal offline solution yields the maximum optimal offline total value given Eq.~\eqref{eq:V_ol2}:
\begin{equation}
V_{opt}\left(\rho\mbox{*}\right)=\left(1+\beta-\rho\mbox{*}\right)\left(P_{2}\left(\rho\mbox{*}\right)-\epsilon\right)+\int_{\beta}^{\rho\mbox{*}}P_{2}\left(\rho\right)d\rho,\label{eq:Vopt-rhoGtBeta}
\end{equation}
as shown by the red and yellow AUCs (partially covered by the blue one) in Fig.~\ref{fig:priFcn2_auc}. We have
\begin{equation}
\frac{d\sup_{\epsilon>0}V_{opt}\left(\rho\mbox{*}\right)}{d\rho\mbox{*}}=\left(1+\beta-\rho\mbox{*}\right)\frac{dP_{2}\left(\rho\mbox{*}\right)}{d\rho\mbox{*}},\label{eq:dVopt/dRho-rhoGtBeta}
\end{equation}
which is illustrated by the light red areas in Fig.~\ref{fig:priFcn2_auc}. Note that, there can be a case which leads to a larger optimal offline total value, by increasing the value corresponding to $\rho\in\left[\beta,\rho\mbox{*}\right]$ (\ie, the yellow AUC in Fig.~\ref{fig:priFcn2_auc}). Suppose the increased optimal offline total value is $V_{opt}\left(\rho\mbox{*}\right)+\Delta$ ($\Delta>0$), the online total value will also be increased to $V_{ol}\left(\rho\mbox{*}\right)+\Delta$. However, since the competitive ratio now changes to $\sup_{\epsilon>0}\frac{V_{opt}\left(\rho\mbox{*}\right)+\Delta}{V_{ol}\left(\rho\mbox{*}\right)+\Delta}<\sup_{\epsilon>0}\frac{V_{opt}\left(\rho\mbox{*}\right)}{V_{ol}\left(\rho\mbox{*}\right)}$, it cannot be the worst case.

It follows from Eq. \eqref{eq:stableCR_diff}, \eqref{eq:dVol/dRho} and \eqref{eq:dVopt/dRho-rhoGtBeta} that
\begin{equation}
\left(1+\beta-\rho\right)\frac{dP_{2}\left(\rho\right)}{d\rho}-\alpha_{2}P_{2}\left(\rho\right)=0.\label{eq:diffEq-pow}
\end{equation}
Solving the differential equation above gives $P_{2}\left(\rho\right)=C(1+\beta-\rho)^{-\alpha_{2}}$, where $C$ is a constant to be determined. Again, due to the continuity of $P_{2}\left(\rho\right)$, we let $\lim_{\rho\rightarrow\beta+}\allowbreak P_{2}\left(\rho\right)=P_{2}\left(\beta\right)=\underline{p}e^{\alpha_{2}\beta-1}$. Then we obtain $C=\underline{p}e^{\alpha_{2}\beta-1}$, and $P_{2}\left(\rho\right)=\underline{p}e^{\alpha_{2}\beta-1}\allowbreak\left(1+\beta-\rho\right)^{-\alpha_{2}},\forall\rho\in\left(\beta,1\right]$. To have a constant competitive ratio at $\rho\mbox{*}=1-$ and $\rho\mbox{*}=1$, as suggested by Property \ref{prop:stableCR}, we let $P_{2}\left(1\right)=\underline{p}e^{\alpha_{2}\beta-1}\beta^{-\alpha_{2}}=\overline{p}=\gamma\underline{p}$, which leads to
\begin{equation}
\alpha_{2}=\frac{\log\gamma+1}{\beta-\log\beta}.\label{eq:alpha_2}
\end{equation}

To obtain the value of $\beta_{0}$, let $\beta=\beta_{0}=1/\alpha_{2}$. By Eq. \eqref{eq:alpha_2}, we obtain
\begin{equation}
	\beta_{0}=\frac{W\left(\log\gamma\right)}{\log\gamma}.\label{eq:beta_0}
\end{equation}
Here, $W\left(\cdot\right)$ is the Lamber $W$-function ({\em a.k.a.}~the omega function or the product logarithm), which is the inverse function of $f\left(W\right)=We^{W}$. Therefore, for $\beta\in\left(\beta_{0},1\right)$, the pricing function is
\begin{equation}
	P_{2}\left(\rho\right)=\begin{cases}
	\underline{p}, & \rho\in\left[0,1/\alpha_{2}\right]\\
	\underline{p}e^{\alpha_{2}\rho-1}, & \rho\in\left(1/\alpha_{2},\beta\right]\\
	\underline{p}e^{\alpha_{2}\beta-1}\left(1+\beta-\rho\right)^{-\alpha_{2}}, & \rho\in\left(\beta,1\right)\\
	+\infty, & \rho=1
	\end{cases}.\label{eq:priFcn2}
\end{equation}
An example of $P_{2}\left(\rho\right)$ is shown in Fig.~\ref{fig:priFcn2_curv} by the dashed line corresponding to $\beta=0.5$, where $\beta_{0}=0.399$. In practice, $\beta$ can be estimated or optimized against competitive ratios.

\begin{figure}[!t]
	\begin{centering}
		\subfloat[The worst-case $V_{ol}(\rho\mbox{*})$ and $V_{opt}(\rho\mbox{*})$
		for $\beta=0.5$, $\rho\mbox{*}=0.7$ visualized by AUCs.]{
			\includegraphics[width=0.45\textwidth]{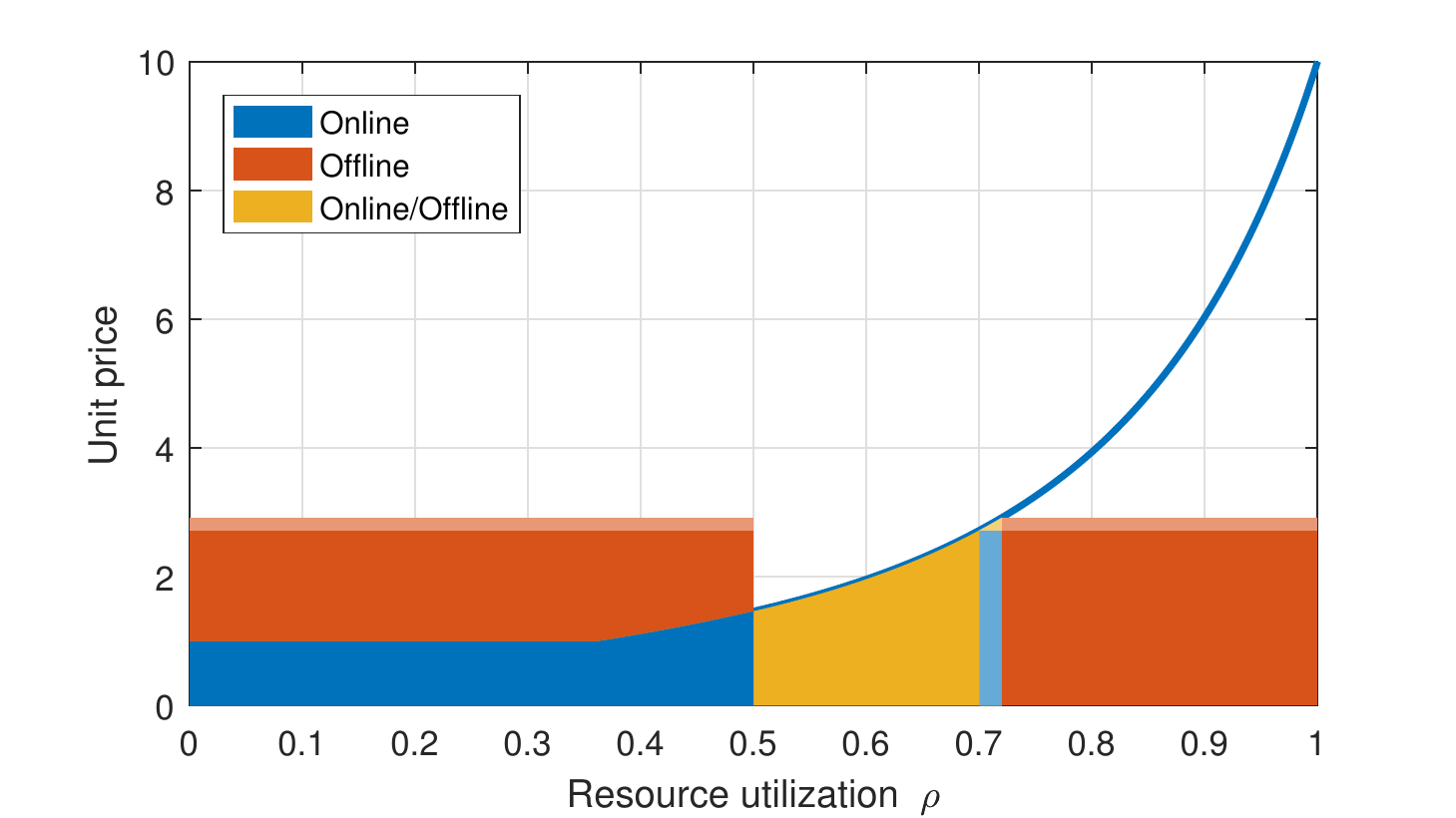}
			\label{fig:priFcn2_auc}}
		\par\end{centering}
	\vspace*{-10pt}
	\begin{centering}
		\subfloat[Pricing functions and competitive ratios for different values of $\beta$.]{
			\includegraphics[width=0.45\textwidth]{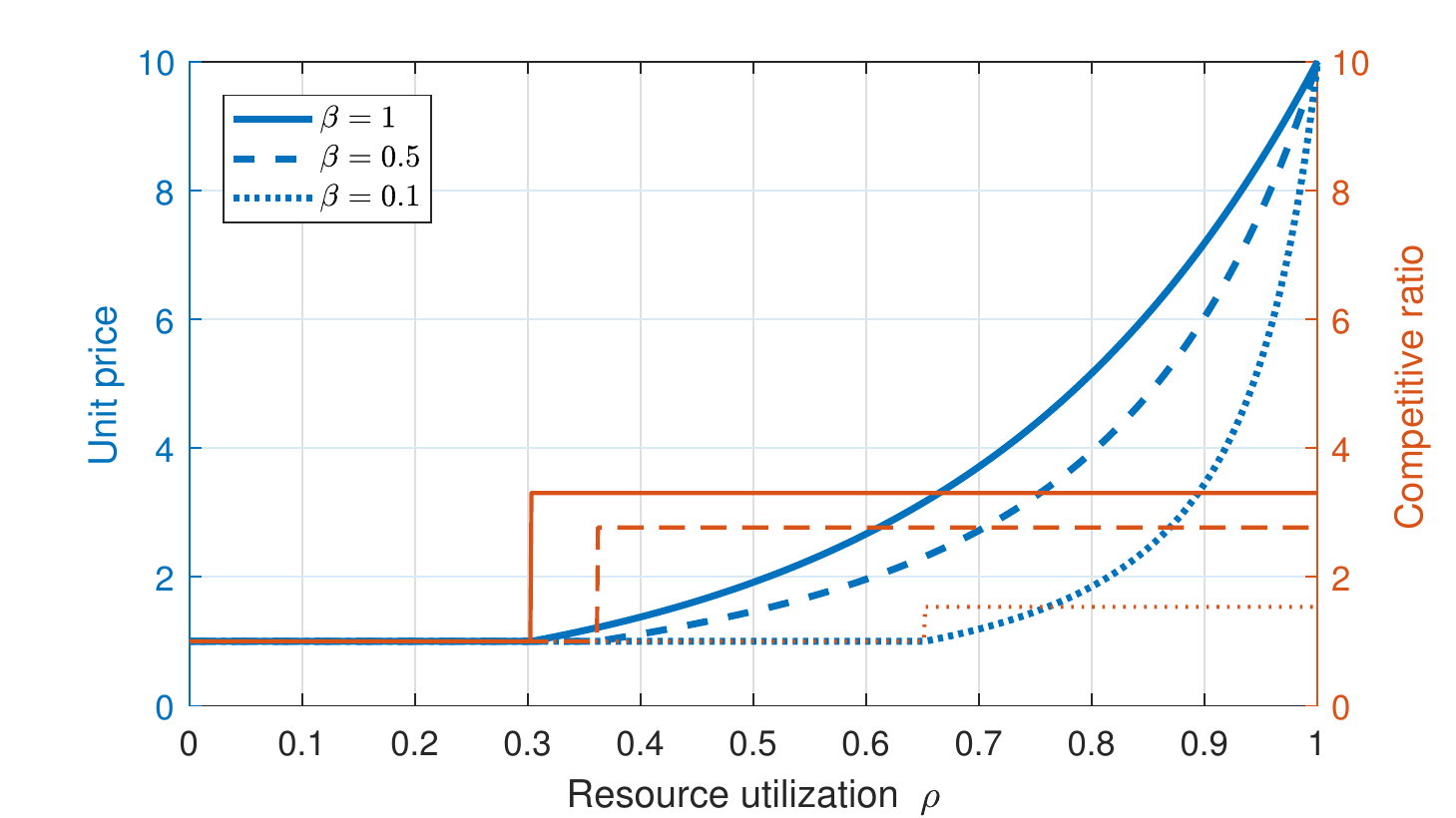}
			\label{fig:priFcn2_curv}}
		\par\end{centering}
	\vspace*{0pt}
	\caption{Pricing functions and competitive ratios for 
	$\beta\in (0,1)$, $\underline{p}=1$, $\overline{p}=10$.}
	\label{fig:priFcn2}
\end{figure}

\textcolor{black}{
\begin{theorem}\label{thm:betaGeq_b0}
	For $\beta\in\left(\beta_{0},1\right)$, the pricing function in \eqref{eq:priFcn2} is optimal according to Definition~\ref{def:optFcn}, and the corresponding worst-case competitive ratio is $\alpha_{2}$.
\end{theorem}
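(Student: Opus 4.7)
The plan is to prove Theorem~\ref{thm:betaGeq_b0} in two parts, mirroring the structure of Theorem~\ref{thm:priFcn1_opt}. Part~1 verifies that Algorithm~\ref{alg:postedPrice} with \eqref{eq:priFcn2} attains worst-case ratio $\alpha_{2}$; Part~2 shows no pricing function can do strictly better. Part~1 essentially reads off the derivation of $P_{2}$: on $[0,1/\alpha_{2}]$ the price is $\underline{p}$, so every arriving user is served and the ratio is $1\le\alpha_{2}$; on $(1/\alpha_{2},\beta]$, the ODE \eqref{eq:diffEq-exp} was set up precisely so that $\sup_{\epsilon>0}V_{opt}(\rho\mbox{*})=\alpha_{2}V_{ol}(\rho\mbox{*})$; the analogous construction \eqref{eq:diffEq-pow} gives the same identity on $(\beta,1)$; and the boundary condition $\lim_{\rho\to1-}P_{2}(\rho)=\overline{p}$ used to pin down $\alpha_{2}$ in \eqref{eq:alpha_2} extends the identity to $\rho\mbox{*}=1$. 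Taking the supremum over $\rho\mbox{*}$ gives worst-case ratio exactly $\alpha_{2}$.

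For Part~2, suppose toward contradiction that some admissible pricing function $P'_{2}$ attains worst-case ratio $\alpha'_{2}<\alpha_{2}$. By Claim~\ref{clm:constFcn2}, $P'_{2}\equiv\underline{p}$ on $[0,1/\alpha'_{2}]$. Since $1/\alpha'_{2}>1/\alpha_{2}$ and $P_{2}>\underline{p}$ on $(1/\alpha_{2},1/\alpha'_{2}]$, there is already an initial deficit $\int_{0}^{1/\alpha'_{2}}P'_{2}<\int_{0}^{1/\alpha'_{2}}P_{2}$. I would then split into two subcases depending on whether $P'_{2}$ ever catches $P_{2}$ on $(1/\alpha'_{2},1)$.

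In Subcase~A, some $\rho_{1}\in(1/\alpha'_{2},1)$ satisfies $P'_{2}(\rho_{1})\ge P_{2}(\rho_{1})$; pick the smallest such. By minimality, $P'_{2}\le P_{2}$ on $[0,\rho_{1})$ with strict inequality on $(1/\alpha_{2},\rho_{1})$, so $V'_{ol}(\rho_{1})<V_{ol}(\rho_{1})$. For $\rho_{1}\le\beta$ the worst-case $V_{opt}$ reduces to $P(\rho_{1})-\epsilon$ as in Eq.~\eqref{eq:V_opt1}, and $V'_{opt}(\rho_{1})\ge V_{opt}(\rho_{1})$ gives the contradiction directly, exactly as in Theorem~\ref{thm:priFcn1_opt}. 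The main obstacle is the case $\rho_{1}>\beta$, where $V_{opt}$ now also contains $\int_{\beta}^{\rho_{1}}P$, so a purely pointwise comparison is no longer enough. The key trick I would use is the identity $(1+\beta-\rho_{1})P_{2}(\rho_{1})=\alpha_{2}\int_{0}^{\beta}P_{2}+(\alpha_{2}-1)\int_{\beta}^{\rho_{1}}P_{2}$, obtained by rewriting $\sup_{\epsilon>0}V_{opt}(\rho_{1})=\alpha_{2}V_{ol}(\rho_{1})$ from Property~\ref{prop:stableCR}. Combining it with $P'_{2}(\rho_{1})\ge P_{2}(\rho_{1})$ and the strict integral inequalities $\int_{0}^{\beta}P'_{2}<\int_{0}^{\beta}P_{2}$ and $\int_{\beta}^{\rho_{1}}P'_{2}<\int_{\beta}^{\rho_{1}}P_{2}$, together with $\alpha_{2}>1$, yields $V'_{opt}(\rho_{1})-\alpha_{2}V'_{ol}(\rho_{1})>0$, which contradicts $\alpha'_{2}<\alpha_{2}$.

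In Subcase~B, $P'_{2}<P_{2}$ on all of $(1/\alpha'_{2},1)$, so $P'_{2}\le P_{2}$ globally with strict inequality on positive-measure subsets of both $(1/\alpha_{2},\beta]$ and $(\beta,1)$. I would evaluate at $\rho\mbox{*}=1$: the constraint $V'_{opt}(1)\le\alpha'_{2}V'_{ol}(1)$ rearranges to $\beta\overline{p}\le\alpha'_{2}\int_{0}^{\beta}P'_{2}+(\alpha'_{2}-1)\int_{\beta}^{1}P'_{2}$. Since $\alpha'_{2}<\alpha_{2}$ (and $\alpha'_{2}>1$ under our hypotheses) and both integrals are strictly smaller than those for $P_{2}$, the right-hand side is strictly less than $\alpha_{2}\int_{0}^{\beta}P_{2}+(\alpha_{2}-1)\int_{\beta}^{1}P_{2}$, which equals $\beta\overline{p}$ exactly by the boundary choice used to derive $\alpha_{2}$ in \eqref{eq:alpha_2}. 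This contradiction closes the proof.
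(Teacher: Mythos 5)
Your proposal is correct and takes essentially the same route as the paper: a contradiction argument built on Claim~\ref{clm:constFcn2}, the first point where $P'_2$ catches $P_2$, and a separate treatment of the exhaustion case $\rho\mbox{*}=1$, using the same worst-case constructions as in Eq.~\eqref{eq:V_opt1} and \eqref{eq:Vopt-rhoGtBeta}. The only difference is cosmetic: where the paper subtracts the common deficit $\Delta$ from both $V_{opt}$ and $V_{ol}$ and invokes $\frac{V_{opt}-\Delta}{V_{ol}-\Delta}>\alpha_2$, you expand the constant-ratio identity $\left(1+\beta-\rho_1\right)P_2\left(\rho_1\right)=\alpha_2\int_0^\beta P_2\,d\rho+\left(\alpha_2-1\right)\int_\beta^{\rho_1}P_2\,d\rho$ (and its endpoint form $\beta\overline{p}=\alpha_2\int_0^\beta P_2\,d\rho+\left(\alpha_2-1\right)\int_\beta^{1}P_2\,d\rho$) and compare term by term, which is algebraically equivalent.
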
}
\begin{proof}
	The proof of the worst-case competitive ratio $\alpha_{2}$ follows that of Theorem~\ref{thm:priFcn1_cr}, and is omitted.
	
	Suppose there exists a pricing function, $P'_{2}\left(\rho\right)$, that achieves a worst-case competitive ratio $\alpha'_{2}<\alpha_{2}$. According to Claim~\ref{clm:constFcn2} and the proof of Theorem~\ref{thm:priFcn1_opt}, we have
	\[
	\int_{0}^{\beta}P'_{2}\left(\rho\right)d\rho<\int_{0}^{\beta}P_{2}\left(\rho\right)d\rho,
	\]
	where $P_{2}\left(\rho\right)$ is the pricing function in \eqref{eq:priFcn2}.
	
	If there exists some $\rho\in\left(\beta,1\right)$ such that $P'_{2}\left(\rho\right)\geq P_{2}\left(\rho\right)$ we find the smallest one, and denote it by $\rho_{1}$. Then there can be a case where $\rho\mbox{*}=\rho_{1}$, such that the online total value
	\[
	V'_{ol}\left(\rho\mbox{*}\right)=\int_{0}^{\rho_{1}}P'_{2}\left(\rho\right)d\rho<\int_{0}^{\rho_{1}}P_{2}\left(\rho\right)d\rho-\Delta=V_{ol}\left(\rho\mbox{*}\right)-\Delta,
	\]
	where $\Delta=\int_{\beta}^{\rho_{1}}\left[P_{2}\left(\rho\right)-P'_{2}\left(\rho\right)\right]d\rho$; while the optimal offline total value $V'_{opt}\left(\rho\mbox{*}\right)\geq V_{opt}\left(\rho\mbox{*}\right)-\Delta$ according to Eq.~\eqref{eq:Vopt-rhoGtBeta}. Thus the worst-case competitive ratio $\alpha'_{2}\geq\sup_{\epsilon>0}\frac{V'_{opt}\left(\rho\mbox{*}\right)}{V'_{ol}\left(\rho\mbox{*}\right)}>\sup_{\epsilon>0}\allowbreak\frac{V_{opt}\left(\rho\mbox{*}\right)-\Delta}{V_{ol}\left(\rho\mbox{*}\right)-\Delta}>\alpha_{2}$, contradicting the assumption $\alpha'_{2}<\alpha_{2}$. Therefore, $P'_{2}\left(\rho\right)<P_{2}\left(\rho\right),\forall\rho\in\left(\beta,1\right)$.
	
	For $\rho\mbox{*}=1$, since $P'_{2}\left(1\right)\leq\overline{p}$ (a unit price higher than $\overline{p}$ will reject all potential users) is finite, we now have
	\[
	V'_{ol}\left(\rho\mbox{*}\right)=\int_{0}^{1}P'_{2}\left(\rho\right)d\rho<\int_{0}^{1}P_{2}\left(\rho\right)d\rho-\Delta=V_{ol}\left(\rho\mbox{*}\right)-\Delta,
	\]
	where $\Delta=\int_{\beta}^{1}\left[P_{2}\left(\rho\right)-P'_{2}\left(\rho\right)\right]d\rho$. However, as the resource is exhausted, subsequent users will not be satisfied regardless of their valuations. There can be a case where the optimal offline total value $V'_{opt}\left(\rho\mbox{*}\right)=V_{opt}\left(\rho\mbox{*}\right)-\Delta$ according to Eq.~\eqref{eq:Vopt-rhoGtBeta}. Thus the worst-case competitive ratio $\alpha'_{2}\geq\frac{V'_{opt}\left(\rho\mbox{*}\right)}{V'_{ol}\left(\rho\mbox{*}\right)}>\frac{V_{opt}\left(\rho\mbox{*}\right)-\Delta}{V_{ol}\left(\rho\mbox{*}\right)-\Delta}>\alpha_{2}$, contradicting the assumption $\alpha'_{2}<\alpha_{2}$.
\end{proof}

\noindent {\bf Case 2:} $\beta\in\left(0,\beta_{0}\right]$. From the definition of $\beta_{0}$, we have $\beta\leq1/\alpha_{3}$, where $\alpha_{3}$ is the worst-case competitive ratio of the optimal pricing function in this case. According to Claim~\ref{clm:constFcn2}, the pricing function $P_{3}\left(\rho\right)=\underline{p},\forall\rho\in\left[0,1/\alpha_{3}\right]$. When $\rho\mbox{*}\in\left(1/\alpha_{3},1\right)$, $V_{ol}\left(\rho\mbox{*}\right)$ follows Eq.~\eqref{eq:dVol/dRho} with $P_{2}\left(\rho\mbox{*}\right)$ replaced by $P_{3}\left(\rho\mbox{*}\right)$; $V_{opt}\left(\rho\mbox{*}\right)$ follows Eq.~\eqref{eq:Vopt-rhoGtBeta}, \eqref{eq:dVopt/dRho-rhoGtBeta} with $P_{2}\left(\rho\mbox{*}\right)$ replaced by $P_{3}\left(\rho\mbox{*}\right)$. Then, following Eq.~\eqref{eq:diffEq-pow}, we have $P_{3}\left(\rho\right)=C(1+\beta-\rho)^{-\alpha_{3}}$. As discussed for Eq.~\eqref{eq:alpha_2}, we let $\lim_{\rho\rightarrow1/\alpha_{3}+}\allowbreak P_{3}\left(\rho\right)=P_{3}\left(1/\alpha_{3}\right)=\underline{p}$, $P_{3}\left(1\right)=\overline{p}=\gamma\underline{p}$. Solving the resulting equations:
\begin{eqnarray*}
	C\left(1+\beta-\frac{1}{\alpha_{3}}\right)^{-\alpha_{3}} & = & \underline{p},\\
	C\beta^{-\alpha_{3}} & = & \gamma\underline{p},
\end{eqnarray*}
we get
\begin{equation}
	\alpha_{3}=\frac{\log\gamma}{\left(1+\beta\right)\log\gamma-W\left(\beta\gamma^{1+\beta}\log\gamma\right)},\label{eq:alpha_3}
\end{equation}
and the pricing function for $\beta\in\left(0,\beta_{0}\right]$ is:
\begin{equation}
	P_{3}\left(\rho\right)=\begin{cases}
	\underline{p}, & \rho\in\left[0,1/\alpha_{3}\right]\\
	\underline{p}\gamma\beta^{\alpha_{3}}\left(1+\beta-\rho\right)^{-\alpha_{3}}, & \rho\in\left(1/\alpha_{3},1\right)\\
	+\infty, & \rho=1
	\end{cases}.\label{eq:priFcn3}
\end{equation}
An example of $P_{3}\left(\rho\right)$ is shown in Fig.~\ref{fig:priFcn2_curv} by the dashed line corresponding to $\beta=0.2$.
\textcolor{black}{
\begin{theorem}\label{thm:betaGt_0}
	For $\beta\in\left(0,\beta_{0}\right]$, the pricing function in \eqref{eq:priFcn3} is optimal according to Definition~\ref{def:optFcn}, and the corresponding worst-case competitive ratio is $\alpha_{3}$.
\end{theorem}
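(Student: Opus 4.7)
The plan is to follow the template of the proof of Theorem~\ref{thm:betaGeq_b0} almost verbatim, since the derivation of $P_3$ in \eqref{eq:priFcn3} was done precisely so that it would satisfy the analogue of Property~\ref{prop:stableCR} over its non-constant domain. The main simplification here is that, because $\beta\le\beta_0\le 1/\alpha_3$, the pricing function has only two pieces (constant, then power), with no exponential middle segment to handle separately.

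First I would establish the worst-case competitive ratio. By Lemma~\ref{lem:priFcn1_const_CR} restricted to the flat segment $\rho\in[0,1/\alpha_3]$, the ratio there is $1$. For $\rho\mbox{*}\in(1/\alpha_3,1)$, the worst case instance (as in Lemma~\ref{lem:priFcn1_nconst_CR}) has online users whose per-unit valuations track the price, giving $V_{ol}(\rho\mbox{*})=\int_{0}^{\rho\mbox{*}}P_{3}(\rho)d\rho$, while the worst-case offline adversary uses a residual demand of $1+\beta-\rho\mbox{*}$ at unit value $P_3(\rho\mbox{*})-\epsilon$, yielding $V_{opt}(\rho\mbox{*})$ as in Eq.~\eqref{eq:Vopt-rhoGtBeta} (with the middle integral absent since $1/\alpha_3\ge\beta$). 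The differential equation~\eqref{eq:diffEq-pow} is satisfied by construction, so the ratio $V_{opt}(\rho\mbox{*})/V_{ol}(\rho\mbox{*})$ equals $\alpha_3$ throughout $(1/\alpha_3,1)$. The boundary case $\rho\mbox{*}=1$ is handled by matching $P_3(1)=\overline{p}$, which by Eq.~\eqref{eq:alpha_3} again gives ratio $\alpha_3$.

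Next I would prove optimality by contradiction. Suppose $P'_3$ achieves $\alpha'_3<\alpha_3$. By Claim~\ref{clm:constFcn2}, $P'_3(\rho)=\underline{p}$ on $[0,1/\alpha'_3]$; since $1/\alpha'_3>1/\alpha_3$ and $P_3$ is non-decreasing on $[1/\alpha_3,1)$, we get
\[
\int_{0}^{1/\alpha'_3}P'_{3}(\rho)d\rho<\int_{0}^{1/\alpha'_3}P_{3}(\rho)d\rho.
\]
Now I would show $P'_3(\rho)<P_3(\rho)$ on $(1/\alpha'_3,1)$: if the smallest $\rho_1$ in this range with $P'_3(\rho_1)\ge P_3(\rho_1)$ exists, then at $\rho\mbox{*}=\rho_1$ we have $V'_{ol}(\rho\mbox{*})<V_{ol}(\rho\mbox{*})-\Delta$ for $\Delta=\int_{1/\alpha_3}^{\rho_1}[P_3(\rho)-P'_3(\rho)]d\rho$, while $V'_{opt}(\rho\mbox{*})\ge V_{opt}(\rho\mbox{*})-\Delta$ by Eq.~\eqref{eq:Vopt-rhoGtBeta}, yielding $\alpha'_3>\alpha_3$ after taking the supremum over $\epsilon>0$, a contradiction.

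Finally I would close out $\rho\mbox{*}=1$: since $P'_3(1)\le\overline{p}$, strictness on $(1/\alpha'_3,1)$ forces $V'_{ol}(1)<V_{ol}(1)-\Delta$ for some $\Delta>0$, while the adversary can again realize $V'_{opt}(1)=V_{opt}(1)-\Delta$ since the resource is exhausted and $P'_3(1)\le\overline{p}$. This contradicts $\alpha'_3<\alpha_3$. The main obstacle I anticipate is nothing novel at the technical level; the delicate point is only bookkeeping at the junction $\rho=1/\alpha_3$, to confirm that the flat-to-power transition really does leave the adversary no way to simultaneously inflate $V_{opt}$ relative to $V_{ol}$ on both segments, so that the single constant $\alpha_3$ is indeed attained uniformly on $(1/\alpha_3,1]$.
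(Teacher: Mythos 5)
Your proposal is correct and takes essentially the same route as the paper: the paper's own proof of this theorem is omitted with the remark that it is ``similar to that of Theorem~\ref{thm:betaGeq_b0}'', and your argument is exactly that adaptation (Claim~\ref{clm:constFcn2} for the flat segment, the worst case of Eq.~\eqref{eq:Vopt-rhoGtBeta} on the power-law segment, then the crossing-point and $\rho\mbox{*}=1$ contradictions with the $\Delta$-shift). One small clarification: the offline worst case still retains the term $\int_{\beta}^{\rho\mbox{*}}P_{3}\left(\rho\right)d\rho$ from Eq.~\eqref{eq:Vopt-rhoGtBeta} (only the exponential middle piece of the pricing function disappears, not that integral), and it is this term together with the boundary conditions that makes the ratio exactly $\alpha_{3}$ on $\left(1/\alpha_{3},1\right)$; also, with your choice $\Delta=\int_{1/\alpha_{3}}^{\rho_{1}}\left[P_{3}\left(\rho\right)-P'_{3}\left(\rho\right)\right]d\rho$ one gets $V'_{ol}=V_{ol}-\Delta$ rather than a strict inequality, but since $\Delta>0$ the conclusion $\alpha'_{3}>\alpha_{3}$ still follows.
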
}
\begin{proof}
	The proof is similar to that of Theorem~\ref{thm:betaGeq_b0} and is omitted.
\end{proof}

To provide a better understanding on how $\beta\in(0,1)$ affects the competitive ratio as dictated by Theorems~\ref{thm:betaGeq_1}, \ref{thm:betaGeq_b0} and \ref{thm:betaGt_0}, we plot the competitive ratio as a function of $\beta$ in Fig.~\ref{fig:priFcn2_cr}. \textcolor{black}{As shown in the figure, for a certain value of $\gamma$, the competitive ratio decreases with the decrease of $\beta$, and reaches the minimum value $1$ when $\beta$ drops to $0$}.

\begin{figure}[!t]
	\begin{centering}
		\includegraphics[width=0.45\textwidth]{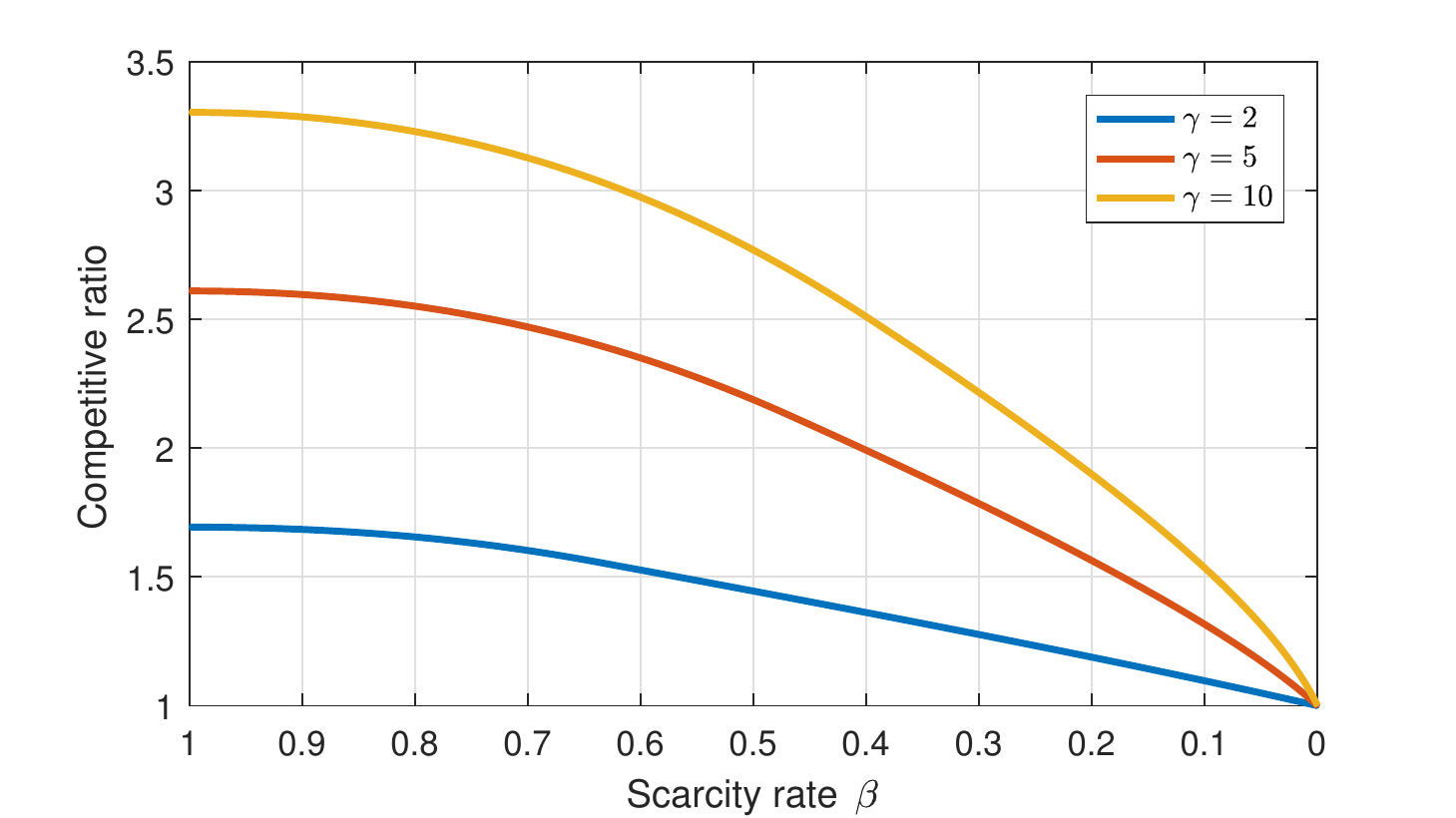}
		\par\end{centering}
	\protect\caption{Competitive ratios for different values of $\beta$ and $\gamma$.}
	\label{fig:priFcn2_cr}
\end{figure}

\vspace{2mm}
Putting Eq.~\eqref{eq:priFcn1}, \eqref{eq:priFcn2} and \eqref{eq:priFcn3} together, we have obtained a 2-dimensional piecewise pricing function, $P\left(\rho;\beta\right)$. An illustration of the pricing function is given in Fig.~\ref{fig:priFcn2_surf}.
\begin{figure}[!t]
	\begin{centering}
		\includegraphics[width=0.45\textwidth]{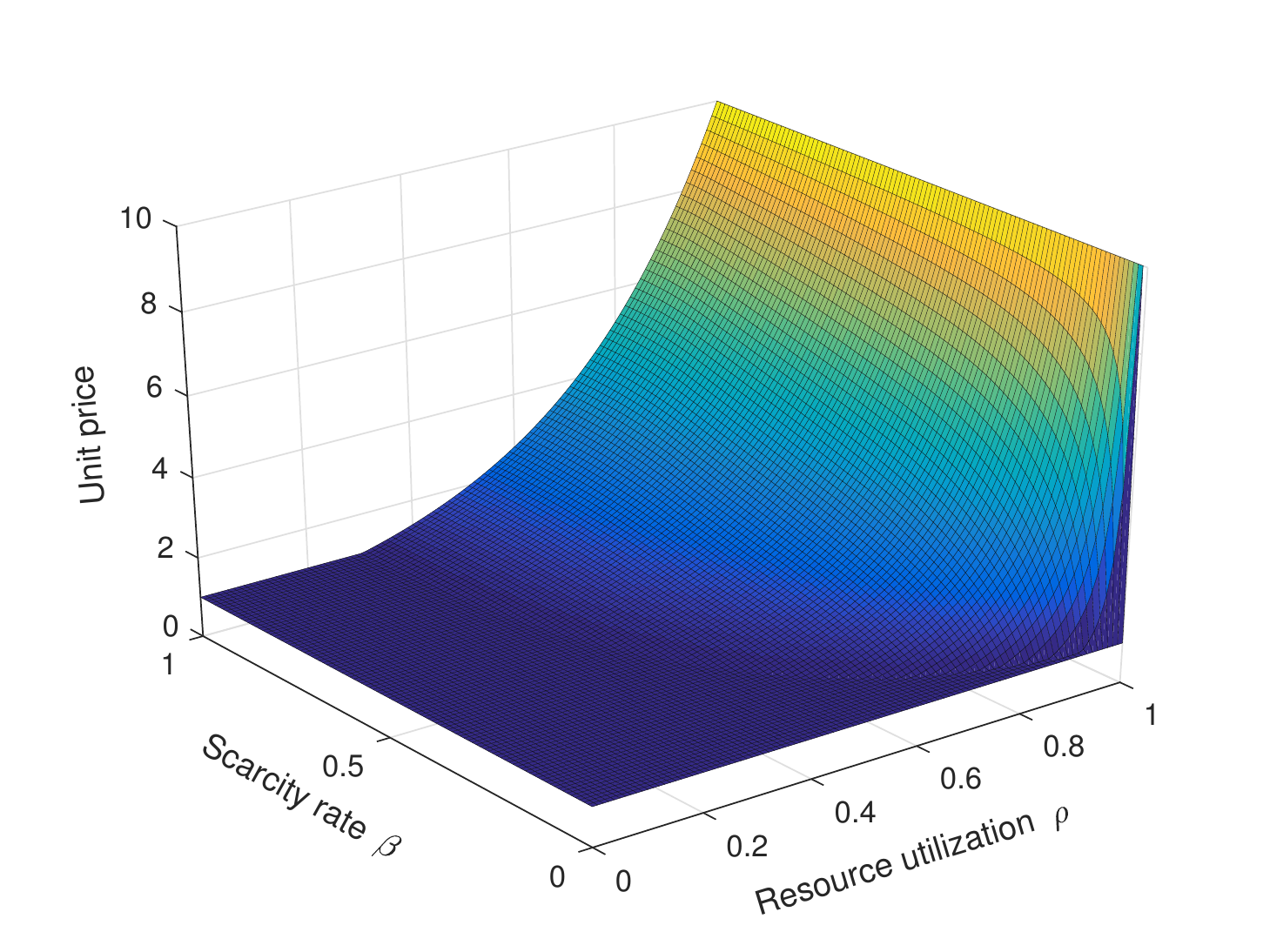}
		\par\end{centering}
	\protect\caption{2-D pricing function $P\left(\rho;\beta\right)$ for $\rho\in\left[0,1\right],\beta\in\left[0,1\right]$.}
	\label{fig:priFcn2_surf}
\end{figure}

\subsection{Linear Operational Cost}
\label{sec:linOpCost}
Resource provisioning in real-world cloud computing systems often incurs an operational cost. If such cost is proportional to the amount of resources provisioned, then we have a linear operational cost \cite{zhang2015online}. We can extend the proposed pricing strategy to accommodate such linear operational cost by making two modifications. First, we replace Assumption~\ref{asmp:varCstr} by:
\begin{assumption}\label{asmp:varCstr1}
	The variability of users' valuations is constrained, \ie, $\underline{p}+c\leq v_{i}/d_{i}\leq\overline{p}+c$.
\end{assumption}
Here, $c\geq 0$ is the operational cost of using a unit of resource. Second, we replace the pricing functions \eqref{eq:priFcn1}, \eqref{eq:priFcn2} and \eqref{eq:priFcn3}, by $P'_{1}\left(\rho\right)=P_{1}\left(\rho\right)+c$, $P'_{2}\left(\rho\right)=P_{2}\left(\rho\right)+c$ and $P'_{3}\left(\rho\right)=P_{3}\left(\rho\right)+c$. Then it is clear that all discussions about the proposed pricing strategy remain valid, including the proof of optimality. In the rest of this paper, we ignore operational cost  for simplicity.

\section{Pricing Multiple Resource Types with Resource Recycling}
\label{sec:pricing-general}

In this section, we extend our elementary resource allocation problem in \eqref{eq:prob1} to one with multiple types of resources (Sec.~\ref{sec:multi_resource_type}), and then further investigate the practical case that resource usage of a user lasts for multiple time slots (Sec.~\ref{sec:multi_timeslots}). \textcolor{black}{We show that, by carefully designing the pricing and scheduling strategy, the worst-case competitive ratio in social welfare will not be influenced by the number of resource types, or by the number of requested time slots.}

\subsection{Pricing Function for Multiple Types of Resources}
\label{sec:multi_resource_type}

Now we consider a cloud system that provides multiple types of resources in a set $\mathcal{R}$, as exemplified by CPU, GPU, RAM, and disk storage. Let $d_{i,r}$ be user $i$'s demand for resource $r$, $\forall r\in \mathcal{R}$. Again, we assume the total amount of each type of resource is 1, so that $d_{i,r}$ is the proportion of the overall supply of resource $r$ demanded by $i$. 

The offline social welfare maximization problem is:
	\begin{equation}\label{eq:prob2}
	\textrm{maximize}\quad \sum_{i\in\mathcal{U}}v_{i}x_{i}
	\end{equation}
	
	\vspace*{-3mm}
	\quad\text{s.t.:}
		\vspace*{-6mm}
	
	\begin{align*}
	\sum_{i\in\mathcal{U}}d_{i,r}x_{i}\leq 1, & \forall r\in\mathcal{R} \quad\quad\quad\quad\quad (\ref{eq:prob2}a)\\
	x_{i}\in\left\{0,1\right\}, & \forall i\in\mathcal{U} \quad\quad\quad\quad\quad (\ref{eq:prob2}b)
	\end{align*}
The online resource allocation algorithm we apply to determine $x_{i}$ immediately after user $i$ comes to the system, is the same as Alg.~\ref{alg:postedPrice}, 
except that $d_{i}$ and the pricing function will be redefined.

Given the optimal pricing functions \eqref{eq:priFcn1} (for $\beta\ge 1$), \eqref{eq:priFcn2} (for $\beta\in (\beta_0,1)$), \eqref{eq:priFcn3} (for $\beta\in (0, \beta_0]$) and \eqref{eq:priFcn4} (for $\beta\in (-1,0]$) in case of a single resource type, we can simply price each type of resource independently as $p_r$, using these pricing functions, and sum them up by $\sum_{r\in\mathcal{R}}d_{i,r}p_r$ to form a total price (a user $i$ accepts the prices and rents resources at quantities $d_{i,r}$'s, if and only if $v_i$ is no smaller than the total price). Before doing so, we need to redefine $\underline{p}$ and $\overline{p}$. One way is to define them for each type of resource independently, as $\underline{p}_{r}=\inf_{i}\frac{v_{i}}{d_{i,r}}$, and $\overline{p}_{r}=\sup_{i}\frac{v_{i}}{d_{i,r}}$, as done by \citeauthor{zhang2015online} \cite{zhang2015online}. However, a drawback of this definition is that $\overline{p}_{r}$ can be infinite, as we do not assume that every user demands all types of resources. A remedy to this problem is to define the same $\underline{p}$ and $\overline{p}$ for all types of resources, as $\underline{p}=\inf_{i}\frac{v_{i}}{d_{i}}$, and $\overline{p}=\sup_{i}\frac{v_{i}}{d_{i}}$, where $d_{i}=\sum_{r\in\mathcal{R}}d_{i,r}$. In this way, Assumption~\ref{asmp:varCstr} or \ref{asmp:varCstr1} remains intact. \textcolor{black}{The definitions of $\underline{p}$ and $\overline{p}$ are a simple extension of Assumption \ref{asmp:varCstr} for the multi-resource case. Compared to the former definition, they do not make any (implicit) assumptions on the ratio of different resources each user demands, and thus are more practical. Moreover, summing up the demand for different types of resources is reasonable when each $d_{i,r}$ is normalized by the total supply of the corresponding resource, such that their values are all in the range of $[0,1]$.} Then given the resource utilization $\rho_{r}$ and scarcity level $\beta_{r}$ of each type of resource $r\in\mathcal{R}$, we define an average unit price for any resource for user $i$ as
\begin{equation}
\mathscr{P}_{i}\left(\boldsymbol{\rho}\right)=\frac{1}{d_{i}}\sum_{r\in\mathcal{R}}d_{i,r}P\left(\rho_{r};\beta_{r}\right)\label{eq:priFcn-multiRes}
\end{equation}
where $\boldsymbol{\rho}$ denotes the vector of $\rho_{r},\forall r\in\mathcal{R}$, and $P\left(\rho_{r};\beta_{r}\right)$ is defined by Eq.~\eqref{eq:priFcn1}, \eqref{eq:priFcn2}, \eqref{eq:priFcn3} and \eqref{eq:priFcn4}. Therefore, $d_{i}\mathscr{P}_{i}\left(\boldsymbol{\rho}\right)$ is the total price for user $i$. Note that we omit $\beta_{r}$ in $\mathscr{P}_{i}\left(\boldsymbol{\rho}\right)$ for notational simplicity, but different $\beta_{r}$ will lead to different $\mathscr{P}_{i}\left(\boldsymbol{\rho}\right)$.

While it is quite straightforward to adapt the pricing strategy for a single resource type to the case of multiple resource types, the resulting worst-case competitive ratio of social welfare will be different. Specifically, we denote the final resource utilization level of resource $r$ by $\rho_{r}\mbox{*}, \forall r\in\mathcal{R}$, according to Definition~\ref{def:finRho}, and we analyze competitive ratios in three cases: (i) $\rho_{r}\mbox{*}\in\left[0,1/\alpha_{r}\right],\forall r\in\mathcal{R}$; (ii) there exists an $r\in\mathcal{R}$ such that $\rho_{r}\mbox{*}\in\left(1/\alpha_{r},1\right)$, but no $r\in\mathcal{R}$ such that $\rho_{r}\mbox{*}=1$; (iii) there exists an $r\in\mathcal{R}$ such that $\rho_{r}\mbox{*}=1$. Here, $\alpha_{r}$ is defined by Eq.~\eqref{eq:alpha_1}, \eqref{eq:alpha_2} or \eqref{eq:alpha_3} for $\beta=\beta_{r}$. We denote the three cases by $\boldsymbol{\rho}\mbox{*}\in\Omega_{1}$, $\boldsymbol{\rho}\mbox{*}\in\Omega_{2}$ and $\boldsymbol{\rho}\mbox{*}\in\Omega_{3}$, respectively, and observe that $\Omega_{1}\cup\Omega_{2}\cup\Omega_{3}$ covers all possible values of $\boldsymbol{\rho}\mbox{*}$.
Without loss of generality, here we assume not all $\beta_{r}\leq 0$, since otherwise the worst-case competitive ratio would be $1$.

\begin{lemma}\label{lem:Omega_1_CR}
	For $\boldsymbol{\rho}\mbox{*}\in\Omega_{1}$, the worst-case competitive ratio achieved by Alg.~\ref{alg:postedPrice} using pricing function (\ref{eq:priFcn-multiRes}) for multiple types of resources is $\alpha^{1}=1$.
\end{lemma}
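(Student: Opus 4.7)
The plan is to mirror the proof of Lemma~\ref{lem:priFcn1_const_CR}, extended to the multi-resource setting, by showing that every potential user is admitted by Algorithm~\ref{alg:postedPrice} under the conditions of $\Omega_{1}$. The crux is the observation that each single-resource pricing function $P(\rho_{r};\beta_{r})$ defined by \eqref{eq:priFcn1}, \eqref{eq:priFcn2}, \eqref{eq:priFcn3} and \eqref{eq:priFcn4} equals the constant $\underline{p}$ exactly on $[0,1/\alpha_{r}]$. Hence, whenever $\rho_{r}\in[0,1/\alpha_{r}]$ for every $r\in\mathcal{R}$, the weighted average price collapses to
\[
\mathscr{P}_{i}(\boldsymbol{\rho})=\frac{1}{d_{i}}\sum_{r\in\mathcal{R}}d_{i,r}\,\underline{p}=\underline{p}.
\]

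Next I would invoke the monotonicity of $\rho_{r}$ in the non-recycling model: resources are never returned, so $\rho_{r}$ is non-decreasing over time. Consequently, the hypothesis $\rho_{r}\mbox{*}\le 1/\alpha_{r}$ for all $r$ implies that at \emph{every} intermediate step of the algorithm, $\rho_{r}\le 1/\alpha_{r}$, so the price seen by every arriving user is $\underline{p}$. By the redefined Assumption~\ref{asmp:varCstr} (with $\underline{p}=\inf_{i} v_{i}/d_{i}$), every user satisfies $v_{i}\ge d_{i}\underline{p}=d_{i}\mathscr{P}_{i}(\boldsymbol{\rho})$, so the price-acceptance check passes. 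For the capacity check $\rho_{r}+d_{i,r}\le 1$, I would combine $\rho_{r}\le 1/\alpha_{r}\le 1$ with Assumption~\ref{asmp:smallDemand} ($d_{i,r}\le d_{i}\ll 1$) to conclude that the capacity constraint is slack for every admission decision.

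It then follows that every user in $\mathcal{U}$ is served, so the social welfare achieved online equals $\sum_{i\in\mathcal{U}}v_{i}$, which trivially upper-bounds the offline optimum of \eqref{eq:prob2}. Therefore $V_{ol}(\boldsymbol{\rho}\mbox{*})=V_{opt}(\boldsymbol{\rho}\mbox{*})$ and the worst-case competitive ratio on $\Omega_{1}$ is $\alpha^{1}=1$.

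The only delicate point is ruling out early rejections: a user could in principle be refused not because of price but because of the capacity test. I would handle this by explicitly citing the monotonicity of $\rho_{r}$ together with Assumption~\ref{asmp:smallDemand}, since without the small-demand assumption a single large request might be blocked even while $\rho_{r}$ stays in the constant-price zone. I do not anticipate any other obstacle; the multi-resource structure does not matter in this regime precisely because the weighted average of identical values $\underline{p}$ is again $\underline{p}$.
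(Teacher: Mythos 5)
Your proposal is correct and follows essentially the same route as the paper's proof: on $\Omega_{1}$ every constituent price $P(\rho_{r};\beta_{r})$ equals $\underline{p}$, so $\mathscr{P}_{i}(\boldsymbol{\rho})=\underline{p}$ is acceptable to all users by Assumption~\ref{asmp:varCstr}, all demand is served, and the online welfare coincides with the offline optimum of \eqref{eq:prob2}. Your added remarks on the monotonicity of $\rho_{r}$ and the slackness of the capacity check (via Assumption~\ref{asmp:smallDemand}) merely make explicit what the paper leaves implicit, so there is no substantive difference.
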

\begin{proof}
	For $\boldsymbol{\rho}\mbox{*}\in\Omega_{1}$, according to the pricing function in \eqref{eq:priFcn-multiRes}, $\mathscr{P}_{i}\left(\boldsymbol{\rho}\mbox{*}\right)=\underline{p}$, which by Assumption~\ref{asmp:varCstr} is acceptable to any potential users, thus the total demand of all users for resource $r$ is exactly $\rho_{r}\mbox{*}$. The social welfare achieved by the pricing function in \eqref{eq:priFcn-multiRes} is the total value of all users, which is also the maximum possible social welfare achieved by solving the offline problem \eqref{eq:prob2}. Therefore, the worst-case competitive ratio $\alpha^{1}=1$.
\end{proof}

For $\boldsymbol{\rho}\mbox{*}\in\Omega_{2}$, we first present the following claim, which states that worst cases happen when all users demand only one specific type of resource, driving the average unit price to go a bit over $\underline{p}$.
\begin{claim}\label{clm:Omega_2_wc}
	Let $\rho_{r}\mbox{*}\in\left[0,1/\alpha_{r}\right]$ for $r\in\mathcal{R}_{1}$, $\rho_{r}\mbox{*}\in\left(1/\alpha_{r},1\right)$ for $r\in\mathcal{R}_{2}$, where $\mathcal{R}_{1}\cup\mathcal{R}_{2}=\mathcal{R}$. For $\boldsymbol{\rho}\mbox{*}\in\Omega_{2}$, there exists a worst case that happens when $\rho_{r}\mbox{*}=0$ for $r\in\mathcal{R}_{1}$, and $\rho_{r}\mbox{*}=1/\alpha_{r}+\epsilon$ for $r\in\mathcal{R}_{2}$, where $\left|\mathcal{R}_{2}\right|=1$. Here, $\epsilon$ is an arbitrarily small number.
\end{claim}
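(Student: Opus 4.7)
The plan is to show by a perturbation argument that any adversarial user input realizing a given $\boldsymbol{\rho}\mbox{*}\in\Omega_{2}$ can be transformed into a new input of the claimed canonical form with competitive ratio $V_{opt}/V_{ol}$ at least as large. Three monotone reductions on $\boldsymbol{\rho}\mbox{*}$ are needed, each preserving or weakly raising the ratio: collapsing $|\mathcal{R}_{2}|$ to one, shrinking the surviving $\rho_{r^{*}}\mbox{*}$ to $1/\alpha_{r^{*}}+\epsilon$, and driving every $\rho_{r}\mbox{*}$ with $r\in\mathcal{R}_{1}$ to zero.

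Mirroring Lemma~\ref{lem:priFcn1_nconst_CR}, I would first parameterize a worst-case input so that each accepted user's valuation is tight against the posted average price (minimizing $V_{ol}$) while each rejected user's valuation lies just below the posted price (maximizing the offline's achievable value). Under the further simplification that each user demands only a single resource type, both $V_{ol}$ and $\sup_{\epsilon>0}V_{opt}$ decompose additively across resources, and Property~\ref{prop:stableCR} applies resource-by-resource, giving $P(\rho_{r}\mbox{*};\beta_{r})=\alpha_{r}V_{ol,r}$ for each $r\in\mathcal{R}_{2}$.

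Given the single-resource decomposition, the first reduction keeps only $r^{*}\in\arg\max_{r\in\mathcal{R}_{2}}\alpha_{r}$ and discards users keyed to the other resources in $\mathcal{R}_{2}$: this replaces the capacity-weighted average $\bigl(\sum_{r\in\mathcal{R}_{2}}\alpha_{r}V_{ol,r}\bigr)/\bigl(\sum_{r\in\mathcal{R}_{2}}V_{ol,r}\bigr)$ by $\alpha_{r^{*}}$ and weakly increases the ratio. The second reduction is direct from Property~\ref{prop:stableCR}: shrinking $\rho_{r^{*}}\mbox{*}$ within $(1/\alpha_{r^{*}},1)$ scales $V_{ol}$ and $\sup_{\epsilon>0}V_{opt}$ by the same factor $\alpha_{r^{*}}$, leaving the ratio invariant, so the boundary value $1/\alpha_{r^{*}}+\epsilon$ is chosen as the canonical representative. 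For the third reduction, every accepted user keyed to $r\in\mathcal{R}_{1}$ was accepted at the base unit price $\underline{p}$; removing it shrinks $V_{ol}$ by $\underline{p}d_{i}$ while shrinking $V_{opt}$ by at most the same amount, so the ratio weakly rises whenever $V_{opt}/V_{ol}\ge 1$, and iterating eliminates every $\rho_{r}\mbox{*}>0$ for $r\in\mathcal{R}_{1}$.

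The main obstacle is justifying the restriction to single-resource demands used at the outset, since a multi-resource user makes a single accept/reject decision coupling all resources through $\mathscr{P}_{i}$. I would address this by a substitution argument: any multi-resource user with demand vector $(d_{i,r})$ and valuation $v_{i}$ is replaced by a collection of single-resource proxies, one per $r$ with $d_{i,r}>0$, whose valuations apportion $v_{i}$ in proportion to $d_{i,r}P(\rho_{r}\mbox{*};\beta_{r})$. By construction, in the tight worst-case parameterization this replacement preserves both the accepted users' total online contribution and the final utilization $\boldsymbol{\rho}\mbox{*}$, while giving the offline strictly more flexibility to mix proxies across resources, which can only weakly increase $V_{opt}$. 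Hence any multi-resource worst case is dominated by a single-resource one, and the three reductions compose to yield the claimed canonical form.
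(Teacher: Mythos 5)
Your overall strategy—transform an arbitrary adversarial input into the canonical configuration by monotone reductions—is the same as the paper's, but two of your steps have genuine gaps. First, the proxy substitution for multi-resource users does not work as stated. If you apportion $v_{i}$ in proportion to $d_{i,r}P\left(\rho_{r}\mbox{*};\beta_{r}\right)$, the proxies' per-unit valuations can fall outside $\left[\underline{p},\overline{p}\right]$, so the new instance may violate Assumption~\ref{asmp:varCstr}; worse, the proxies need not reproduce the original online decisions, because acceptance happens at arrival-time prices, not at the final prices used for the apportionment (e.g., a tight accepted user with $v_{i}/d_{i}=\underline{p}$ splits into a proxy on a still-cheap resource whose apportioned unit value is $\underline{p}\,P\left(\rho_{r}\mbox{*}\right)/\mathscr{P}_{i}\left(\boldsymbol{\rho}\mbox{*}\right)<\underline{p}$, which would reject at price $\underline{p}$), so $\boldsymbol{\rho}\mbox{*}$ and $V_{ol}$ are not preserved. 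The paper avoids both problems by never modifying the instance: it only decomposes each (possibly multi-resource) user's value through $U_{i,r}\left(\boldsymbol{\rho}\right)=\frac{v_{i}}{d_{i}\mathscr{P}_{i}\left(\boldsymbol{\rho}\right)}P\left(\rho_{r};\beta_{r}\right)$ as an accounting device to bound $V_{opt}$ resource by resource, explicitly noting that $U_{i,r}<\underline{p}$ is compatible with Assumption~\ref{asmp:varCstr}.

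Second, the order and justification of your three reductions are off. With the $\mathcal{R}_{1}$ contributions still present in both $V_{ol}$ and $V_{opt}$, the ratio is not the capacity-weighted average $\bigl(\sum_{r\in\mathcal{R}_{2}}\alpha_{r}V_{ol,r}\bigr)/\bigl(\sum_{r\in\mathcal{R}_{2}}V_{ol,r}\bigr)$, so plainly discarding the non-maximal $\mathcal{R}_{2}$ resources is not justified by a mediant argument and can strictly decrease the ratio (take a satisfied $\mathcal{R}_{1}$ block at price $\underline{p}$ large enough that the overall ratio drops below $\alpha_{r}$ of the discarded resource); likewise your second reduction is not ratio-invariant once additive $\mathcal{R}_{1}$ terms are present, and its direction can even reverse. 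The paper's proof fixes exactly these points: it first argues $\rho_{r}\mbox{*}=0$ for all $r\in\mathcal{R}_{1}$ (killing the $\mathcal{R}_{1}$ online terms while the offline keeps $\underline{p}\min\left\{1,1+\beta_{r}\right\}$ per Eq.~\eqref{eq:V_opt-Omega_2}), then iteratively moves the resource with the \emph{smallest} $\alpha_{r}$ out of $\mathcal{R}_{2}$ while crediting the offline with $\underline{p}\min\left\{1,1+\beta_{\underline{r}}\right\}$ for the emptied resource—both the credit and the prior zeroing are needed for the displayed inequality—and only then shrinks the last $\rho_{r}\mbox{*}$ to $1/\alpha_{r}+\epsilon$, where the ratio is decreasing in $V_{ol,r}$ (not invariant) because of the constant offline terms. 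Your reductions can be repaired by reordering (zero $\mathcal{R}_{1}$ first) and by adding the offline credit when emptying an $\mathcal{R}_{2}$ resource, but as written the chain of weak increases does not hold.
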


The proof can be found in the appendix.

\begin{lemma}\label{lem:Omega_2_CR}
	For $\boldsymbol{\rho}\mbox{*}\in\Omega_{2}$, the corresponding worst-case competitive ratio $\alpha^{2}=\alpha_{\overline{r}}\sum_{r\in\mathcal{R}}\min\left\{ 1,1+\beta_{r}\right\}$, where $\alpha_{\overline{r}}$ is defined by Eq.~\eqref{eq:alpha_1}, \eqref{eq:alpha_2} or \eqref{eq:alpha_3} for $\beta=\beta_{\overline{r}}$, and $\overline{r}=\argmax_{r\in\mathcal{R}}\alpha_{r}$.
\end{lemma}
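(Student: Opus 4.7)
My plan is to invoke Claim~\ref{clm:Omega_2_wc} to reduce the problem to a single extremal configuration, namely the one where $\rho_{\overline{r}}\mbox{*}=1/\alpha_{\overline{r}}+\epsilon$ for a single resource $\overline{r}$ and $\rho_r\mbox{*}=0$ for every $r\neq\overline{r}$, with $\epsilon>0$ taken arbitrarily small. The worst-case competitive ratio on $\Omega_2$ is then the supremum of $V_{opt}/V_{ol}$ over inputs compatible with this configuration and over the choice of the distinguished resource $\overline{r}$.

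For $V_{ol}$, I would observe that since every $\rho_r\mbox{*}$ with $r\neq\overline{r}$ is zero, no user accepted by Algorithm~\ref{alg:postedPrice} can demand any resource other than $\overline{r}$. The input that minimizes $V_{ol}$ is therefore the natural analogue of the one used in Lemma~\ref{lem:priFcn1_nconst_CR}, in which each accepted user's per-unit value equals the unit price it pays. Since the posted price of $\overline{r}$ is the constant $\underline{p}$ throughout $[0,1/\alpha_{\overline{r}}]$, taking $\epsilon\to 0^+$ gives $V_{ol}\to\underline{p}/\alpha_{\overline{r}}$.

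For $V_{opt}$, the key observation is that as $\epsilon\to 0^+$ every price $P(\rho_r\mbox{*};\beta_r)$ converges to $\underline{p}$, so in any worst-case input every user has per-unit value arbitrarily close to $\underline{p}$: rejected users because their per-unit value is bounded above by a mixed posted price that itself tends to $\underline{p}$, and accepted users because any strictly larger per-unit value would increase $V_{ol}$ at least as fast as $V_{opt}$ and hence could only shrink $V_{opt}/V_{ol}$. Consequently, in the limit $V_{opt}\le\underline{p}\sum_{i\in\mathcal{U}_{opt}}d_i=\underline{p}\sum_{r\in\mathcal{R}}\sum_{i\in\mathcal{U}_{opt}}d_{i,r}$, and each inner sum is bounded by $\min\{1,1+\beta_r\}$ via the capacity constraint~(\ref{eq:prob2}a) together with Definition~\ref{def:limitedBeta}. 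Tightness is witnessed by appending, after the initial pure-$\overline{r}$ users, a population of mixed-demand users whose $\overline{r}$-component is small enough to be rejected online by a price that has just moved above $\underline{p}$, while allowing the offline allocation to saturate $\min\{1,1+\beta_r\}$ on every resource.

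Combining the two estimates yields $V_{opt}/V_{ol}\to\alpha_{\overline{r}}\sum_{r\in\mathcal{R}}\min\{1,1+\beta_r\}$; since the sum is independent of the distinguished resource, the supremum over $\overline{r}$ is attained at $\overline{r}=\argmax_{r\in\mathcal{R}}\alpha_r$, which gives the announced value of $\alpha^2$. The main obstacle should be the tightness construction: one has to exhibit users whose $\overline{r}$-component is small enough that the total $\overline{r}$-demand is simultaneously consistent with the online trajectory $\rho_{\overline{r}}\mbox{*}\to 1/\alpha_{\overline{r}}$ and with the offline capacity of $1$, while still being rejected online by an infinitesimal rise of the $\overline{r}$-price above $\underline{p}$. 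Choosing $d_{i,\overline{r}}/d_i$ as an appropriate function of $\epsilon$ should make this possible.
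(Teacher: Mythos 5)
Your overall route is the same as the paper's: invoke Claim~\ref{clm:Omega_2_wc} to reduce to the configuration $\rho_{\overline{r}}\mbox{*}=1/\alpha_{\overline{r}}+\epsilon$ and $\rho_{r}\mbox{*}=0$ for $r\neq\overline{r}$, compute $V_{ol}\to\underline{p}/\alpha_{\overline{r}}$, bound $V_{opt}$ by $\underline{p}\sum_{r}\min\{1,1+\beta_{r}\}$ via the per-resource capacity/scarcity bounds, and observe that $\overline{r}=\argmax_{r}\alpha_{r}$ maximizes the ratio. The upper-bound half of your argument is sound; the paper obtains the same numbers by plugging this configuration into Eq.~\eqref{eq:V_ol-Omega_2} and \eqref{eq:V_opt-Omega_2} from the Claim's appendix proof.

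The flaw is in your tightness construction. If, as you propose, every rejected user's $\overline{r}$-component is made small (you even suggest letting $d_{i,\overline{r}}/d_{i}$ vanish with $\epsilon$), then the only substantial $\overline{r}$-demand in the instance is the $1/\alpha_{\overline{r}}+\epsilon$ held by the accepted users, so the $\overline{r}$-term of $V_{opt}$ is only $\approx\underline{p}/\alpha_{\overline{r}}$ rather than $\underline{p}\min\{1,1+\beta_{\overline{r}}\}=\underline{p}$; the lower bound you would obtain is $\alpha_{\overline{r}}\sum_{r\neq\overline{r}}\min\{1,1+\beta_{r}\}+1$, which falls short of the stated value by $\alpha_{\overline{r}}-1$. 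The fix, which is what the proof of Claim~\ref{clm:Omega_2_wc} actually uses, is to let the rejected population also carry $\overline{r}$-demand totalling (close to) the full capacity $1$, at per-unit value $\approx\underline{p}$. Your closing worry about keeping that $\overline{r}$-demand consistent with the online trajectory is a non-issue: rejected users never change $\boldsymbol{\rho}$, and since Assumption~\ref{asmp:varCstr} only forces $v_{i}/d_{i}\geq\underline{p}$, a user with $v_{i}/d_{i}=\underline{p}$ and any positive $\overline{r}$-weight who arrives after $\rho_{\overline{r}}$ crosses $1/\alpha_{\overline{r}}$ faces a mixed price strictly above $\underline{p}$ and is rejected no matter how large its $\overline{r}$-demand is, subject only to $\sum_{i}d_{i,\overline{r}}\leq 1+\beta_{\overline{r}}$, which has slack because $\beta_{\overline{r}}>0$. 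So the needed users should have a large, not small, $\overline{r}$-component; with that correction your argument matches the paper's.
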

\begin{proof}
	According to Claim~\ref{clm:Omega_2_wc}, we let $\rho_{r}\mbox{*}=0$ for $r\in\mathcal{R}_{1}$, and $\rho_{r}\mbox{*}=1/\alpha_{r}+\epsilon$ for $r\in\mathcal{R}_{2}$, and let $\left|\mathcal{R}_{2}\right|=1$. Then from Eq.~\eqref{eq:V_ol-Omega_2} and \eqref{eq:V_opt-Omega_2}, $\mathcal{R}_{2}=\left\{\overline{r}\right\}$ maximizes $\alpha\left(\boldsymbol{\rho}\mbox{*}\right)$, and thus is a worst case for $\boldsymbol{\rho}\mbox{*}\in\Omega_{2}$. The corresponding competitive ratio
	\begin{equation}
	\alpha^{2}=\sup_{\epsilon>0}\frac{\sum_{r\in\mathcal{R}}\underline{p}\min\left\{ 1,1+\beta_{r}\right\} }{\underline{p}\rho_{\overline{r}}\mbox{*}}=\alpha_{\overline{r}}\sum_{r\in\mathcal{R}}\min\left\{ 1,1+\beta_{r}\right\}. \label{eq:Omega_2_CR}
	\end{equation}
\end{proof}

For $\boldsymbol{\rho}\mbox{*}\in\Omega_{3}$, the following claim states that worst cases happen when all users that are satisfied by an online solution, demand only one specific type of resource until it is exhausted.
\begin{claim}\label{clm:Omega_3_wc}
	Let $\rho_{r}\mbox{*}\in\left[0,1\right)$ for $r\in\mathcal{R}_{3}$, $\rho_{r}\mbox{*}=1$ for $r\in\mathcal{R}_{4}$, where $\mathcal{R}_{3}\cup\mathcal{R}_{4}=\mathcal{R}$. For $\boldsymbol{\rho}\mbox{*}\in\Omega_{3}$, there exists a worst case that happens when $\rho_{r}\mbox{*}=0$ for $r\in\mathcal{R}_{3}$, and $\rho_{r}\mbox{*}=1$ for $r\in\mathcal{R}_{4}$, where $\left|\mathcal{R}_{4}\right|=1$.
\end{claim}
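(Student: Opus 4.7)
The plan is to mirror the exchange argument that must have been used in the appendix proof of Claim~\ref{clm:Omega_2_wc}, adapting it to the case where at least one resource is exhausted. The intuition is that, in a worst case, the online algorithm should have spent all of its ``budget'' (resource pool) on a single resource while the offline can benefit from \emph{every} resource, so spreading the exhaustion across multiple resources can only dilute the ratio.

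First, I would normalize the worst-case instance. For a given $\boldsymbol{\rho}\mbox{*}\in\Omega_{3}$, as in the proofs of Lemmas~\ref{lem:priFcn1_nconst_CR} and~\ref{lem:Omega_2_CR}, I would assume without loss that each online-accepted user $i$ has $v_i=d_i\mathscr{P}_i(\boldsymbol{\rho})$ at its arrival (minimizing $V_{ol}$), and each user served only offline has $v_i=d_i\mathscr{P}_i(\boldsymbol{\rho}\mbox{*})-\epsilon_i$. Next, I would argue that we may restrict attention to single-resource users: if a user $i$ demands a bundle, splitting $i$ into $|\mathcal{R}|$ fictitious single-resource users with demand $d_{i,r}$ and valuation $d_{i,r} P(\rho_r;\beta_r)$ preserves $V_{ol}$ exactly and can only make $V_{opt}$ (weakly) larger, because the offline can now pick each coordinate independently. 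Hence the worst case is attained on instances where every user demands only one type of resource.

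Under this reduction, both $V_{ol}$ and $V_{opt}$ decompose additively over resources:
\begin{equation*}
V_{ol}(\boldsymbol{\rho}\mbox{*}) = \sum_{r\in\mathcal{R}} \int_{0}^{\rho_r\mbox{*}} P(\rho;\beta_r)\,d\rho, \quad V_{opt}(\boldsymbol{\rho}\mbox{*}) \le \sum_{r\in\mathcal{R}_3} h_r(\rho_r\mbox{*}) + \sum_{r\in\mathcal{R}_4} \overline{p},
\end{equation*}
where $h_r(\rho_r\mbox{*})$ is the maximal offline value achievable against the online trace on resource $r$ (bounded above by $\underline{p}\min\{1,1+\beta_r\}$, as in Lemma~\ref{lem:Omega_2_CR}). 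I would then show that concentrating the exhaustion on one resource $r^{*}\in\mathcal{R}_4$ maximizes the ratio: for any two exhausted resources, merging them (setting one back to $\rho_r\mbox{*}=0$ and keeping the other at $1$) strictly decreases the denominator by $\int_0^1 P(\rho;\beta_r)\,d\rho\ge \overline{p}/\alpha_r$ while decreasing the numerator by at most $\overline{p}$; since $\alpha_r>1$, the ratio increases. The same monotonicity shows that driving every $\rho_r\mbox{*}>0$ with $r\in\mathcal{R}_3$ down to $0$ can only enlarge the competitive ratio, because in that regime the local ratio contributed by resource $r$ satisfies $h_r(\rho_r\mbox{*})/\int_0^{\rho_r\mbox{*}} P\,d\rho\le \alpha_r<$ the global ratio being forced by the exhausted resource.

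The hardest step, I expect, will be making the last exchange argument rigorous, namely verifying that removing a partially utilized non-exhausted resource never decreases the ratio. This requires a careful case analysis of the piecewise pricing function \eqref{eq:priFcn1}, \eqref{eq:priFcn2}, \eqref{eq:priFcn3}, \eqref{eq:priFcn4} (since $P(\rho;\beta_r)$ has three functional forms depending on where $\beta_r$ lies relative to $\beta_0$), and a mediant-inequality style argument: if $a/b<c/d$ with $a,b,c,d>0$, then $(a+c)/(b+d)<c/d$. Applying this with $(a,b)$ being the partial resource's contribution and $(c,d)$ being the exhausted resource's contribution should close the argument, but one must check the inequality $a/b<c/d$ holds for every admissible $\beta_r$ regime.
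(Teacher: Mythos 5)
There is a genuine gap, and it sits at the heart of the $\Omega_{3}$ case: your reduction to single-resource users is not valid, and it erases exactly the phenomenon that drives the worst case here. When some resource $r'\in\mathcal{R}_{4}$ is exhausted, the online algorithm must reject \emph{any} user whose bundle contains even an infinitesimal amount of $r'$, regardless of that user's valuation; the paper's proof constructs precisely such users (large demand for a resource in $\mathcal{R}_{3}$ bundled with an $\epsilon$-demand for $r'$), so that the offline optimum collects unit value $\overline{p}$ on the \emph{non-exhausted} resources as well, giving the term $\sum_{r\in\mathcal{R}_{3}}\overline{p}\min\left\{1,1+\beta_{r}-\rho_{r}\mbox{*}\right\}$ in Eq.~\eqref{eq:V_opt-Omega_3}. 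Your bound $h_{r}(\rho_{r}\mbox{*})\leq\underline{p}\min\{1,1+\beta_{r}\}$ treats rejection on $r\in\mathcal{R}_{3}$ as price-based (as in the $\Omega_{2}$ analysis), which understates the true worst-case offline value by up to a factor $\gamma$; note that the downstream ratio \eqref{eq:Omega_3_CR} indeed carries $\overline{p}$ for every $r\neq r'$ even though those resources end at $\rho_{r}\mbox{*}=0$. The splitting step itself is also inconsistent: for a user rejected online, you cannot simultaneously keep the online trace unchanged and let the offline ``pick each coordinate independently,'' because the single-resource pieces you create (with valuation $d_{i,r}P(\rho_{r};\beta_{r})$) would have been accepted by the online algorithm, changing the final utilization. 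So the class of instances you restrict to cannot realize the worst case, and an exchange argument carried out within that class does not prove the claim.

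Your exchange steps also need repair even on their own terms. ``Merging'' two exhausted resources decreases both numerator and denominator, so ``since $\alpha_{r}>1$ the ratio increases'' is a non sequitur; what is needed is the mediant-type comparison you mention at the end, and under the correct accounting the numerator does not drop by $\overline{p}$ at all when a resource is moved out of $\mathcal{R}_{4}$ and zeroed: it changes from $\alpha_{r}\int_{0}^{1}P\left(\rho;\beta_{r}\right)d\rho$ to $\overline{p}\min\{1,1+\beta_{r}\}$, which is weakly larger, while the online contribution $\int_{0}^{1}P\left(\rho;\beta_{r}\right)d\rho$ disappears. That is the paper's argument: (i) with the blocked-user construction, for $r\in\mathcal{R}_{3}$ the online value \eqref{eq:V_ol-Omega_3} is nondecreasing and the offline value \eqref{eq:V_opt-Omega_3} is nonincreasing in $\rho_{r}\mbox{*}$, so one may set $\rho_{r}\mbox{*}=0$ directly (no global-ratio comparison needed); (ii) iteratively move $\underline{r}=\argmin_{r\in\mathcal{R}_{4}}\alpha_{r}$ into $\mathcal{R}_{3}$, which removes the smallest-ratio component from the fraction while weakly increasing its offline replacement, until $\left|\mathcal{R}_{4}\right|=1$. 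To fix your proof you would need to abandon the single-resource normalization, redo the offline accounting with the blocked bundle users, and then run the monotonicity plus mediant exchanges on that corrected expression.
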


The proof can be found in the appendix.

\begin{lemma}\label{lem:Omega_3_CR}
	For $\boldsymbol{\rho}\mbox{*}\in\Omega_{3}$, the corresponding worst-case competitive ratio $\alpha^{3}\geq\alpha_{\overline{r}}\sum_{r\in\mathcal{R}}\min\left\{ 1,1+\beta_{r}\right\}$.
\end{lemma}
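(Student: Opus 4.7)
My plan is to exhibit a single adversarial instance sitting in $\Omega_{3}$ whose offline-to-online ratio already meets the claimed lower bound. First, I would invoke Claim~\ref{clm:Omega_3_wc} to collapse the worst-case geometry to $\mathcal{R}_{4}=\{\overline{r}\}$, i.e.\ $\rho_{\overline{r}}\mbox{*}=1$ and $\rho_{r}\mbox{*}=0$ for every $r\neq\overline{r}$, and I would use Property~\ref{prop:stableCR} applied to the single-resource subproblem on $\overline{r}$ as the source of the factor $\alpha_{\overline{r}}$. Note that since $\overline{r}$ maximises $\alpha_{r}$ and not all $\beta_{r}\leq 0$, one has $\beta_{\overline{r}}>0$.

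The instance is a three-phase stream. In \emph{Phase A}, $\overline{r}$-only users arrive whose per-unit-demand valuations trace the curve $P(\rho;\beta_{\overline{r}})$ as $\rho$ sweeps from $0$ to $1$; Algorithm~\ref{alg:postedPrice} accepts every one of them, driving $\rho_{\overline{r}}$ to $1$ and yielding $V_{ol}=\int_{0}^{1}P(\rho;\beta_{\overline{r}})\,d\rho$. In \emph{Phase A$'$}, further $\overline{r}$-only users with per-unit-demand valuations $\overline{p}-\epsilon$ arrive after $\overline{r}$ has been exhausted and are rejected online; by Property~\ref{prop:stableCR}, offline can swap out part of Phase A in their favour to realise the single-resource offline optimum $V_{opt}^{(\mathrm{single})}(1)=\alpha_{\overline{r}}V_{ol}$ on $\overline{r}$. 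In \emph{Phase B}, for every $r\neq\overline{r}$ I would inject ``bundle'' users each demanding an infinitesimal amount $\delta$ of $\overline{r}$ together with some positive amount of $r$, arranged so that the aggregate $r$-demand of the batch equals $\min\{1,1+\beta_{r}\}$ and the per-unit valuation approaches $\overline{p}$. Because $\rho_{\overline{r}}$ is already $1$ upon their arrival, the averaged price in~\eqref{eq:priFcn-multiRes} blows up to $+\infty$, so Algorithm~\ref{alg:postedPrice} rejects them and $\rho_{r}\mbox{*}=0$ is preserved, keeping the instance in $\Omega_{3}$.

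Offline simultaneously serves Phase A$'$ and all Phase B users: the aggregate $\overline{r}$ consumption of Phase B is $o(1)$ and can be absorbed by shedding a vanishing further slice of Phase A. Adding contributions gives
\[
V_{opt}\;\geq\;V_{opt}^{(\mathrm{single})}(1)+\overline{p}\sum_{r\neq\overline{r}}\min\{1,1+\beta_{r}\}.
\]
Substituting $V_{opt}^{(\mathrm{single})}(1)=\alpha_{\overline{r}}V_{ol}$ and using $V_{ol}\leq\overline{p}/\alpha_{\overline{r}}$ (which holds because $V_{opt}^{(\mathrm{single})}(1)\leq\overline{p}$), a short arithmetic check yields $V_{opt}/V_{ol}\geq\alpha_{\overline{r}}\sum_{r\in\mathcal{R}}\min\{1,1+\beta_{r}\}$, the claimed inequality.

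The main obstacle will be the delicate bookkeeping in Phase B: I must pick $\delta$ and the per-bundle $r$-demand $\mu$ so that (i) the aggregate $r$-demand for each $r$ lands at exactly $\min\{1,1+\beta_{r}\}$, (ii) the total $\overline{r}$ consumed is $o(1)$ (accomplished by sending $\delta/\mu\to 0$), and (iii) each bundle user's per-unit valuation still lies in $[\underline{p},\overline{p}]$ so that Assumption~\ref{asmp:varCstr} is respected. The three subcases $\beta_{\overline{r}}\in(0,\beta_{0}]$, $(\beta_{0},1)$, and $[1,\infty)$ are handled uniformly by invoking Property~\ref{prop:stableCR} (which is built into the derivation of $P$ in each range via the differential equations~\eqref{eq:diffEq-exp} and \eqref{eq:diffEq-pow}) rather than re-doing the integral defining $V_{ol}$.
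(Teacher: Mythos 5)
Your proposal is correct and follows essentially the same route as the paper: reduce via Claim~\ref{clm:Omega_2_wc}'s companion Claim~\ref{clm:Omega_3_wc} to a single exhausted resource $\overline{r}$ with $V_{ol}=\int_{0}^{1}P\left(\rho;\beta_{\overline{r}}\right)d\rho$, credit the offline optimum with $\alpha_{\overline{r}}V_{ol}$ on $\overline{r}$ plus $\overline{p}\min\left\{1,1+\beta_{r}\right\}$ from rejected bundle users on each other resource, and close with the same inequality $\alpha_{\overline{r}}\int_{0}^{1}P\left(\rho;\beta_{\overline{r}}\right)d\rho\leq\overline{p}$ (your $V_{ol}\leq\overline{p}/\alpha_{\overline{r}}$). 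Your three-phase stream simply makes explicit the adversarial instance that the paper leaves implicit in the appendix proof of Claim~\ref{clm:Omega_3_wc} and in Eq.~\eqref{eq:V_ol-Omega_3}--\eqref{eq:V_opt-Omega_3}.
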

\begin{proof}
	According to Claim~\ref{clm:Omega_3_wc}, we let $\rho_{r}\mbox{*}=0$ for $r\in\mathcal{R}_{3}$, and $\rho_{r}\mbox{*}=1$ for $r\in\mathcal{R}_{4}$, and let $\left|\mathcal{R}_{4}\right|=1$. Then from Eq.~\eqref{eq:V_ol-Omega_3} and \eqref{eq:V_opt-Omega_3}, we have the worst-cast competitive ratio for $\boldsymbol{\rho}\mbox{*}\in\Omega_{3}$
	\begin{equation}
	\alpha^{3}=\max_{r'\in\mathcal{R}'}\frac{\sum_{r\in\mathcal{R}\backslash\left\{ r'\right\} }\overline{p}\min\left\{ 1,1+\beta_{r}\right\} +\alpha_{r'}\int_{0}^{1}P\left(\rho;\beta_{r'}\right)d\rho}{\int_{0}^{1}P\left(\rho;\beta_{r'}\right)d\rho},\label{eq:Omega_3_CR}
	\end{equation}
	where $R'=\left\{r|\beta_{r}>0\right\}$. Since it is assumed that $\beta_{\overline{r}}>0$, we have $\min\left\{ 1,1+\beta_{\overline{r}}\right\}=1$, and hence
	\[
	\frac{\alpha_{\overline{r}}\int_{0}^{1}P\left(\rho;\beta_{\overline{r}}\right)d\rho}{\int_{0}^{1}P\left(\rho;\beta_{\overline{r}}\right)d\rho}=\frac{\overline{p}\min\left\{ 1,1+\beta_{\overline{r}}\right\} }{\overline{p}/\alpha_{\overline{r}}}.
	\]
	And since $\alpha_{\overline{r}}\int_{0}^{1}P\left(\rho;\beta_{\overline{r}}\right)d\rho\leq\overline{p}\min\left\{ 1,1+\beta_{\overline{r}}\right\}$, we have
	\begin{align*}
	\alpha^{3}\geq &~ \frac{\sum_{r\in\mathcal{R}\backslash\left\{ \overline{r}\right\} }\overline{p}\min\left\{ 1,1+\beta_{r}\right\} +\alpha_{\overline{r}}\int_{0}^{1}P\left(\rho;\beta_{\overline{r}}\right)d\rho}{\int_{0}^{1}P\left(\rho;\beta_{\overline{r}}\right)d\rho}\\
	\geq &~ \frac{\sum_{r\in\mathcal{R}}\overline{p}\min\left\{ 1,1+\beta_{r}\right\} }{\overline{p}/\alpha_{\overline{r}}}=\alpha_{\overline{r}}\sum_{r\in\mathcal{R}}\min\left\{ 1,1+\beta_{r}\right\} . 
	\end{align*}
\end{proof}

By Lemma~\ref{lem:Omega_1_CR}, \ref{lem:Omega_2_CR} and \ref{lem:Omega_3_CR}, we have the following theorem:
\begin{theorem}\label{thm:CR-multiRes}
	The worst-case competitive ratio achieved by Alg.~\ref{alg:postedPrice} using the pricing function in \eqref{eq:diffEq-pow} for multiple types of resources is given by Eq.~\eqref{eq:Omega_3_CR}.
\end{theorem}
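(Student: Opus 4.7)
The plan is simply to combine the three exhaustive case bounds from Lemmas~\ref{lem:Omega_1_CR}, \ref{lem:Omega_2_CR}, and \ref{lem:Omega_3_CR}. Since $\Omega_{1}\cup\Omega_{2}\cup\Omega_{3}$ covers every possible final utilization vector $\boldsymbol{\rho}\mbox{*}$ (as observed right before Lemma~\ref{lem:Omega_1_CR}), the overall worst-case competitive ratio of Algorithm~\ref{alg:postedPrice} under the multi-resource pricing function \eqref{eq:priFcn-multiRes} is simply $\max\{\alpha^{1},\alpha^{2},\alpha^{3}\}$, where $\alpha^{j}$ is the case-$j$ bound from the corresponding lemma.

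The first substantive step is to record the dominance $\alpha^{3}\geq\alpha^{2}\geq\alpha^{1}$. Under the standing assumption that not all $\beta_{r}\leq 0$, we have $\overline{r}\in\mathcal{R}'$ and $\alpha_{\overline{r}}>1$ (each of \eqref{eq:alpha_1}, \eqref{eq:alpha_2}, \eqref{eq:alpha_3} exceeds $1$ whenever $\gamma>1$); combined with $\sum_{r\in\mathcal{R}}\min\{1,1+\beta_{r}\}\geq\min\{1,1+\beta_{\overline{r}}\}=1$, this yields $\alpha^{2}\geq 1=\alpha^{1}$. The inequality $\alpha^{3}\geq\alpha^{2}$ is already part of Lemma~\ref{lem:Omega_3_CR}. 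Chaining the two gives that the overall worst-case ratio equals $\alpha^{3}$, whose closed-form expression is precisely the right-hand side of Eq.~\eqref{eq:Omega_3_CR} as derived inside the proof of that lemma by instantiating the single-exhausted-resource worst case of Claim~\ref{clm:Omega_3_wc} and then maximizing over $r'\in\mathcal{R}'$.

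The main obstacle is really just the bookkeeping needed to confirm that $\alpha^{3}$ truly dominates and that Eq.~\eqref{eq:Omega_3_CR} captures the actual supremum over $\Omega_{3}$ rather than merely a lower bound. The former is handled by the inequality $\alpha_{\overline{r}}\int_{0}^{1}P(\rho;\beta_{\overline{r}})\,d\rho\leq\overline{p}\min\{1,1+\beta_{\overline{r}}\}$ used inside Lemma~\ref{lem:Omega_3_CR}; the latter follows because Claim~\ref{clm:Omega_3_wc} restricts the search for worst cases to configurations in which exactly one resource is driven to exhaustion while the rest remain untouched, reducing $\alpha^{3}$ to the finite maximum displayed in \eqref{eq:Omega_3_CR}. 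Once both facts are in hand, the theorem follows immediately by substituting $\alpha^{3}$ into $\max\{\alpha^{1},\alpha^{2},\alpha^{3}\}$.
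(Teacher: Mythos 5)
Your proposal matches the paper's own argument: the paper proves this theorem in one line by combining Lemmas~\ref{lem:Omega_1_CR}, \ref{lem:Omega_2_CR} and \ref{lem:Omega_3_CR} over the exhaustive cases $\Omega_{1}\cup\Omega_{2}\cup\Omega_{3}$, exactly as you do. Your additional bookkeeping showing $\alpha^{3}\geq\alpha^{2}\geq\alpha^{1}$ (so the maximum is attained by Eq.~\eqref{eq:Omega_3_CR}) is left implicit in the paper but is consistent with the inequalities already established inside Lemma~\ref{lem:Omega_3_CR}, so the proposal is correct and essentially identical in approach.
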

As shown by Lemma~\ref{lem:Omega_3_CR}, the worst-case competitive ratio for multiple resource types increases roughly linearly with the number of resource types. 
 However, from Claim~\ref{clm:Omega_2_wc}, \ref{clm:Omega_3_wc}, and the analysis above, it is clear that the worst cases happen in very extreme scenarios, where all satisfied users demand only one type of resource, which is rather unrealistic in practical cloud computing systems.
 In fact, the supply of and the demand for resources in a cloud computing system are often balanced to some extent, since otherwise the supply would be adjusted to better meet the demand of users and to improve the system efficiency. Hence, we make the following realistic assumption:
\begin{assumption}\label{asmp:balancedMultiRes}
	All types of resources share a common scarcity level, $\beta_{\mathcal{R}}>0$, and hence a common $\alpha_{\mathcal{R}}$ as defined by Eq.~\eqref{eq:alpha_1}, \eqref{eq:alpha_2} or \eqref{eq:alpha_3} for $\beta=\beta_{\mathcal{R}}$; and the final utilization vector, $\boldsymbol{\rho}\mbox{*}$, follows
	\begin{equation}\label{eq:balancedMultiRes}
	\frac{\min_{r\in\mathcal{R}}\rho_{r}\mbox{*}}{\max_{r\in\mathcal{R}}\rho_{r}\mbox{*}}\geq\eta.
	\end{equation}
\end{assumption}

Assumption~\ref{asmp:balancedMultiRes} leads to an improved competitive ratio.
\begin{theorem}\label{thm:CR-balancedMultiRes}
	Under Assumption~\ref{asmp:balancedMultiRes}, the worst-case competitive ratio with the pricing function in \eqref{eq:diffEq-pow} is upper bounded by a constant with respect to $\left|\mathcal{R}\right|$.
\end{theorem}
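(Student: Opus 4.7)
The plan is to use Assumption~\ref{asmp:balancedMultiRes} to rule out the pathological worst cases identified in Claims~\ref{clm:Omega_2_wc} and \ref{clm:Omega_3_wc}, in which almost all resources are left untouched while one is driven to the top of its price curve; this was precisely what made the ratio $\alpha_{\overline{r}}\sum_{r}\min\{1,1+\beta_r\}$ scale linearly in $|\mathcal{R}|$. When every utilization is at least $\eta$ times the maximum and all resources share the same pricing curve $P(\cdot;\beta_{\mathcal{R}})$, every one of the $|\mathcal{R}|$ resources must contribute a nontrivial amount to the online social welfare, so the $|\mathcal{R}|$ factor will cancel between numerator and denominator.

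First I would lower-bound $V_{ol}(\boldsymbol{\rho}\mbox{*})$ by exchanging sums: in the worst case each served user $i$ has valuation just above $\sum_{r} d_{i,r}P(\rho_r;\beta_{\mathcal{R}})$, so by Assumption~\ref{asmp:smallDemand} the total online value is at least $\sum_{r}\int_{0}^{\rho_r\mbox{*}}P(\rho;\beta_{\mathcal{R}})d\rho$, mirroring the derivation of Eq.~\eqref{eq:V_ol1}. Applying the balanced condition $\rho_r\mbox{*}\geq\eta\rho_{\max}$ with $\rho_{\max}=\max_{r}\rho_r\mbox{*}$ yields $V_{ol}(\boldsymbol{\rho}\mbox{*})\geq|\mathcal{R}|\int_{0}^{\eta\rho_{\max}}P(\rho;\beta_{\mathcal{R}})d\rho$. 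On the other hand, since the per-resource capacity available to the optimal offline solution is at most $\min\{1,1+\beta_{\mathcal{R}}\}=1$ and every per-unit valuation is at most $\overline{p}$, we have the crude bound $V_{opt}(\boldsymbol{\rho}\mbox{*})\leq|\mathcal{R}|\overline{p}(1+\beta_{\mathcal{R}})$. Dividing these, the $|\mathcal{R}|$ cancels and we obtain a competitive ratio bounded by $\overline{p}(1+\beta_{\mathcal{R}})/\int_{0}^{\eta\rho_{\max}}P(\rho;\beta_{\mathcal{R}})d\rho$, independent of $|\mathcal{R}|$.

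To finish, I would split by the three regimes $\Omega_1,\Omega_2,\Omega_3$. The $\Omega_1$ case is trivial (ratio $1$, by Lemma~\ref{lem:Omega_1_CR}). In $\Omega_3$ one has $\rho_{\max}=1$ and hence the explicit constant $\overline{p}(1+\beta_{\mathcal{R}})/\int_{0}^{\eta}P(\rho;\beta_{\mathcal{R}})d\rho$, a closed-form expression in $\eta$, $\beta_{\mathcal{R}}$, and $\gamma$ via Eq.~\eqref{eq:priFcn2}. In $\Omega_2$ I would use a sharper bound on $V_{opt}$ modelled on Eq.~\eqref{eq:V_opt1}, replacing $\overline{p}$ by $P(\rho_{\max};\beta_{\mathcal{R}})$, so that the numerator and the denominator of the ratio grow together as $\rho_{\max}$ increases. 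Taking the supremum over $\rho_{\max}\in(1/\alpha_{\mathcal{R}},1]$ and then the maximum over the three regimes produces the claimed constant bound.

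The step I expect to be most delicate is the $\Omega_2$ sub-regime where $\rho_{\max}$ is barely above $1/\alpha_{\mathcal{R}}$ and $\eta\rho_{\max}$ still lies inside the flat portion of the pricing curve, where $P(\cdot;\beta_{\mathcal{R}})=\underline{p}$; there the lower bound on $V_{ol}$ degenerates to $|\mathcal{R}|\underline{p}\eta\rho_{\max}$, and I must verify that $V_{opt}$ is simultaneously small enough for the ratio to stay bounded. The per-resource marginal price has only just risen above $\underline{p}$, so the marginal value any offline solution can extract is at most $P(\rho_{\max};\beta_{\mathcal{R}})\leq\underline{p}e^{\alpha_{\mathcal{R}}\rho_{\max}-1}$; combined with $\rho_{\max}\geq 1/\alpha_{\mathcal{R}}$ this gives a ratio on the order of $\alpha_{\mathcal{R}}/\eta$, still a constant in $|\mathcal{R}|$. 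All resulting bounds depend only on $\eta$, $\beta_{\mathcal{R}}$, and $\gamma$, which establishes the theorem.
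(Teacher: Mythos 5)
Your proposal is correct in substance and reaches the same cancellation mechanism as the paper, but it gets there by a somewhat different route. The paper's proof keeps the worst-case characterizations of Claims~\ref{clm:Omega_2_wc} and \ref{clm:Omega_3_wc} (asserting they survive Assumption~\ref{asmp:balancedMultiRes}, now with the non-critical resources pinned at the $\eta$-level rather than at $0$), plugs those explicit configurations into $V_{ol}$ and $V_{opt}$, and reads off the sharp constant $\max\{\alpha_{\mathcal{R}}/\eta,\,\alpha_{\mathcal{R}}+\theta\}$. You instead bound every admissible instance directly: $V_{ol}\geq|\mathcal{R}|\int_{0}^{\eta\rho_{\max}}P(\rho;\beta_{\mathcal{R}})d\rho$ from the balance condition, and an upper bound on $V_{opt}$ split over $\Omega_1,\Omega_2,\Omega_3$. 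This buys robustness — you never need to re-verify the single-resource-concentration worst cases under the new assumption, which the paper only asserts "is easy" — at the price of a looser (but still $|\mathcal{R}|$-independent) constant, which is all the theorem demands. One step you should make explicit: in $\Omega_2$ you cannot literally transplant Eq.~\eqref{eq:V_opt1} with $\overline{p}$ replaced by $P(\rho_{\max};\beta_{\mathcal{R}})$, because the offline optimum may also reuse the users the online algorithm served; the clean statement is the additive decomposition $V_{opt}\leq V_{ol}+|\mathcal{R}|\,P(\rho_{\max}\mbox{*};\beta_{\mathcal{R}})$ (every rejected user faces a quoted price $\mathscr{P}_i$ that is a convex combination of per-resource prices, each at most $P(\rho_{\max}\mbox{*};\beta_{\mathcal{R}})$, and the offline solution serves at most one unit of each resource), with $\overline{p}$ in place of $P(\rho_{\max}\mbox{*};\beta_{\mathcal{R}})$ in $\Omega_3$ where rejections can occur at infinite price. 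With that decomposition your ratio bounds become $1+P(\rho_{\max}\mbox{*};\beta_{\mathcal{R}})/\int_{0}^{\eta\rho_{\max}}P(\rho;\beta_{\mathcal{R}})d\rho$ in $\Omega_2$ and $1+\overline{p}/\int_{0}^{\eta}P(\rho;\beta_{\mathcal{R}})d\rho$ in $\Omega_3$, i.e., of order $\alpha_{\mathcal{R}}/\eta$ near $\rho_{\max}=1/\alpha_{\mathcal{R}}$ and $1/\eta'$ otherwise, matching the orders of the paper's constants and establishing the theorem.
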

\begin{proof}
	It is easy to prove that Claim~\ref{clm:Omega_2_wc} and \ref{clm:Omega_3_wc} are still valid under Assumption~\ref{asmp:balancedMultiRes}. For $\boldsymbol{\rho}\mbox{*}\in\Omega_{2}$, any worst case gives $V_{ol}\left(\boldsymbol{\rho}\mbox{*}\right)=\left[1+\left(\left|\mathcal{R}\right|-1\right)\eta\right]\underline{p}/\alpha_{\mathcal{R}}$ and $V_{opt}\left(\boldsymbol{\rho}\mbox{*}\right)=\left|\mathcal{R}\right|\underline{p}$, and hence the corresponding competitive ratio $\alpha^{2}=\frac{\left|\mathcal{R}\right|}{1+\left(\left|\mathcal{R}\right|-1\right)\eta}\alpha_{\mathcal{R}}$. Since $\eta\leq 1$, we have $\alpha^{2}\leq\alpha_{\mathcal{R}}/\eta$. For $\boldsymbol{\rho}\mbox{*}\in\Omega_{3}$, as $\epsilon\rightarrow0$, any worst case gives
	\[
	V_{ol}\left(\boldsymbol{\rho}\mbox{*}\right)=\int_{0}^{1}P\left(\rho;\beta_{\mathcal{R}}\right)d\rho+\left(\left|\mathcal{R}\right|-1\right)\int_{0}^{\eta}P\left(\rho;\beta_{\mathcal{R}}\right)d\rho,
	\]
	and
	\begin{align*}
	V_{opt}\left(\boldsymbol{\rho}\mbox{*}\right)= &~ \alpha_{\mathcal{R}}V_{ol}\left(\boldsymbol{\rho}\mbox{*}\right)\\
	+ &~ \left(\left|\mathcal{R}\right|-1\right)\left(1+\beta_{\mathcal{R}}-\eta\right)\left(\overline{p}-P\left(\eta;\beta_{\mathcal{R}}\right)\right),
	\end{align*}
	and hence the corresponding competitive ratio
	\[
	\alpha^{3}=\alpha_{\mathcal{R}}+\frac{\left(1+\beta_{\mathcal{R}}-\eta\right)\left(\overline{p}-P\left(\eta;\beta_{\mathcal{R}}\right)\right)}{\int_{0}^{1}P\left(\rho;\beta_{\mathcal{R}}\right)d\rho/\left(\left|\mathcal{R}\right|-1\right)+\int_{0}^{\eta}P\left(\rho;\beta_{\mathcal{R}}\right)d\rho}.
	\]
	Let 
	
	\vspace*{-8mm}
	\[\theta=\frac{\left(1+\beta_{\mathcal{R}}-\eta\right)\left(\overline{p}-P\left(\eta;\beta_{\mathcal{R}}\right)\right)}{\int_{0}^{\eta}P\left(\rho;\beta_{\mathcal{R}}\right)d\rho},
	\]
	we have $\alpha^{3}<\alpha_{\mathcal{R}}+\theta$. Therefore, the worst-cast competitive ratio under Assumption~\ref{asmp:balancedMultiRes} is upper bounded by $\max\{\alpha_{\mathcal{R}}/\eta,\allowbreak\alpha_{\mathcal{R}}+\theta\}$.
\end{proof}


Theorem~\ref{thm:CR-balancedMultiRes} justifies the use of the pricing function in \eqref{eq:priFcn-multiRes}, which is a direct extension of the optimal pricing functions for the single resource type case, but achieves a reasonably good (degraded by a constant factor w.r.t. $\left|\mathcal{R}\right|$) competitive ratio in scenarios with multiple resource types.

\subsection{Pricing Function for Multiple Time Slots}
\label{sec:multi_timeslots}

In real-world cloud systems, a user job runs over its specified resource bundle in the cloud, across one or more time slots. Once the job is completed, the resources that it occupies are then released back to the cloud pool. Therefore, cloud resources can be reused over time. Let $\mathcal{T}$ denote the set of all time slots that the system spans, and $\mathcal{T}_{i}$ be the set of time slots when user $i$ requires to use resources. $y_{i}\left(t\right)$ is an indication function as follows:
\begin{equation}\label{eq:yt}
y_{i}\left(t\right)=\begin{cases}
1, & t\in\mathcal{T}_{i}\\
0, & \text{otherwise}
\end{cases}.
\end{equation}
The offline social welfare maximization problem becomes:
	\begin{equation}\label{eq:prob3}
	\textrm{maximize}\quad \sum_{i\in\mathcal{U}}v_{i}x_{i}
	\end{equation}
	
	\vspace*{-3mm}
	\quad\text{s.t.:}
		\vspace*{-6mm}
		
	\begin{align*}
	\sum_{i\in\mathcal{U}}d_{i,r}x_{i}y_{i}\left(t\right)\leq1,&\forall r\in\mathcal{R},t\in\mathcal{T} ~~\quad (\ref{eq:prob3}a)\\
	x_{i}\in\left\{0,1\right\},&\forall i\in\mathcal{U} \quad\quad\quad\quad (\ref{eq:prob3}b)
	\end{align*}
 Since $y_{i}\left(t\right)$ is input (not a variable) in this optimization problem, problem \eqref{eq:prob3} is still an ILP.
The online resource allocation algorithm we apply to determine $x_i$ upon the arrival of user $i$ is still the same as Alg.~\ref{alg:postedPrice}, 
except that $d_{i}$ and the pricing function will be redefined, and $y_{i}\left(t\right)$ needs to be further determined.

In fact, problem \eqref{eq:prob2} and problem \eqref{eq:prob3} are equivalent if we consider resource $r$ in different time slots to be of different resource types. More specifically, let $d_{i,r\left(t\right)}=d_{i,r}y_{i}\left(t\right)$, where $r\left(t\right)\in\mathcal{R}\left(t\right)$, and $t\in\mathcal{T}$. Then problem \eqref{eq:prob3} will have exactly the same form as problem \eqref{eq:prob3}. Therefore, according to Lemma~\ref{lem:Omega_3_CR} and Theorem~\ref{thm:CR-multiRes}, the worst-case competitive ratio will increase roughly linearly with the number of time slots, $\left|\mathcal{T}\right|$, if no other assumptions are made. If the number of slots required by each user is upper bounded, then the worst-case competitive ratio will increase roughly linearly with the maximum number of slots required by each user, which is also undesirable. Intuitively, this issue is caused by the fact that, if one of the time slots required by a user is unavailable (e.g., no available resources), then the demand of the user cannot be satisfied as a whole, even if other required slots are all available.

To address the aforementioned problem, we propose a strategy that satisfies users' demand in an elastic manner. Specifically, assuming we are allowed to satisfy user $i$ with any $\left|\mathcal{T}_{i}\right|$ slots in a larger set of time slots, $\mathcal{T}'_{i}\supseteq\mathcal{T}_{i}$, we can significantly improve the competitive ratio by choosing $\left|\mathcal{T}_{i}\right|$ slots from $\mathcal{T}'_{i}$ that yield the lowest total price. Concretely, the corresponding online resource scheduling strategy is that, we try to satisfy each user $i$ with $\left|\mathcal{T}_{i}\right|$ time slots chosen from $\mathcal{T}'_{i}$, and $\left|\mathcal{T}'_{i}\right|=\lceil\lambda\left|\mathcal{T}_{i}\right|\rceil$, where $\lambda$ is a constant factor. Here $\mathcal{T}'_{i}$ can be interpreted as the allowed (loosened) time interval for completing the user's job.
The overall price to user $i$ is computed as the minimum possible total price of $\left|\mathcal{T}_{i}\right|$ time slots selected from $\mathcal{T}'_{i}$.

\textcolor{black}{From the user perspective, the price each user receives is determined upon its arrival in the system, and does not change afterwards. A user $i$ accepts the price and leases resource at quantities $d_{i,r}$'s in the chosen $\left|\mathcal{T}_{i}\right|$ time slots, if and only if $v_{i}$ is no smaller than the overall price. Once a user accepts the price, its job is guaranteed to be completed within $\lambda\left|\mathcal{T}_{i}\right|$. If the provider tells that a job cannot be completed within $\lambda\left|\mathcal{T}_{i}\right|$, the job will receive an infinitely high price according to the pricing function upon arrival ({\em i.e.}, the user will reject the price and the job will not be executed).}

In fact, similar non-consecutive execution schemes have been implemented on Amazon EC2 Spot Instance \cite{amazon_ec2si_interruptions}, and have been discussed in the literature \cite{zhou2016efficient}. Here, we further justify the use of non-consecutive execution schemes from a theoretical point of view.
 
Without loss of generality, we assume both $\mathcal{T}_{i}$ and $\mathcal{T}'_{i}$ are consecutive time slots; and if $\mathcal{T}_{i}=\left[\tau_{i},\tau_{i}+\left|\mathcal{T}_{i}\right|-1\right]$, we let $\mathcal{T}'_{i}=\left[\tau_{i},\tau_{i}+\left|\mathcal{T}'_{i}\right|-1\right]$.
To formulate the offline version of the modified social welfare maximization problem, we can add the following constraints to problem \eqref{eq:prob3}:
	\begin{align*}\label{eq:prob3a_rsrCap}
	\sum_{t\in\mathcal{T}'_{i}}y_{i}\left(t\right)=\left|\mathcal{T}_{i}\right|,&\forall i\in\mathcal{U} \quad\quad\quad\quad\quad\quad (\ref{eq:prob3}c)\\
	y_{i}\left(t\right)\in\left\{0,1\right\},&\forall i\in\mathcal{U},t\in\mathcal{T} \quad\quad\quad (\ref{eq:prob3}d)
	\end{align*}
Note that $y_{i}\left(t\right)$ now follows Eq.~(\ref{eq:prob3}c) and (\ref{eq:prob3}d), instead of Eq.~\eqref{eq:yt}, and $y_{i}\left(t\right)$ becomes a variable. Therefore, the new problem is no longer an ILP. 

 We reuse the notation $\mathscr{P}_{i}\left(\cdot\right)$ to denote the pricing function for user $i$; and we reuse the symbols, $d_{i}$ and $\boldsymbol{\rho}$, to taken into account different time slots, \ie, $d_{i}=\left|\mathcal{T}_i\right|\sum_{r\in\mathcal{R}}d_{i,r}$, and $\boldsymbol{\rho}$ denotes the vector of $\rho_{r}\left(t\right),\forall r\in\mathcal{R},t\in\mathcal{T}$. The definitions of $\underline{p}$ and $\overline{p}$ remain the same, \ie, $\underline{p}=\inf_{i}\frac{v_{i}}{d_{i}}$ and $\overline{p}=\sup_{i}\frac{v_{i}}{d_{i}}$. Then under Assumption~\ref{asmp:balancedMultiRes}, our pricing strategy for online resource allocation can be described by the following pricing function:
\begin{equation}
	\mathscr{P}_{i}\left(\boldsymbol{\rho}\right)=\frac{1}{d_{i}}\min_{\boldsymbol{y}_{i}\in\mathcal{Y}_{i}}\left[\sum_{t\in\mathcal{T}'_{i}}\sum_{r\in\mathcal{R}}d_{i,r}y_{i}\left(t\right)P\left(\rho_{r}\left(t\right);\beta_{\mathcal{R}}\right)\right],\label{eq:P-multiSlots}
\end{equation}
where $\mathcal{Y}_{i}$ is defined by Eq.~(\ref{eq:prob3}c) and (\ref{eq:prob3}d), 
 and $P\left(\rho_{r}(t);\beta_{\mathcal{R}}\right)$ is defined by Eq.~\eqref{eq:priFcn1}, \eqref{eq:priFcn2} and \eqref{eq:priFcn3}.
 Obviously, $d_{i}\mathscr{P}_{i}\left(\boldsymbol{\rho}\right)$ is the total price for user $i$.
 
 \textcolor{black}{In general, $\mathscr{P}_{i}\left(\boldsymbol{\rho}\right)$ sets different unit prices for different time slots, according to the scheduled resource utilization levels. Note that, the overall price that each user receives for its resource demand over the requested resource usage duration is determined when the user comes to the system and requests resources, and does not change over the course.}


Given an arbitrary set of time slots $\mathcal{T}$, and the corresponding time horizon $\left|\mathcal{T}\right|$, any $\mathcal{T}_{i}\nsubseteq\mathcal{T}$ can be ignored since it cannot be satisfied anyway. Furthermore, we ignore the marginal effect of any $\mathcal{T}'_{i}\nsubseteq\mathcal{T}$, since $\left|\mathcal{T}\right|$ is usually significantly larger than $\left|\mathcal{T}'_{i}\right|$. Thus, we assume $\mathcal{T}_{i},\mathcal{T}'_{i}\subseteq\mathcal{T},\forall i\in\mathcal{U}$. As we did to analyze competitive ratios for multiple resource types, we divide possible values of final resource utilization levels into three cases: (i) $\rho_{r}\mbox{*}\left(t\right)\in\left[0,1/\alpha_{R}\right],\forall r\in\mathcal{R},t\in\mathcal{T}$; (ii) there exists an $r\in\mathcal{R}$ and a $t\in\mathcal{T}$ such that $\rho_{r}\mbox{*}\left(t\right)\in\left(1/\alpha_{R},1\right)$, but no $r\in\mathcal{R}$ or $t\in\mathcal{T}$ such that $\rho_{r}\mbox{*}\left(t\right)=1$; (iii) there exists an $r\in\mathcal{R}$ and a $t\in\mathcal{T}$ such that $\rho_{r}\mbox{*}\left(t\right)=1$. Here, $\alpha_{R}$ is defined by Eq.~\eqref{eq:alpha_1}, \eqref{eq:alpha_2} or \eqref{eq:alpha_3} for $\beta=\beta_{R}$. We denote the three cases by $\boldsymbol{\rho}\mbox{*}\in\Pi_{1}$, $\boldsymbol{\rho}\mbox{*}\in\Pi_{2}$ and $\boldsymbol{\rho}\mbox{*}\in\Pi_{3}$, respectively.

\begin{lemma}\label{lem:Pi_1_CR}
	For $\boldsymbol{\rho}\mbox{*}\in\Pi_{1}$, the worst-case competitive ratio achieved by our online resource scheduling strategy using pricing function (\ref{eq:P-multiSlots}) is $\alpha^{1}=1$.
\end{lemma}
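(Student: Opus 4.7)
The plan is to mirror the argument of Lemma~\ref{lem:Omega_1_CR}, extended to the temporal dimension. The defining property of $\boldsymbol{\rho}\mbox{*}\in\Pi_{1}$ is that for every resource $r\in\mathcal{R}$ and every time slot $t\in\mathcal{T}$, the final utilization $\rho_{r}\mbox{*}(t)\in[0,1/\alpha_{\mathcal{R}}]$ falls entirely within the constant portion of the pricing function $P(\cdot;\beta_{\mathcal{R}})$. By the definition of $P$ in Eq.~\eqref{eq:priFcn1}, \eqref{eq:priFcn2}, or \eqref{eq:priFcn3}, this implies $P(\rho_{r}(t);\beta_{\mathcal{R}})=\underline{p}$ at every $(r,t)$ reached throughout the execution of the algorithm.

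I would then substitute this constant value into the pricing function \eqref{eq:P-multiSlots}. For any feasible schedule $\boldsymbol{y}_{i}\in\mathcal{Y}_{i}$, constraint~(\ref{eq:prob3}c) forces $\sum_{t\in\mathcal{T}'_{i}}y_{i}(t)=|\mathcal{T}_{i}|$, so
\[
\sum_{t\in\mathcal{T}'_{i}}\sum_{r\in\mathcal{R}}d_{i,r}y_{i}(t)P(\rho_{r}(t);\beta_{\mathcal{R}})=|\mathcal{T}_{i}|\,\underline{p}\sum_{r\in\mathcal{R}}d_{i,r}=\underline{p}\,d_{i},
\]
independently of the specific choice of $\boldsymbol{y}_{i}$. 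The minimization in \eqref{eq:P-multiSlots} is therefore trivial, and $\mathscr{P}_{i}(\boldsymbol{\rho})=\underline{p}$ for every user $i$ seen by the algorithm.

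By Assumption~\ref{asmp:varCstr}, every user satisfies $v_{i}/d_{i}\geq\underline{p}=\mathscr{P}_{i}(\boldsymbol{\rho})$, so the admission test in Algorithm~\ref{alg:postedPrice} accepts each arriving user (the capacity check $\rho+d_{i}\le 1$ is also respected, since by hypothesis the realized utilizations never exceed $1/\alpha_{\mathcal{R}}<1$). Consequently, the online algorithm serves every user in $\mathcal{U}$, yielding $V_{ol}(\boldsymbol{\rho}\mbox{*})=\sum_{i\in\mathcal{U}}v_{i}$, which is an upper bound on the offline optimum of problem~\eqref{eq:prob3}. Hence $V_{opt}(\boldsymbol{\rho}\mbox{*})=V_{ol}(\boldsymbol{\rho}\mbox{*})$ and $\alpha^{1}=1$.

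The only subtlety I anticipate is confirming that the scheduling choice $\boldsymbol{y}_{i}$ produced by the algorithm is well defined when all candidate schedules tie at price $\underline{p}\,d_{i}$; but because the regime $\boldsymbol{\rho}\mbox{*}\in\Pi_{1}$ is assumed consistent with the final outcome, any tie-breaking rule preserves the utilization bounds and the argument above goes through unchanged.
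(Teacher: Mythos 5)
Your proposal is correct and follows essentially the same route as the paper, which simply notes that the proof mirrors Lemma~\ref{lem:Omega_1_CR}: in $\Pi_{1}$ every per-slot, per-resource utilization stays in the constant part of the pricing function, so $\mathscr{P}_{i}(\boldsymbol{\rho})=\underline{p}$, every user is admitted, and the online welfare equals the total (hence offline-optimal) value.
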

\begin{proof}
	The proof is similar to that of Lemma~\ref{lem:Omega_1_CR} and is omitted.
\end{proof}

\begin{lemma}\label{lem:Pi_2_CR}
	For $\boldsymbol{\rho}\mbox{*}\in\Pi_{2}$, the corresponding worst-case competitive ratio $\alpha^{2}<\frac{\alpha_{\mathcal{R}}}{\left(\lambda-1\right)\eta}+1$, where $\eta$ is defined as in Assumption~\ref{asmp:balancedMultiRes}.
\end{lemma}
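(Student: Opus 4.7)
The plan is to extend the worst-case analysis of Lemma~\ref{lem:Omega_2_CR} (under Assumption~\ref{asmp:balancedMultiRes}, as in the proof of Theorem~\ref{thm:CR-balancedMultiRes}) from a single time slot to multiple slots, and to extract the $1/(\lambda-1)$ saving afforded by elastic scheduling.

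I would first split the offline optimum into users accepted by both the optimal offline solution and the online algorithm (contributing at most $V_{ol}(\boldsymbol{\rho}^{*})$ in aggregate) and a set $\mathcal{U}_{\Delta}$ of users served offline but rejected online. Each $i\in\mathcal{U}_{\Delta}$ satisfies $v_{i}<d_{i}\mathscr{P}_{i}(\boldsymbol{\rho}_{i})$ at its arrival, so
\begin{equation*}
V_{opt}(\boldsymbol{\rho}^{*})\le V_{ol}(\boldsymbol{\rho}^{*})+\sum_{i\in\mathcal{U}_{\Delta}}d_{i}\mathscr{P}_{i}(\boldsymbol{\rho}_{i}),
\end{equation*}
and the lemma reduces to bounding the trailing sum by $\tfrac{\alpha_{\mathcal{R}}}{(\lambda-1)\eta}V_{ol}(\boldsymbol{\rho}^{*})$.

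I would then exploit the elastic scheduling. By \eqref{eq:P-multiSlots}, $d_{i}\mathscr{P}_{i}(\boldsymbol{\rho}_{i})$ is the sum of the per-slot prices $q_{i}(t)=\sum_{r}d_{i,r}P(\rho_{r}(t);\beta_{\mathcal{R}})$ over the cheapest $|\mathcal{T}_{i}|$ slots in the window $\mathcal{T}'_{i}$. Since the remaining $|\mathcal{T}'_{i}|-|\mathcal{T}_{i}|\ge(\lambda-1)|\mathcal{T}_{i}|$ slots each carry $q_{i}(t)$ no smaller than the mean of the chosen ones, we get
\begin{equation*}
d_{i}\mathscr{P}_{i}(\boldsymbol{\rho}_{i})\le\frac{1}{\lambda-1}\sum_{t\in\mathcal{T}'_{i}\setminus\mathrm{cheap}(i)}q_{i}(t),
\end{equation*}
and by monotonicity of $\rho_{r}(t)$ along the arrival sequence the right-hand side is dominated by the same expression evaluated at the final state $\boldsymbol{\rho}^{*}$. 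Applying Property~\ref{prop:stableCR} channel-wise converts each $P(\rho_{r}^{*}(t);\beta_{\mathcal{R}})$ into at most $\alpha_{\mathcal{R}}$ times that channel's contribution $\int_{0}^{\rho_{r}^{*}(t)}P\,d\rho$ to $V_{ol}(\boldsymbol{\rho}^{*})$, and Assumption~\ref{asmp:balancedMultiRes} controls the remaining slot-wise imbalance by the factor $\eta$. Combining these ingredients yields the claimed strict bound.

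The main obstacle is the aggregation across $\mathcal{U}_{\Delta}$: distinct users have overlapping windows $\mathcal{T}'_{i}$, so a naive per-user sum would double-count the same $(r,t)$ channel. I would resolve this in the spirit of the proof of Lemma~\ref{lem:Omega_2_CR}, by re-expressing the combined sum slot-wise and integrating against the marginal residual capacity at each channel, leveraging the time-monotonicity of $\rho_{r}(t)$ to consistently charge each arrival-time price to the final-state contribution to $V_{ol}$. A secondary subtlety is that Property~\ref{prop:stableCR} only gives the tight $P\le\alpha_{\mathcal{R}}V_{ol}^{(r,t)}$ relation in the non-constant regime of $P$; under Assumption~\ref{asmp:balancedMultiRes} with at least one channel pinned above $1/\alpha_{\mathcal{R}}$ in $\Pi_{2}$, the minimum utilization sits at $\eta/\alpha_{\mathcal{R}}$, and a short case analysis reconciles the constant and non-constant pieces without losing the stated constant factor.
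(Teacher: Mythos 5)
Your overall decomposition ($V_{opt}\le V_{ol}+\sum_{i\in\mathcal{U}_{\Delta}}d_{i}\mathscr{P}_{i}$, then the elastic-window inequality $d_{i}\mathscr{P}_{i}\le\frac{1}{\lambda-1}\sum_{t\in\mathcal{T}'_{i}\setminus\mathrm{cheap}(i)}q_{i}(t)$) is a genuinely different route from the paper, but it has a gap that I do not think your sketch can close. In the regime $\Pi_{2}$ the worst case has almost every channel $(r,t)$ in the \emph{constant-price} part of the pricing function --- indeed, in the extremal configuration the slots of $\mathcal{T}_{1}$ have $\rho_{r}\mbox{*}(t)=0$, so their contribution to $V_{ol}$ is zero while their posted price is still $\underline{p}$, and the offline optimum can harvest nearly $\underline{p}$ per unit of capacity there. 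Consequently the channel-wise charging you propose via Property~\ref{prop:stableCR} (bounding $P(\rho_{r}\mbox{*}(t))$ by $\alpha_{\mathcal{R}}\int_{0}^{\rho_{r}\mbox{*}(t)}P$) simply does not hold on those channels: the ratio is unbounded as utilization tends to zero, and this is not a ``short case analysis'' reconciling the constant and non-constant pieces --- it is exactly where the competitive loss sits. Relatedly, the double counting across overlapping windows that you flag as the ``main obstacle'' is not a technicality that slot-wise re-expression plus ``marginal residual capacity'' resolves: for a fixed slot $t$, the rejected users whose windows merely \emph{contain} $t$ are not allocated at $t$ by either solution, so their aggregate demand at $t$ is not controlled by any per-channel capacity constraint, and the sum you need to bound is not proportional to anything that $V_{ol}$ controls.

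What the paper uses instead, and what is missing from your argument, is a combinatorial localization of the rejected users: since $|\mathcal{T}'_{i}|=\lceil\lambda|\mathcal{T}_{i}|\rceil$ and the online price in $\mathcal{T}_{1}$ is the universally acceptable $\underline{p}$, a user can be rejected in $\Pi_{2}$ only if more than $\lceil\frac{\lambda-1}{\lambda}|\mathcal{T}'_{i}|\rceil$ of its (consecutive) window slots lie in $\mathcal{T}_{2}$. Hence every offline-only gain is confined to a consecutive neighborhood $\mathcal{T}'_{2}$ of $\mathcal{T}_{2}$ of size less than $2\lfloor\frac{1}{\lambda-1}|\mathcal{T}_{2}|\rfloor+|\mathcal{T}_{2}|$, giving $V_{opt}<\frac{\lambda+1}{\lambda-1}|\mathcal{T}_{2}||\mathcal{R}|\underline{p}$, while the online solution has already collected at least $[1+(|\mathcal{R}|-1)\eta]\underline{p}/\alpha_{\mathcal{R}}$ per slot of $\mathcal{T}_{2}$; the ratio of these two quantities yields the stated bound. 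In other words, the binding mechanism in $\Pi_{2}$ is capacity and the \emph{location} of the elevated-price slots, not the magnitude of the prices, so a purely price-charging argument of the kind you outline cannot recover the $\frac{1}{(\lambda-1)\eta}$ dependence without this window-counting step. If you want to salvage your route, you would need to prove precisely this localization claim first, at which point your per-user inequality becomes unnecessary.
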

\begin{proof}
	Let $\mathcal{T}_{1}=\left\{ t|\rho_{r}\mbox{*}\left(t\right)\in\left[0,1/\alpha_{\mathcal{R}}\right],\forall r\in\mathcal{R}\right\} $, and $\mathcal{T}_{2}=\{ t|\allowbreak\rho_{r}\mbox{*}\left(t\right)\in\left(1/\alpha_{\mathcal{R}},1\right),\exists r\in\mathcal{R}\} $, and $\mathcal{T}_{1}\cup\mathcal{T}_{2}=\mathcal{T}$. For $\boldsymbol{\rho}\mbox{*}\in\Pi_{2}$, following the proof of Lemma~\ref{clm:Omega_2_wc}, there exists a worst case that happens when $\rho_{r}\mbox{*}\left(t\right)=0$ for all $r\in\mathcal{R}$ and $t\in\mathcal{T}_{1}$; while for $t\in\mathcal{T}_{2}$, $\rho_{r}\mbox{*}\left(t\right)=1/\alpha_{r}+\epsilon$ for some $r'\in\mathcal{R}$, and $\rho_{r}\mbox{*}\left(t\right)=\eta\left(1/\alpha_{r}+\epsilon\right)$ for $r\in\mathcal{R}\backslash\left\{r'\right\}$. Here, $\epsilon$ is an arbitrarily small number. Following the proof of Theorem~\ref{thm:CR-balancedMultiRes}, as $\epsilon\rightarrow0$, we have
	\[
	V_{ol}\left(\boldsymbol{\rho}\mbox{*}\right)=\left|\mathcal{T}_{2}\right|\left[1+\left(\left|\mathcal{R}\right|-1\right)\eta\right]\underline{p}/\alpha_{\mathcal{R}},
	\]
	as the minimum total value of the online solution. For any $\mathcal{T}_{i}$, since $\left|\mathcal{T}'_{i}\right|=\lceil\lambda\left|\mathcal{T}_{i}\right|\rceil$, the demand will be satisfied regardless of the user's valuation, unless $\left|\mathcal{T}'_{i}\cap\mathcal{T}_{2}\right|>\lceil\frac{\lambda-1}{\lambda}\left|\mathcal{T}'_{i}\right|\rceil$. In other words, if the demand of user $i$ is not satisfied by the online solution, there must be at least $\lceil\frac{\lambda-1}{\lambda}\left|\mathcal{T}'_{i}\right|\rceil$ time slots in $\mathcal{T}'_{i}$ that also belong to $\mathcal{T}_{2}$; or equivalently, for any $\mathcal{S}\subseteq\mathcal{T}$, and any $\mathcal{T}'_{i}\subseteq\mathcal{S}$ that is not satisfied by the online solution, $\left|\mathcal{T}'_{i}\right|<\lfloor\frac{\lambda}{\lambda-1}\left|\mathcal{S}\cap\mathcal{T}_{2}\right|\rfloor$, and hence $\left|\mathcal{T}_{i}\right|<\lfloor\frac{1}{\lambda-1}\left|\mathcal{S}\cap\mathcal{T}_{2}\right|\rfloor$. Let $\mathcal{T}'_{2}$ be the union of all sets of consecutive time slots that contain $\mathcal{T}_{2}$, and have a cardinality of $\lfloor\frac{\lambda}{\lambda-1}\left|\mathcal{T}_{2}\right|\rfloor-1$. When $\left|\mathcal{T}_{i}\right|<\lfloor\frac{1}{\lambda-1}\left|\mathcal{S}\cap\mathcal{T}_{2}\right|\rfloor$, since at least one type of resource in at least one required time slot has a unit price above $\underline{p}$, there can be a set of users in a worst case, demanding all resources in $\left|\mathcal{T}'_{2}\right|$ time slots, with $	\mathscr{P}_{i}\left(\boldsymbol{\rho}\right)=\underline{p}$, where $\left|\mathcal{T}'_{2}\right|<2\lfloor\frac{1}{\lambda-1}\left|\mathcal{T}_{2}\right|\rfloor+\left|\mathcal{T}_{2}\right|$. Thus we have the maximum optimal offline total value
	\[
	V_{opt}\left(\boldsymbol{\rho}\mbox{*}\right)<\left|\mathcal{T}'_{2}\right|\left|\mathcal{R}\right|\underline{p}<\left(2\lfloor\frac{1}{\lambda-1}\left|\mathcal{T}_{2}\right|\rfloor+\left|\mathcal{T}_{2}\right|\right)\left|\mathcal{R}\right|\underline{p}.
	\]
	Therefore, for $\boldsymbol{\rho}\mbox{*}\in\Pi_{2}$, the worst-cast competitive ratio
	\begin{equation}\label{eq:Pi_2_CR}
	\begin{split}
	\alpha^{2}< &~ \frac{\frac{\lambda+1}{\lambda-1}\left|\mathcal{R}\right|\left|\mathcal{T}_{2}\right|\underline{p}}{\left|\mathcal{T}_{2}\right|\left[1+\left(\left|\mathcal{R}\right|-1\right)\eta\right]\underline{p}/\alpha_{\mathcal{R}}}\\
	= &~ \frac{\lambda+1}{\lambda-1}\frac{\left|\mathcal{R}\right|}{1+\left(\left|\mathcal{R}\right|-1\right)\eta}\alpha_{\mathcal{R}}\\
	\leq &~ \frac{\left(\lambda+1\right)\alpha_{\mathcal{R}}}{\left(\lambda-1\right)\eta}.
	\end{split}
	\end{equation}
\end{proof}

\begin{lemma}\label{lem:Pi_3_CR}
	For $\boldsymbol{\rho}\mbox{*}\in\Pi_{3}$, the corresponding worst-case competitive ratio $\alpha^{3}<\frac{\lambda+1}{\left(\lambda-1\right)\eta'}$, where $\eta'=\int_{0}^{\eta}P\left(\rho;\beta_{\mathcal{R}}\right)d\rho/\overline{p}$.
\end{lemma}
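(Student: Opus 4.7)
The plan is to follow the template of Lemma~\ref{lem:Pi_2_CR}, but with the ``problematic'' time slots now being those in which at least one resource type is exhausted, and with a strictly stronger lower bound on $V_{ol}$ obtained from Assumption~\ref{asmp:balancedMultiRes}. First I would define $\mathcal{T}_{3}=\{t\in\mathcal{T}:\exists r\in\mathcal{R},\rho_{r}\mbox{*}(t)=1\}$, and reduce to a worst case, following the argument used for Claim~\ref{clm:Omega_3_wc}, in which each $t\in\mathcal{T}_{3}$ has exactly one exhausted type $r'(t)$ while the utilizations of all other resource types at $t$ sit at the balanced-assumption boundary, i.e.\ $\rho_{r}\mbox{*}(t)=\eta$ for $r\neq r'(t)$; slots outside $\mathcal{T}_{3}$ can be taken with zero utilization in the worst case, exactly as in Lemma~\ref{lem:Pi_2_CR}.

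Next I would lower-bound $V_{ol}(\boldsymbol{\rho}\mbox{*})$ per slot in $\mathcal{T}_{3}$. The exhausted type contributes $\int_{0}^{1}P(\rho;\beta_{\mathcal{R}})d\rho=\overline{p}/\alpha_{\mathcal{R}}$ to the online value at that slot, while each of the remaining $|\mathcal{R}|-1$ types contributes at least $\int_{0}^{\eta}P(\rho;\beta_{\mathcal{R}})d\rho=\eta'\overline{p}$. Since $P$ is non-decreasing and $\eta\leq 1$, we have $\overline{p}/\alpha_{\mathcal{R}}\geq\eta'\overline{p}$, so the uniform per-slot bound $V_{ol}(\boldsymbol{\rho}\mbox{*})\geq|\mathcal{T}_{3}||\mathcal{R}|\eta'\overline{p}$ follows.

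Third I would upper-bound $V_{opt}(\boldsymbol{\rho}\mbox{*})$ by repeating the elasticity counting argument from Lemma~\ref{lem:Pi_2_CR}. Any user $i$ whose demand is not satisfied by the online algorithm must have at least $\lceil\tfrac{\lambda-1}{\lambda}|\mathcal{T}'_{i}|\rceil$ of its slots inside $\mathcal{T}_{3}$, since otherwise the $\left|\mathcal{T}_{i}\right|$ cheapest slots in $\mathcal{T}'_{i}$ would all lie outside $\mathcal{T}_{3}$ and the resulting price would be $\underline{p}d_i$, which is acceptable. Consequently, the union $\mathcal{T}'_{3}$ of all windows $\mathcal{T}'_{i}$ reachable by the offline solution but blocked online is contained in the $\lfloor\tfrac{\lambda}{\lambda-1}|\mathcal{T}_{3}|\rfloor-1$-neighborhood of $\mathcal{T}_{3}$, giving $|\mathcal{T}'_{3}|<\tfrac{\lambda+1}{\lambda-1}|\mathcal{T}_{3}|$. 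Since each slot-resource pair contributes at most $\overline{p}$ to the offline value, $V_{opt}(\boldsymbol{\rho}\mbox{*})<\tfrac{\lambda+1}{\lambda-1}|\mathcal{T}_{3}||\mathcal{R}|\overline{p}$. Dividing the two bounds cancels $|\mathcal{T}_{3}||\mathcal{R}|\overline{p}$ and yields $\alpha^{3}<\tfrac{\lambda+1}{(\lambda-1)\eta'}$.

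The main obstacle is the second step: formally justifying that the worst case can always be reduced to the ``one-exhausted-type-per-slot with the rest exactly at $\eta$'' configuration, and carefully verifying that increasing any $\rho_{r}\mbox{*}(t)$ above this level only decreases $V_{opt}/V_{ol}$ (the same monotonicity flavor already used in Claim~\ref{clm:Omega_3_wc}). A secondary care point is that $\mathcal{T}_{3}$ need not be an interval, so the $\mathcal{T}'_{3}$ counting must be done over connected components of $\mathcal{T}_{3}$ and summed, which still yields the same $\tfrac{\lambda+1}{\lambda-1}$ multiplier.
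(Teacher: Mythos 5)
Your proposal is correct and follows essentially the same route as the paper's own proof: the same worst-case reduction (one exhausted resource type per blocked slot, the others at $\eta$, all other slots at zero utilization), the same per-slot lower bound $V_{ol}\geq\left|\mathcal{T}_{4}\right|\left|\mathcal{R}\right|\eta'\overline{p}$, the same elasticity counting argument showing blocked windows cover at most $\frac{\lambda+1}{\lambda-1}$ times the exhausted slots, and the same division to obtain $\alpha^{3}<\frac{\lambda+1}{\left(\lambda-1\right)\eta'}$. The only cosmetic differences are your relabeling of the exhausted-slot set and the stated equality $\int_{0}^{1}P\left(\rho;\beta_{\mathcal{R}}\right)d\rho=\overline{p}/\alpha_{\mathcal{R}}$ (exact only for the $\beta\geq 1$ pricing function, but harmless since you only use it as a lower bound via monotonicity).
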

\begin{proof}
	Let $\mathcal{T}_{3}=\left\{ t|\rho_{r}\mbox{*}\left(t\right)\in\left[0,1\right),\forall r\in\mathcal{R}\right\} $, and $\mathcal{T}_{4}=\{ t|\rho_{r}\mbox{*}\left(t\right)\allowbreak=1,\exists r\in\mathcal{R}\} $, and $\mathcal{T}_{3}\cup\mathcal{T}_{4}=\mathcal{T}$. For $\boldsymbol{\rho}\mbox{*}\in\Pi_{2}$, following the proof of Lemma~\ref{clm:Omega_3_wc}, there exists a worst case that happens when $\rho_{r}\mbox{*}\left(t\right)=0$ for all $r\in\mathcal{R}$ and $t\in\mathcal{T}_{3}$; while for $t\in\mathcal{T}_{4}$, $\rho_{r}\mbox{*}\left(t\right)=1$ for some $r'\in\mathcal{R}$, and $\rho_{r}\mbox{*}\left(t\right)=\eta$ for $r\in\mathcal{R}\backslash\left\{r'\right\}$. Following the proof of Theorem~\ref{thm:CR-balancedMultiRes}, we have
	\[
	V_{ol}\left(\boldsymbol{\rho}\mbox{*}\right)=\left|\mathcal{T}_{4}\right|\left[\int_{0}^{1}P\left(\rho;\beta_{\mathcal{R}}\right)d\rho+\left(\left|\mathcal{R}\right|-1\right)\int_{0}^{\eta}P\left(\rho;\beta_{\mathcal{R}}\right)d\rho\right],
	\]
	as the minimum total value of the online solution. For any $\mathcal{T}_{i}$, since $\left|\mathcal{T}'_{i}\right|=\lceil\lambda\left|\mathcal{T}_{i}\right|\rceil$, the demand will be satisfied regardless of the user's valuation, unless $\left|\mathcal{T}'_{i}\cap\mathcal{T}_{4}\right|>\lceil\frac{\lambda-1}{\lambda}\left|\mathcal{T}'_{i}\right|\rceil$. In other words, if the demand of user $i$ is not satisfied by the online solution, there must be at least $\lceil\frac{\lambda-1}{\lambda}\left|\mathcal{T}'_{i}\right|\rceil$ time slots in $\mathcal{T}'_{i}$ that also belong to $\mathcal{T}_{4}$; or equivalently, for any $\mathcal{S}\subseteq\mathcal{T}$, and any $\mathcal{T}'_{i}\subseteq\mathcal{S}$ that is not satisfied by the online solution, $\left|\mathcal{T}'_{i}\right|<\lfloor\frac{\lambda}{\lambda-1}\left|\mathcal{S}\cap\mathcal{T}_{4}\right|\rfloor$, and hence $\left|\mathcal{T}_{i}\right|<\lfloor\frac{1}{\lambda-1}\left|\mathcal{S}\cap\mathcal{T}_{4}\right|\rfloor$. Let $\mathcal{T}'_{4}$ be the union of all sets of consecutive time slots that contain $\mathcal{T}_{4}$, and have a cardinality of $\lfloor\frac{\lambda}{\lambda-1}\left|\mathcal{T}_{4}\right|\rfloor-1$. When $\left|\mathcal{T}_{i}\right|<\lfloor\frac{1}{\lambda-1}\left|\mathcal{S}\cap\mathcal{T}_{4}\right|\rfloor$, since at least one type of resource in at least one required time slot is fully occupied, there can be a set of users in a worst case, demanding all resources in $\left|\mathcal{T}'_{4}\right|$ time slots, with $	\mathscr{P}_{i}\left(\boldsymbol{\rho}\right)=\overline{p}$, where $\left|\mathcal{T}'_{4}\right|<2\lfloor\frac{1}{\lambda-1}\left|\mathcal{T}_{4}\right|\rfloor+\left|\mathcal{T}_{4}\right|$. Thus we have the maximum optimal offline total value
	\[
	V_{opt}\left(\boldsymbol{\rho}\mbox{*}\right)<\left|\mathcal{T}'_{4}\right|\left|\mathcal{R}\right|\underline{p}<\left(2\lfloor\frac{1}{\lambda-1}\left|\mathcal{T}_{4}\right|\rfloor+\left|\mathcal{T}_{4}\right|\right)\left|\mathcal{R}\right|\overline{p}.
	\]
	Therefore, for $\boldsymbol{\rho}\mbox{*}\in\Pi_{2}$, the worst-cast competitive ratio
	\begin{equation}\label{eq:Pi_3_CR}
	\begin{split}
	\alpha^{3}< &~ \frac{\frac{\lambda+1}{\lambda-1}\left|\mathcal{R}\right|\left|\mathcal{T}_{4}\right|\overline{p}}{\left|\mathcal{R}\right|\left|\mathcal{T}_{4}\right|\int_{0}^{\eta}P\left(\rho;\beta_{\mathcal{R}}\right)d\rho}\\
	= &~ \frac{\lambda+1}{\left(\lambda-1\right)\eta'}.
	\end{split}
	\end{equation}
\end{proof}

\begin{theorem}\label{thm:CR-multTSlots}
	The worst-cast competitive ratio achieved by our online resource scheduling strategy using pricing function (\ref{eq:P-multiSlots}) is upper bounded by $\frac{\lambda+1}{\lambda-1}\max\{\alpha_{\mathcal{R}}/\eta,1/\eta'\}$, which is a constant with respect to both $\left|\mathcal{R}\right|$ and $\left|\mathcal{T}\right|$. \textcolor{black}{Here, $\alpha_{R}$ is defined by Eq.~\eqref{eq:alpha_1}, \eqref{eq:alpha_2} or \eqref{eq:alpha_3} for $\beta=\beta_{R}$, and $\eta$ is defined as in Assumption~\ref{asmp:balancedMultiRes}.}
\end{theorem}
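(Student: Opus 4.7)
The plan is to observe that the cases $\Pi_{1}$, $\Pi_{2}$, $\Pi_{3}$ exhaustively partition the possible values of the final resource utilization vector $\boldsymbol{\rho}\mbox{*}$, so the overall worst-case competitive ratio is simply the maximum of the three case-specific bounds already established. I would therefore start by invoking Lemma~\ref{lem:Pi_1_CR}, Lemma~\ref{lem:Pi_2_CR}, and Lemma~\ref{lem:Pi_3_CR} to obtain $\alpha^{1}=1$, together with $\alpha^{2}<(\lambda+1)\alpha_{\mathcal{R}}/[(\lambda-1)\eta]$ and $\alpha^{3}<(\lambda+1)/[(\lambda-1)\eta']$, where for $\alpha^{2}$ I would read the tighter bound directly from the final display of that lemma's proof rather than from its statement.

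Next I would take the pointwise maximum of these three bounds. Since $\lambda>1$ and $\alpha_{\mathcal{R}}\geq 1$ while $\eta\leq 1$, the factor $(\lambda+1)\alpha_{\mathcal{R}}/[(\lambda-1)\eta]$ already exceeds $1$, so $\alpha^{1}$ is never the binding term. Factoring the common prefactor $(\lambda+1)/(\lambda-1)$ out of the remaining two bounds then yields $\max\{\alpha^{1},\alpha^{2},\alpha^{3}\}<\frac{\lambda+1}{\lambda-1}\max\{\alpha_{\mathcal{R}}/\eta,\,1/\eta'\}$, which is the claimed inequality.

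To finish the "constant with respect to $|\mathcal{R}|$ and $|\mathcal{T}|$" part of the statement, I would simply inspect the four quantities that appear on the right-hand side: $\lambda$ is a tunable design constant, $\alpha_{\mathcal{R}}$ depends only on $\gamma$ and the common scarcity level $\beta_{\mathcal{R}}$ through Eq.~\eqref{eq:alpha_1}--\eqref{eq:alpha_3}, $\eta$ is the balance constant from Assumption~\ref{asmp:balancedMultiRes}, and $\eta'=\int_{0}^{\eta}P(\rho;\beta_{\mathcal{R}})d\rho/\overline{p}$ is again a function only of $\beta_{\mathcal{R}}$, $\gamma$, and $\eta$. None of them scale with the number of resource types or the length of the time horizon, so the bound is uniform in $|\mathcal{R}|$ and $|\mathcal{T}|$.

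I do not anticipate a genuine obstacle: this theorem is the aggregation step that combines the three per-case lemmas, and the only non-trivial analysis (in particular the elasticity accounting that produced the $(\lambda+1)/(\lambda-1)$ factor from the $\lceil\lambda|\mathcal{T}_{i}|\rceil$-slot relaxation) is already carried out inside the proofs of Lemma~\ref{lem:Pi_2_CR} and Lemma~\ref{lem:Pi_3_CR}. The one point that requires mild care, and is the closest thing to a difficulty, is reconciling the bound on $\alpha^{2}$ written in the statement of Lemma~\ref{lem:Pi_2_CR} with the strictly tighter bound derived at the end of its proof; I would simply note that the theorem uses the latter.
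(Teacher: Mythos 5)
Your proposal is correct and matches the paper's proof, which likewise obtains the theorem immediately by taking the maximum of the bounds in Lemmas~\ref{lem:Pi_1_CR}, \ref{lem:Pi_2_CR} and \ref{lem:Pi_3_CR}. One small quibble: the bound in the statement of Lemma~\ref{lem:Pi_2_CR}, $\alpha_{\mathcal{R}}/[(\lambda-1)\eta]+1$, is actually the \emph{tighter} of the two (it is smaller than the proof's final display $(\lambda+1)\alpha_{\mathcal{R}}/[(\lambda-1)\eta]$ since $\lambda\alpha_{\mathcal{R}}/[(\lambda-1)\eta]\geq 1$), but either version implies the claimed $\frac{\lambda+1}{\lambda-1}\max\{\alpha_{\mathcal{R}}/\eta,1/\eta'\}$, so your argument goes through unchanged.
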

\begin{proof}
	The theorem follows immediately from Lemmas~\ref{lem:Pi_1_CR}, \ref{lem:Pi_2_CR} and \ref{lem:Pi_3_CR}.
\end{proof}

\section{Empirical Studies}
\label{sec:sim}
\textcolor{black}{In this section, we evaluate the proposed pricing and scheduling strategies through simulation studies. To simulate realistic cloud computing scenarios, we relax all the assumptions made before, {\em i.e.}, our parameter settings approximate reality rather than following assumptions we used for theoretical analysis.  
We use a Poisson process to model the arrival of users, and set the arrival rate to be between $20$ and $50$ per time slot. Each user requests $5$ time slots on average and $5$ different types of resources at most, if not otherwise specified. Each user demands $1$ to $3$ percent of each type of resource on average,\footnote{We note these percentages are quite large as compared to the practice that a user may use only a very small percentage of the entire capacity of a cloud system. We set such percentages to evaluate performance of our pricing functions in case that Assumption \ref{asmp:smallDemand} is not true.} with different standard deviation for different resource types, ranging from 0.2 percent to 2 percent. We set $\lambda=1.2$ by default. The time horizon of simulations is set to $1000$ time slots, which is large enough compared to the demand of each user. 
The statistics of the random input variables are stationary in all cases except the last one (shown in Fig.~\ref{fig:sw_arf}). The optimal offline total values are obtained by solving problem \eqref{eq:prob3} with constraints (\ref{eq:prob3}c) and (\ref{eq:prob3}d).}

\textcolor{black}{By relaxing the assumptions, we can now optimize the parameters in our pricing functions, {\em e.g.}, $\beta$, $\underline{p}$ and $\overline{p}$, to maximize the average total social welfare.} Specifically, we use pattern search for the optimization: we repeat each experiment for multiple iterations; in the first iteration, we fix the parameters to random estimates; 
  then we add a perturbation (decays with iterations) to each parameter and run the experiment again; a perturbation is retained from one iteration to the next if the total value is improved. 
 In practice, 
 similar probing of parameter values can be done through online learning techniques such as reinforcement learning.

\textcolor{black}{Our theoretical analysis suggests that, under mild assumptions, the worst-case competitive ratio of social welfare is mainly influenced by the total demand level (see Fig.~\ref{fig:priFcn2_cr}), but not by the number of resource types (Theorem~\ref{thm:CR-balancedMultiRes}), nor by the number of requested time slots (Theorem~\ref{thm:CR-multTSlots}). We now investigate the impact of the three factors on the social welfare and competitive ratio, as well as the robustness of the theoretical results, when the assumptions are relaxed.}

To quantify different demand levels, we define the relative total demand as the ratio between the total demand of all potential users and the total resource supply
Fig.~\ref{fig:sw_demand} shows that, the optimal offline total value, $V_{opt}$, increases almost linearly with a slope of $1$, as the total demand increases. At the same time, the online total value, $V_{ol}$, increases with a smaller slope, causing the competitive ratio to increase noticeably from $1.09$ to $1.78$. Although the results exhibit the average system performance (rather than worst-case competitive ratios), it coincides with our worst-case analysis on the scarcity level, $\beta$, where larger $\beta$ leads to a larger competitive ratio (see Fig.~\ref{fig:priFcn2_cr}).
\begin{figure}[!t]
	\begin{centering}
		\includegraphics[width=0.45\textwidth]{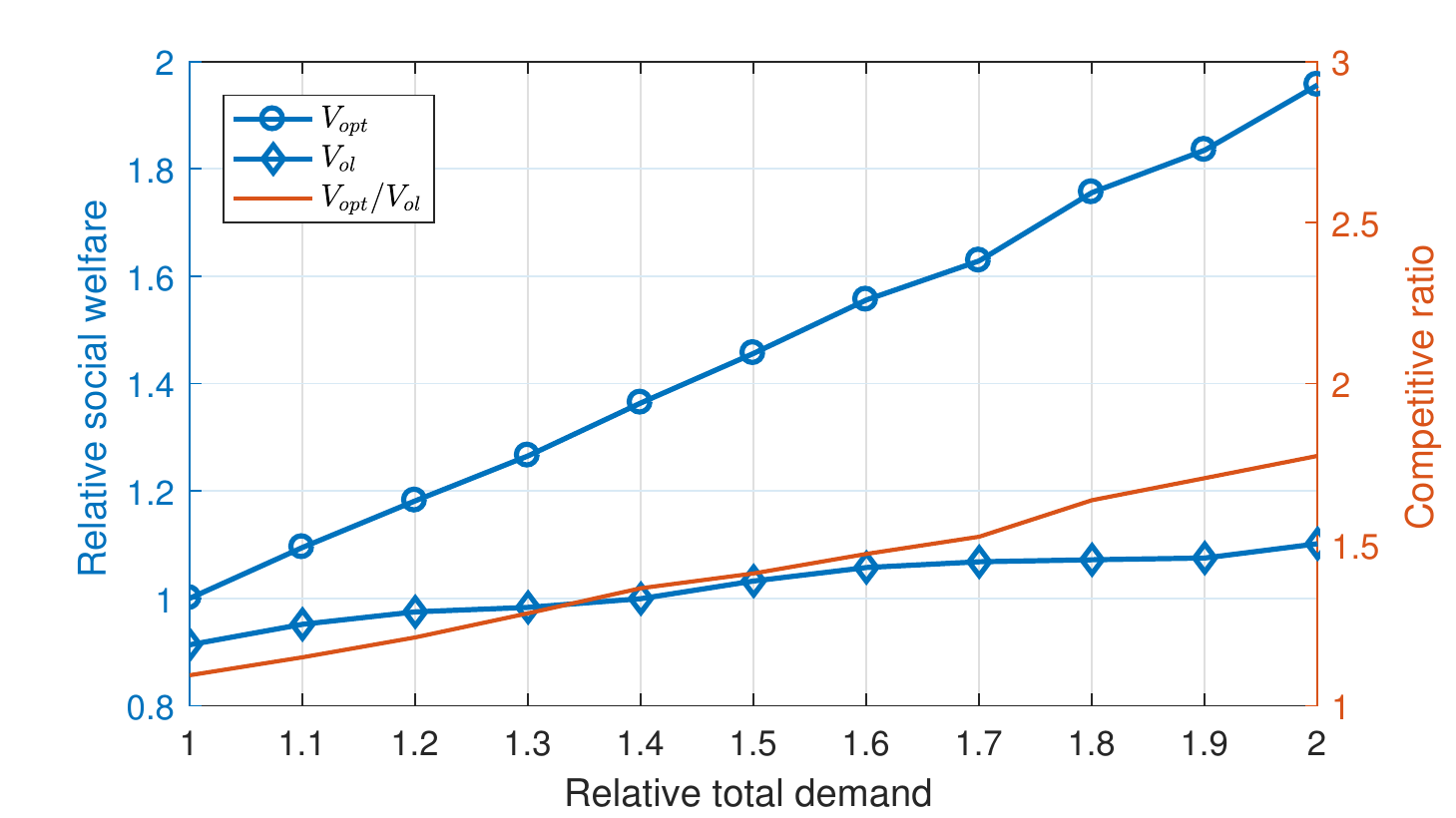}
		\par\end{centering}
	\protect\caption{Online/offline social welfares and competitive ratios given different total demands.}
	\label{fig:sw_demand}
\end{figure}

Next, we vary the number of resource types, $\left|\mathcal{R}\right|$, from $1$ to $10$ to see how it affects the competitive ratio. As shown in Fig.~\ref{fig:sw_nRt}, due to the increase in total demand and total supply, both $V_{opt}$ and $V_{ol}$ increase linearly with the increase of $\left|\mathcal{R}\right|$, while $V_{opt}$ increases slightly faster than $V_{ol}$. Consequently, the competitive ratio only increases mildly (from $1.34$ to $1.57$) as $\left|\mathcal{R}\right|$ increases. The results may indicate that Assumption~\ref{asmp:balancedMultiRes} is slightly violated in practice, since larger $\left|\mathcal{R}\right|$ can increase the chance of unbalanced resource utilization.
\begin{figure}[!t]
\begin{centering}
	\includegraphics[width=0.45\textwidth]{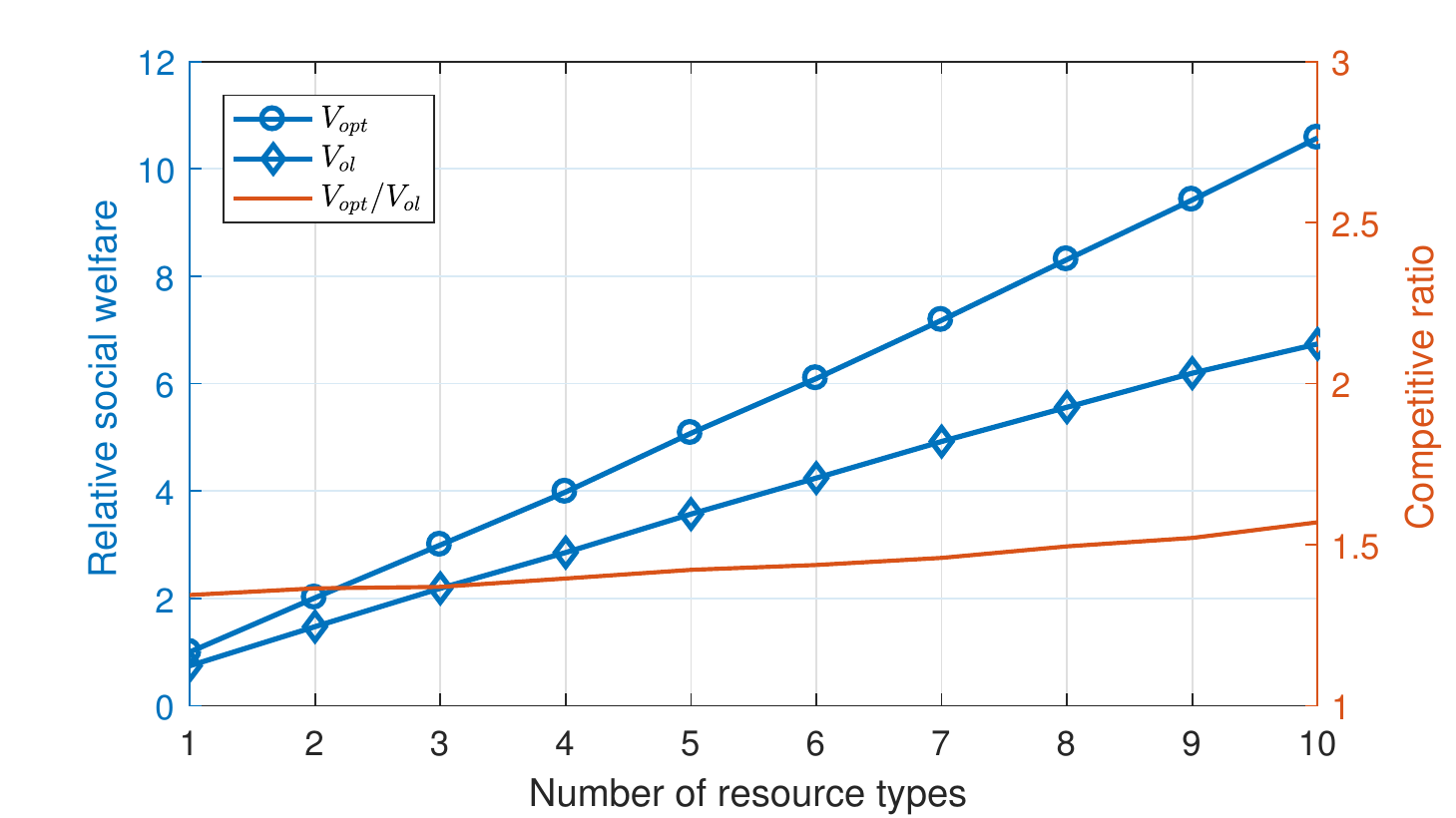}
	\par\end{centering}
\protect\caption{Online/offline social welfares and competitive ratios given different numbers of resource types.}
\label{fig:sw_nRt}
\end{figure}

Similarly, it is also interesting to see how the number of time slots required by each user affects the competitive ratio. Different from the case of varying $\left|\mathcal{R}\right|$, only the total demand will increase with the average number of required time slots. Thus we adjust the demand of each user accordingly to eliminate the effect of increasing relative total demand (see Fig.~\ref{fig:sw_demand}). As we can see in Fig.~\ref{fig:sw_nTs}, $V_{opt}$ and $V_{ol}$ stay almost the same as the average number of required time slots increases, and so does the competitive ratio (varying slightly from $1.41$ to $1.48$). To further verify the proposed strategies, we vary the value of $\lambda$ from $1$ to $1.4$ as shown in Fig.~\ref{fig:sw_lambda}. We test the performance for two different demand levels, with relative total demands of $1.5$ and $3$, respectively. In this case, $V_{opt}$ stays almost the same as $\lambda$ changes and is omitted from the figure. Clearly, $V_{ol}$ increases more from $\lambda=1$ to $\lambda=1.2$ than from $\lambda=1.2$ to $\lambda=1.4$, indicating $\lambda=1.2$ is a good trade-off between the availability and timeliness of service.
\begin{figure}[!t]
	\begin{centering}
		\subfloat[Online/offline social welfares and competitive ratios given different numbers of required time slots.]{
			\includegraphics[width=0.45\textwidth]{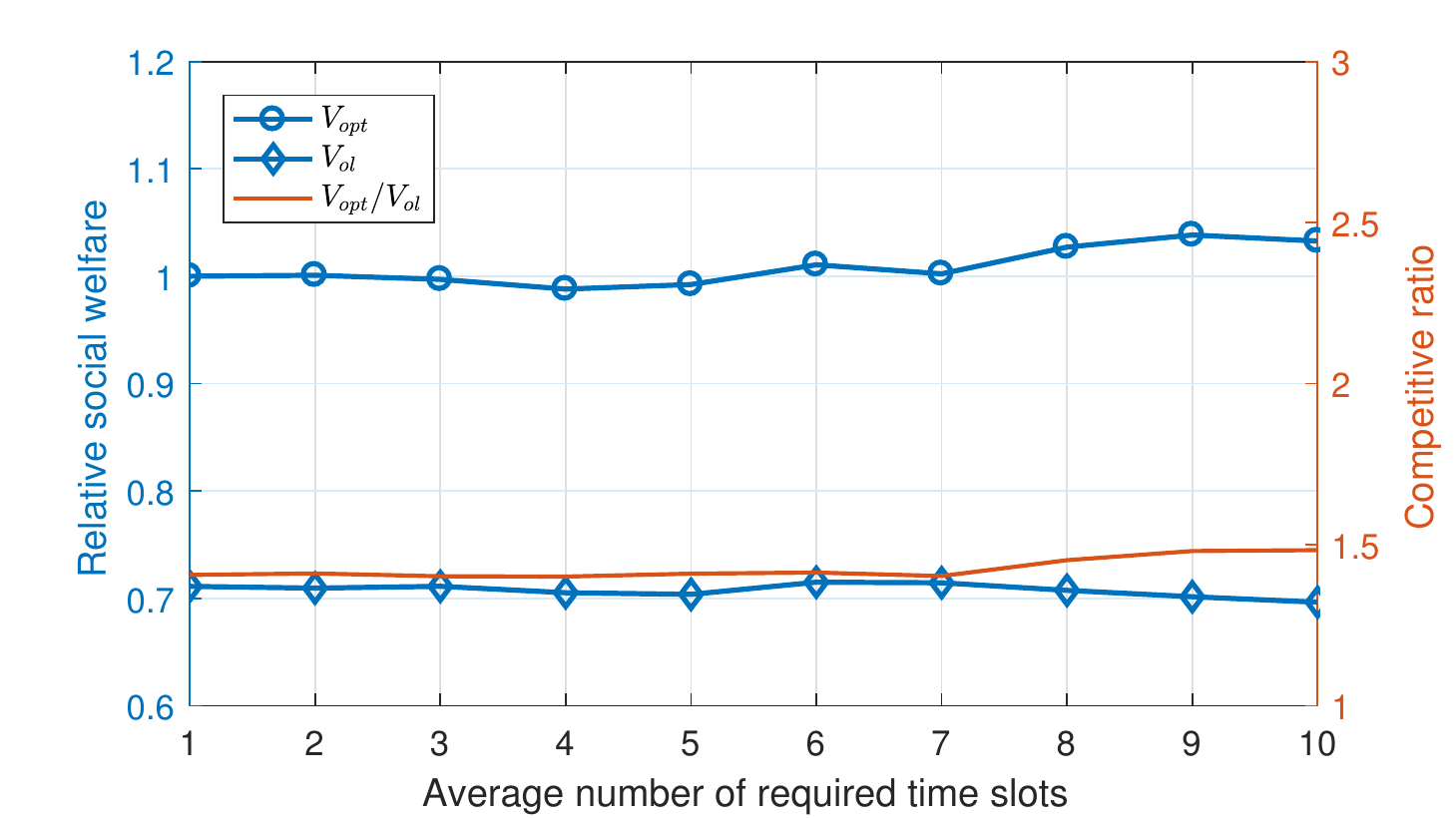}
			\label{fig:sw_nTs}}
		\par\end{centering}
	\vspace*{-10pt}
	\begin{centering}
		\subfloat[Online social welfare at different values of $\lambda$.]{
			\includegraphics[width=0.45\textwidth]{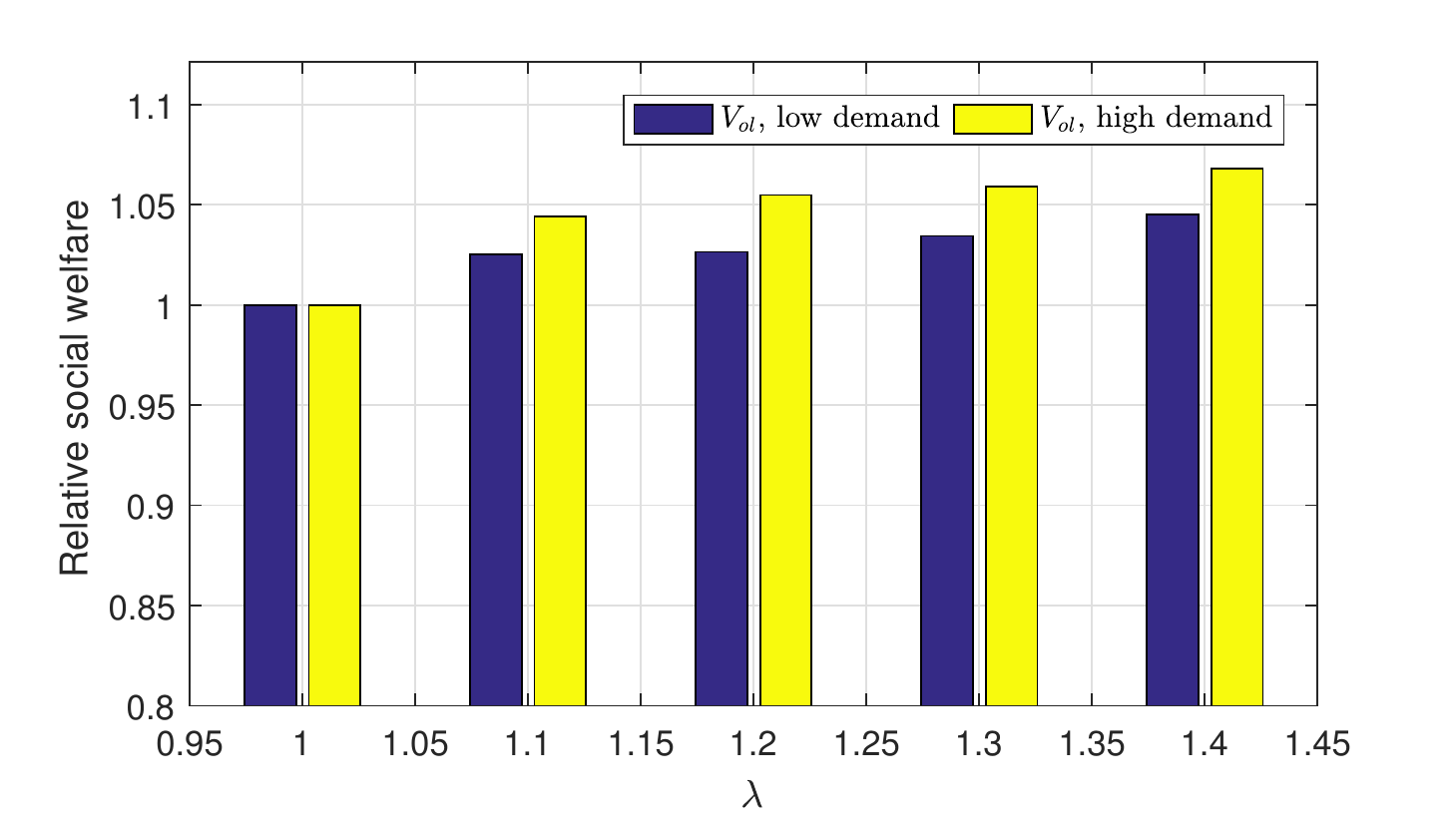}
			\label{fig:sw_lambda}}
		\par\end{centering}
	\vspace*{0pt}
	\caption{Performance of the elastic scheduling strategy discussed in Sec.~\ref{sec:multi_timeslots}.}
	\label{fig:nTs_lambda}
\end{figure}

The simulations conducted so far are based on stationary arrival processes of users. In practice, however, the arrival rate may change over time ({\em e.g.}, fluctuating periodically). To capture this characteristic, we vary the arrival rate according to a sine curve with a period of $100$ time slots. In Fig.~\ref{fig:sw_arf}, as we increase the amplitude of the sine curve from $0$ to $1$ (normalized by the average arrival rate), both $V_{opt}$ and $V_{ol}$ decrease significantly, while the competitive ratio remains at around $1.5$. The reason behind the results is that, when the arrival rate is very low, the resource utilization ratios stay low, so that almost all demands can be satisfied; while when the arrival rate is very high, a high proportion of the demands cannot be satisfied by either the optimal offline solution or the online solution.
\begin{figure}[!t]
\begin{centering}
	\includegraphics[width=0.45\textwidth]{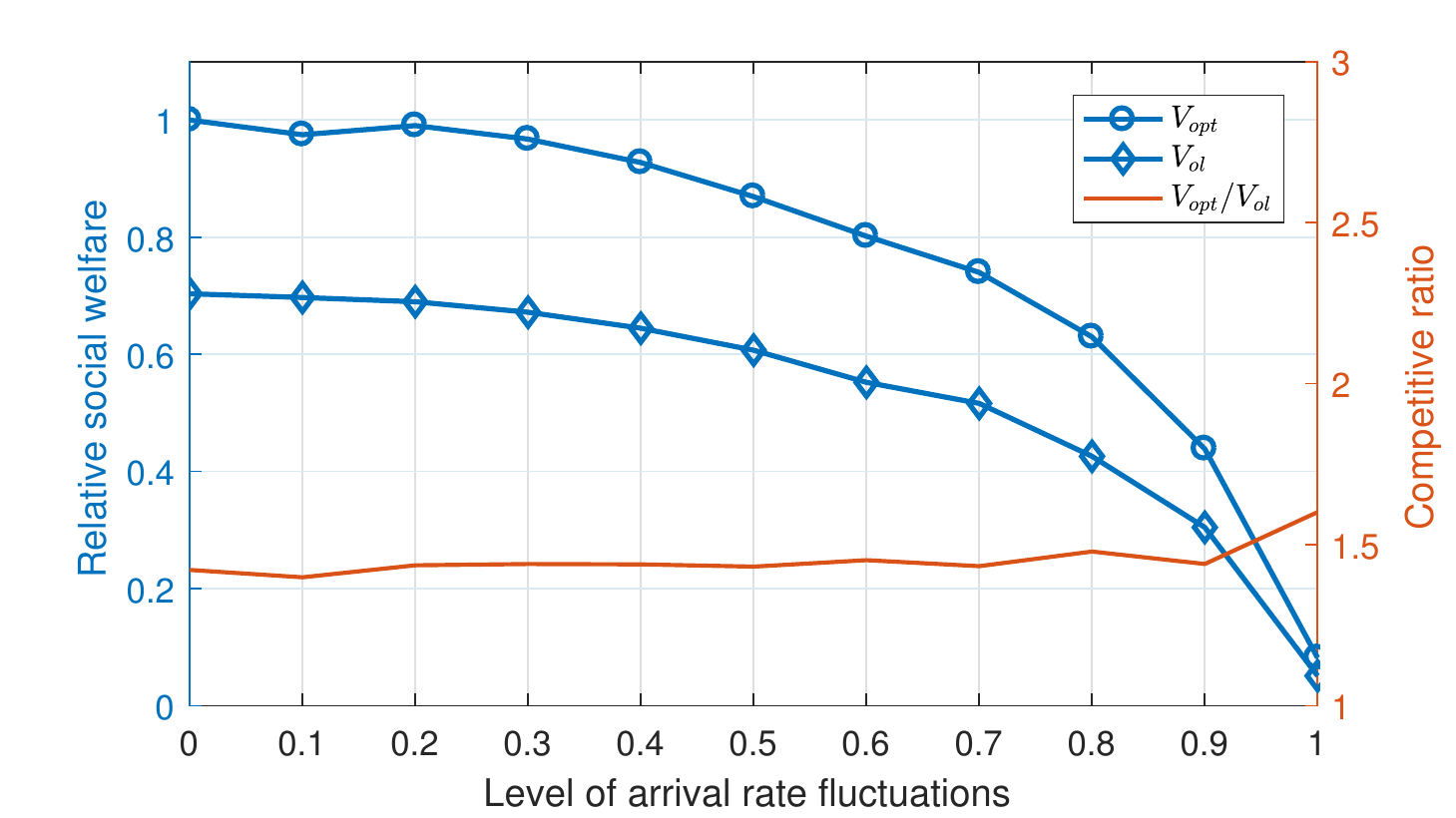}
	\par\end{centering}
\protect\caption{Online/offline social welfare and competitive ratios given different levels of arrival rate fluctuations.}
\label{fig:sw_arf}
\end{figure}

\section{Concluding Remarks}
\label{sec:conclusion}

This paper studies online posted pricing strategies in a number of cloud resource allocation scenarios. We start by investigating the basic case of a single type of cloud resource without resource recycling, and prove optimality of a set of exponential pricing functions in terms of social welfare, which compute unit resource prices based on realtime demand-supply of cloud resources. Exploiting the insights acquired, we further derive pricing functions in practical scenarios with multiple resource types and limited resource occupation durations, and prove tight competitive ratio bounds achieved using these functions, without relying on any particular user arrival process or valuation distribution. Relaxing assumptions made in theoretical analysis, empirical studies further reveal good performance of our pricing functions under realistic settings. Though set up in a cloud computing environment, our models and algorithms are also applicable to posted pricing in other related online resource allocation problems.

\bibliographystyle{ACM-Reference-Format}
\bibliography{sigproc}

\appendix

\section{Proof of Claim \getrefnumber{clm:Omega_2_wc}}
\label{sec:clm:Omega_2_wc}
\begin{proof}
	The worst case of online solution is that the valuations of satisfied users are the same as the prices they accept. Thus by Assumption~\ref{asmp:smallDemand}, we have
	\begin{equation}\label{eq:V_ol-Omega_2}
	\begin{split}
	V_{ol}\left(\boldsymbol{\rho}\mbox{*}\right)= & \sum_{r\in\mathcal{R}}\int_{0}^{\rho_{r}\mbox{*}}P\left(\rho;\beta_{r}\right)d\rho\\
	= & \sum_{r\in\mathcal{R}_{1}}\rho_{r}\mbox{*}\underline{p}+\sum_{r\in\mathcal{R}_{2}}\int_{0}^{\rho_{r}\mbox{*}}P\left(\rho;\beta_{r}\right)d\rho,
	\end{split}
	\end{equation}
	as the minimum total value of the online solution. On the other hand, any unsatisfied user $i$ has an average unit value less than $\mathscr{P}_{i}\left(\boldsymbol{\rho}\mbox{*}\right)$, because otherwise $\boldsymbol{\rho\mbox{*}}$ cannot be the final resource utilization. We can decompose each user's value as $v_{i}=\sum_{r\in\mathcal{R}}d_{i,r}U_{i,r}\left(\boldsymbol{\rho}\right)$, and
	\[
	U_{i,r}\left(\boldsymbol{\rho}\right)=\frac{v_{i}}{d_{i}\mathscr{P}_{i}\left(\boldsymbol{\rho}\right)}P\left(\rho_{r};\beta_{r}\right),
	\]
	such that a user $i$'s average unite value $v_{i}/d_{i}<\mathscr{P}_{i}\left(\boldsymbol{\rho}\mbox{*}\right)$ if and only if $U_{i,r}\left(\boldsymbol{\rho}\mbox{*}\right)<P\left(\rho_{r};\beta_{r}\right)$, for any $r\in\mathcal{R}$. Here, $U_{i,r}\left(\boldsymbol{\rho}\mbox{*}\right)$ can be seen as user $i$'s unit value of resource $r$ given a certain $\boldsymbol{\rho}\mbox{*}$.
	
	For $r\in\mathcal{R}_{1}$, in the worst case, there can be a set of unsatisfied users with a total demand of $\min\left\{1,1+\beta_{r}\right\}$ for each type of resource, and with a unit value $U_{i,r}\left(\boldsymbol{\rho}\mbox{*}\right)=\underline{p}-\epsilon_{r}$. Note that $U_{i,r}\left(\boldsymbol{\rho}\mbox{*}\right)<\underline{p}$ does not contradict with Assumption.~\ref{asmp:varCstr}, since a small enough $\epsilon_{r}$ can ensure $v_{i}/d_{i}\geq\underline{p}$. For $r\in\mathcal{R}_{2}$, the discussion on Eq.~\eqref{eq:V_ol1}, \eqref{eq:V_ol2} for a single resource type is still valid if we consider $U_{i,r}\left(\boldsymbol{\rho}\mbox{*}\right)$ as unit value of resource; and according to Eq.~\eqref{eq:stableCR}, we have
	\[
	V_{opt}\left(\rho_{r}\mbox{*}\right)=\alpha_{r}V_{ol}\left(\rho_{r}\mbox{*}\right)-\epsilon_{r}=\alpha_{r}\int_{0}^{\rho_{r}\mbox{*}}P\left(\rho;\beta_{r}\right)d\rho)-\epsilon_{r}.
	\]
	This yields the maximum optimal offline total value given Eq.~\eqref{eq:V_ol-Omega_2}:
	\begin{equation}\label{eq:V_opt-Omega_2}
	\begin{split}
	V_{opt}\left(\boldsymbol{\rho}\mbox{*}\right)= & \sum_{r\in\mathcal{R}_{1}}\underline{p}\min\left\{ 1,1+\beta_{r}\right\} \\
	+ & \sum_{r\in\mathcal{R}_{2}}\alpha_{r}\int_{0}^{\rho_{r}\mbox{*}}P\left(\rho;\beta_{r}\right)d\rho-\epsilon.
	\end{split}
	\end{equation}
	
	For $r\in\mathcal{R}_{1}$, $\rho_{r}\mbox{*}$ only affects the first term of Eq.~\eqref{eq:V_ol-Omega_2}, while the first term of Eq.~\eqref{eq:V_opt-Omega_2} is a constant with respect to $\rho_{r}\mbox{*}$. Thus in any worst case, the first term of Eq.~\eqref{eq:V_ol-Omega_2} should be minimized, and hence $\rho_{r}\mbox{*}=0,\forall r\in\mathcal{R}_{1}$. For $r\in\mathcal{R}_{2}$, let $V_{ol}\left(\rho_{r}\mbox{*}\right)=\int_{0}^{\rho_{r}\mbox{*}}P\left(\rho;\beta_{r}\right)d\rho$, we have
	\begin{align*}
	\alpha\left(\boldsymbol{\rho}\mbox{*}\right)= & \frac{\sup_{\epsilon>0}V_{opt}\left(\boldsymbol{\rho}\mbox{*}\right)}{V_{ol}\left(\boldsymbol{\rho}\mbox{*}\right)}\geq\frac{\sum_{r\in\mathcal{R}_{2}}\alpha_{r}V_{ol}\left(\rho_{r}\mbox{*}\right)}{\sum_{r\in\mathcal{R}_{2}}V_{ol}\left(\rho_{r}\mbox{*}\right)}\\
	\geq & \frac{\alpha_{\underline{r}}\sum_{r\in\mathcal{R}_{2}}V_{ol}\left(\rho_{r}\mbox{*}\right)}{\sum_{r\in\mathcal{R}_{2}}V_{ol}\left(\rho_{r}\mbox{*}\right)}=\alpha_{\underline{r}},
	\end{align*}
	where $\underline{r}=\argmin_{r\in\mathcal{R}_{2}}\alpha_{r}$. When $\left|\mathcal{R}_{2}\right|\geq 2$, we can iteratively move $\underline{r}$ from $\mathcal{R}_{2}$ to $\mathcal{R}_{1}$, and set $\rho_{\underline{r}}\mbox{*}=0$ without decreasing $\alpha\left(\boldsymbol{\rho}\mbox{*}\right)$, until $\left|\mathcal{R}_{2}\right|=1$, since
	\[
	\begin{split}
	\frac{\sup_{\epsilon>0}\left(V_{opt}\left(\boldsymbol{\rho}\mbox{*}\right)-\alpha_{\underline{r}}V_{ol}\left(\rho_{r}\mbox{*}\right)-\epsilon+\underline{p}\min\left\{ 1,1+\beta_{\underline{r}}\right\} \right)}{V_{ol}\left(\boldsymbol{\rho}\mbox{*}\right)-V_{ol}\left(\rho_{r}\mbox{*}\right)}\\
	\geq\frac{\sup_{\epsilon>0}V_{opt}\left(\boldsymbol{\rho}\mbox{*}\right)}{V_{ol}\left(\boldsymbol{\rho}\mbox{*}\right)}.
	\end{split}
	\]
	Similarly, for the only $r\in\mathcal{R}_{2}$, we can decrease $\rho_{r}\mbox{*}$ to $1/\alpha_{r}+\epsilon$ without decreasing $\alpha\left(\boldsymbol{\rho}\mbox{*}\right)$. Therefore, for $\boldsymbol{\rho}\mbox{*}\in\Omega_{2}$, there exists a worst case that happens when $\rho_{r}\mbox{*}=0$ for $r\in\mathcal{R}_{1}$, and $\rho_{r}\mbox{*}=1/\alpha_{r}+\epsilon$ for $r\in\mathcal{R}_{2}$, where $\left|\mathcal{R}_{2}\right|=1$.
\end{proof}

\section{Proof of Claim \getrefnumber{clm:Omega_3_wc}}
\label{sec:clm:Omega_3_wc}
\begin{proof}
	The worst case of online solution is that the valuations of satisfied users are the same as the prices they accept. Thus by Assumption~\ref{asmp:smallDemand}, we have
	\begin{equation}\label{eq:V_ol-Omega_3}
	\begin{split}
	V_{ol}\left(\boldsymbol{\rho}\mbox{*}\right)= & \sum_{r\in\mathcal{R}}\int_{0}^{\rho_{r}\mbox{*}}P\left(\rho;\beta_{r}\right)d\rho\\
	= & \sum_{r\in\mathcal{R}_{3}}\int_{0}^{\rho_{r}\mbox{*}}P\left(\rho;\beta_{r}\right)d\rho+\sum_{r\in\mathcal{R}_{4}}\int_{0}^{1}P\left(\rho;\beta_{r}\right)d\rho,
	\end{split}
	\end{equation}
	as the minimum total value of the online solution. On the other hand, since there is at least one type of resource being fully occupied, i.e., $\left|\mathcal{R}_{4}\right|\geq 1$, there can be a case where all subsequent users demand a small amount of resource $r\in\mathcal{R}_{4}$, making it impossible to satisfy their demands regardless of their valuations. Hence the maximum optimal offline total value
	
	\vspace{-4mm}
	\begin{equation}\label{eq:V_opt-Omega_3}
	\begin{split}
	V_{opt}\left(\boldsymbol{\rho}\mbox{*}\right)= & \sum_{r\in\mathcal{R}_{3}}\int_{\rho_{r}^{1}}^{\rho_{r}^{2}}P\left(\rho;\beta_{r}\right)d\rho\\
	+ & \sum_{r\in\mathcal{R}_{3}}\overline{p}\min\left\{ 1,1+\beta_{r}-\rho_{r}\mbox{*}\right\} \\
	+ & \sum_{r\in\mathcal{R}_{4}}\alpha_{r}\int_{0}^{1}P\left(\rho;\beta_{r}\right)d\rho,
	\end{split}
	\end{equation}
	where $\rho_{r}^{1}=\max\left\{ 0,\beta_{r}\right\}$ and $\rho_{r}^{2}=\max\left\{ \beta_{r},\rho_{r}\mbox{*}\right\}$.
	
	For $r\in\mathcal{R}_{3}$, Eq.~\eqref{eq:V_ol-Omega_3} stays the same or increases as any $\rho_{r}\mbox{*}$ increases, while Eq.~\eqref{eq:V_opt-Omega_3} stays the same or decreases. Thus there exists a worst case where $\rho_{r}\mbox{*}=0,\forall r\in\mathcal{R}_{3}$. Let $\underline{r}=\argmin_{r\in\mathcal{R}_{4}}\alpha_{r}$. Due to the same reason as discussed for Eq.~\eqref{eq:V_ol-Omega_2} and Eq.~\eqref{eq:V_opt-Omega_2}, when $\left|\mathcal{R}_{4}\right|\geq 2$, we can iteratively move $\underline{r}$ from $\mathcal{R}_{4}$ to $\mathcal{R}_{3}$, and set $\rho_{\underline{r}}\mbox{*}=0$ without decreasing the competitive ratio, until $\left|\mathcal{R}_{4}\right|=1$.
\end{proof}

\end{document}